\newtheorem{proposition}{Proposition}
\newtheorem{lemma}{Lemma}
\newtheorem{theorem}{Theorem}
\newtheorem{corollary}{Corollary}
\theoremstyle{definition}
\newtheorem{example}{Example}
\newtheorem{excont}{Example}
\newtheorem{definition}{Definition}
\newcommand{\E}{\textit{e}}
\newcommand{\peterg}[1]{{\color{blue}  #1}}
\newcommand{\commentout}[1]{}
\DeclareMathOperator*{\argmin}{arg\,min}
\begin{document}

\title{Reverse Information Projections and Optimal \texttt{E}-statistics} 

\author{%
Tyron Lardy, Peter Gr\"unwald and Peter Harremo{\"e}s, \IEEEmembership{Member, IEEE
}%
\thanks{T. Lardy and P. Gr\"unwald are affiliated with CWI, Amsterdam, and Leiden University, The Netherlands. P. Harremo{\"e}s is at Copenhagen Business College, Copenhagen, Denmark. CWI is the national research institute for mathematics and computer science in the Netherlands.}%
\thanks{A five-page abstract of this paper, containing a subset of the theorems but no proofs, was presented at ISIT 2023, Taipei.}%
\thanks{This work has been submitted to the IEEE for possible publication. Copyright may be transferred without notice, after which this version may no longer be accessible.}
}

\maketitle

\begin{abstract}
    Information projections have found important applications in probability theory, statistics, and related areas.
    In the field of hypothesis testing in particular, the reverse information projection (RIPr) has 
    recently been shown to lead to growth-rate optimal (GRO) \E-statistics for testing simple alternatives against composite null hypotheses.
    However, the RIPr as well as the GRO criterion are undefined whenever the infimum information divergence between the null and alternative is infinite. 
    We show that in such scenarios, under some assumptions, there still exists a measure in the null that is closest to the alternative in a specific sense. 
    Whenever the information divergence is finite, this measure coincides with the usual RIPr.
    It therefore gives a natural extension of the RIPr to certain cases where the latter was previously not defined. 
    This extended notion of the RIPr is shown to lead to optimal \E-statistics in a sense that is a novel, but natural, extension of the GRO criterion.
    We also give conditions under which the (extension of the) RIPr is a strict sub-probability measure, as well as conditions under which an approximation of the RIPr leads to approximate \E-statistics.
    For this case we provide tight relations between the corresponding approximation rates. 
\end{abstract}
\begin{IEEEkeywords}
Reverse Information Projections, Description Gain, Hypothesis Testing, \texttt{E}-variables.
\end{IEEEkeywords}

\section{Introduction}
\IEEEPARstart{W}{e} write $D(\nu\|\lambda )$ for the information divergence (Kullback-Leibler divergence, \cite{Kullback1951,Csiszar1963,Liese1987}) 
between two finite measures $\nu$ and $\lambda$ given by
\[D(\nu\|\lambda)=\begin{cases} \displaystyle \int_\Omega \ln\left(\frac{\mathrm{d}\nu}{\mathrm{d}\lambda} \right)\, \mathrm{d}\nu - (\nu(\Omega)-\lambda(\Omega)),  & \text{if } \nu\ll \lambda ; \\ \infty,& \text{else.}\end{cases}\] 
For probability measures the interpretation of $D(\nu\|\lambda)$ is that it measures how much we gain by coding 
according to $\nu$ rather than coding according to $\lambda$ if data are distributed according to $\nu$. 
Many problems in probability theory and statistics, such as conditioning and maximum likelihood estimation, 
can be cast as minimization in either or both arguments of the information divergence.
In particular, this is the case within the recently established and now flourishing  theory of hypothesis testing based on \E-statistics that allows for optional continuation of experiments (see Section~\ref{subsec:Estat})\cite{grunwald2024,ramdas2022savi,vovk2021values,shafer2021testing,henzi2022valid}.
That is, a duality
has been established between optimal \E-statistics for testing a simple 
alternative $P$ against a composite null hypothesis $\mathcal{C}$ and reverse information projections~\cite{grunwald2024}.
Here, the reverse information projection (RIPr) of $P$ on $\mathcal{C}$ is --- if it exists --- a unique measure $\hat Q$ such 
that every sequence $(Q_n)_{n\in\mathbb{N}}$ in $\mathcal{C}$ with $D(P\|Q_n)\to\inf_{Q\in\mathcal{C}} D(P\|Q)$ converges to $\hat Q$ in a particular norm~\cite{li1999Estimation,csiszar2003}.
Li \cite{li1999Estimation} showed that whenever $\mathcal{C}$ is convex and $D(P\|\mathcal{C}):=\inf_{Q\in\mathcal{C}} D(P\|Q)<\infty$, the RIPr $\hat Q$ exists and the likelihood ratio between $P$ and $\hat Q$ is an \E-statistic (this result is restated as Theorem~\ref{thm:li} below). Gr\"unwald et al. \cite{grunwald2024} showed (restated as Theorem~\ref{thm:GHKdual} below) that it is even the optimal \E-statistic for testing $P$ against $\mathcal{C}$.
However, it is clear that the RIPr cannot be defined in this way if the information divergence between $P$ 
and $\mathcal{C}$ is infinite, i.e.\ $D(P\|\mathcal{C})=\infty$. 
This leaves a void in the theory of optimality of \E-statistics. 
In this work we remedy this by realizing that even if all measures in $\mathcal{C}$ are infinitely 
worse than $P$ at describing data distributed according to $P$ itself, there can still be a measure that performs best relative to the elements of $\mathcal{C}$.
To find such a measure, we consider the {\em description gain} \cite{Topsoe2007a} given by 
\begin{equation}\label{eq:descr_gain}
    D(P\|Q\rightsquigarrow Q')=\int_\Omega \ln\left(\frac{\mathrm{d}Q'}{\mathrm{d}Q} \right)\, \mathrm{d} P- (Q'(\Omega)-Q(\Omega))
\end{equation}
whenever this integral is well-defined. If the quantities involved are finite then the description gain reduces to
\begin{equation}\label{eq:divdif}
    D(P\|Q\rightsquigarrow Q')=D(P\|Q)-D(P\|Q').
\end{equation}
In analogy to the interpretation of information divergence for coding, the description gain measures how much we gain by 
coding according to $Q'$ rather than $Q$ if data are distributed according to $P$. 
Furthermore denote 
\[
D(P\|Q\rightsquigarrow \mathcal{C}):=\sup_{Q'\in \mathcal{C}} D(P\|Q\rightsquigarrow Q'),
\]
where undefined values are counted as $-\infty$ when taking the supremum.
If there exists at least one $Q^*\in \mathcal{C}$ such that $P\ll Q^*$, then $D(P\|Q\rightsquigarrow \mathcal{C})$ is a well-defined number in $[0,\infty]$ for any $Q\in \mathcal{C}$.
This quantity should be seen as the maximum description gain one can get by switching from $Q$ to any other measure in $\mathcal{C}$.
Intuitively, if there is a best descriptor in $\mathcal{C}$, nothing can be gained by switching away from it.
Indeed, in Proposition~\ref{prop:equvalence} we show that $\inf_{Q\in \mathcal{C}} D(P\|Q\rightsquigarrow \mathcal{C})$ is finite if and only if it is equal to zero.

\subsection{Contents and Overview}
Below, in Section~\ref{sec:bekgrond}, we start by giving an overview of existing results on both the reverse information projection and $\E$-statistics, which we define and briefly motivate, and the growth-rate optimality (GRO) criterion, a natural replacement of statistical power within the context of $\E$-value based hypothesis testing. 
%
Section~\ref{sec:universal_ripr}  states Theorem \ref{thm:InfoGain}, our first central result. It shows that 
%
--- under very mild conditions --- there exists a 
unique measure $\hat Q$ such that every sequence $(Q_n)_{n\in \mathbb{N}}$ in $\mathcal{C}$ with 
\[
D(P\|Q_n\rightsquigarrow \mathcal{C})\to 0 
\]
converges to $\hat Q$ 
in a specific metric which we define. Thus, Theorem \ref{thm:InfoGain}  may be viewed as 
a generalization of Li's result stated below as Theorem~\ref{thm:li}. 
We refer to $\hat Q$ as the RIPr, as it coincides with the original notion of the RIPr whenever the information divergence is finite. 
The remainder of Section~\ref{sec:universal_ripr} provides further discussion of this result, as well as an example showing that our extended notion of the RIPr can be well-defined whereas the RIPr was previously undefined.
In the specific case that all initial measures are probability measures, both Li's original result and ours leave open the possibility that $\hat{Q}$ may be a strict sub-probability measure, integrating to less than 1.
In Sub-Section~\ref{sec:strictlyspeaking} we give a further example showing that this can indeed be the case, and we provide, via  Theorem~\ref{thm:aftagende}, a condition under which $\hat{Q}$ is guaranteed to be a standard (integrating to $1$) probability measure.  
Sub-Section~\ref{sec:algo} then extends the greedy algorithm of \cite{li1999mixture} and \cite{brinda2018adaptive}
for approximating the RIPr in settings where $D(P\|\mathcal{C})<\infty$ to settings where the information divergence might be infinite. 

In Section~\ref{sec:ordering} we turn to \E-statistics. 
It contains our second central result, Theorem \ref{thm:LR_optim}, which shows that whenever our extended notion of the RIPr $\hat Q$ exists, the  likelihood ratio of $P$ and $\hat Q$ is an optimal \E-statistic 
according to the criterion of Definition~\ref{def:ordering}, which can be seen as a strict generalization of GRO, the standard optimality criterion for \E-statistics. 
As such, this result may be viewed as a generalization of a result of
Gr\"unwald et al. \cite{grunwald2024} stated below as Theorem~\ref{thm:GHKdual}. 
After illustrating the result by an example, Sub-Section~\ref{sec:approximation} provides another technical result, Theorem~\ref{thm:approx}, which relates approximations in terms of information gain, to approximations in terms of \E-statisticity: conditions are given under which a sequence $Q_1, Q_2, \ldots$ converging to $\hat{Q}$ in terms of information gain at a certain rate  also satisfies that the likelihood ratio between $P$ and $Q_1, Q_2, \ldots$ converges to an \E-statistic, and tight bounds on the corresponding rates are given. 
After a discussion of related work, the paper ends with a summary and ideas for future work in Section~\ref{sec:summary}. All proofs are delegated to Appendix~\ref{app:proofs}.
Appendix~\ref{app:supp} provides a general method for constructing RIPrs that are strict sub-probability measures. 
Finally, Appendix~\ref{ap:convexity} provides a discussion on the assumption of convexity that we will make throughout.

\section{Background}\label{sec:bekgrond}
\subsection{Preliminaries}
We work with a measurable space $(\Omega, \mathcal{F})$ and, unless specified otherwise, all measures will be defined on this space. 
Throughout, $P$ will denote a finite measure and $\mathcal{C}$ a set of finite measures, 
such that $P$ and all $Q\in\mathcal{C}$ have densities w.r.t. a common $\sigma$-finite measure $\mu$. 
These densities will be denoted with lowercase, i.e.\ $p$ and $q$ respectively. 
We will assume throughout that $\mathcal{C}$ is convex, i.e.\ closed under finite mixtures. In Section~\ref{sec:convexity} and in more detail in Appendix~\ref{ap:convexity} we discuss how our results would be affected if we were to adopt stronger notions of convexity like $\sigma$-convexity (closed under countable mixtures), or Choquet-convexity (closed under arbitrary mixtures). 
Furthermore, we assume that there exists at least one $Q^*\in \mathcal{C}$ such that $P\ll Q^*$.
This assumption is needed to ensure that $D(P\|Q\rightsquigarrow \mathcal{C})$ is a well-defined number in $[0,\infty]$ for any $Q\in \mathcal{C}$.

%

\subsection{The Reverse Information Projection}\label{sec:RIPr}
As mentioned briefly above, the reverse information projection is the result of minimizing the information divergence between $P$ and $\mathcal{C}$.
If $\mathcal{C}$ is an exponential family, this problem is well understood~\cite{csiszar2003}, 
but we focus here on the case that $\mathcal{C}$ is a general convex set.
In this setting, the following theorem establishes existence and uniqueness of a limiting object for 
any sequence $(Q_n)_{n\in\mathbb{N}}$ in $\mathcal{C}$ such that $D(P\|Q_n)\to D(P\|\mathcal{C})$ whenever the latter is finite.
This limit (i.e. $\hat Q$ in the following) is called the reverse information projection of $P$ on $\mathcal{C}$.

\begin{theorem}[Li~\cite{li1999Estimation}, Definition 4.2 and Theorem 4.3]\label{thm:li} 
If $P$ and all $Q\in\mathcal{C}$ are probability measures such that $D(P\| \mathcal{C})<\infty$, 
then there exists a unique (potentially sub-) probability measure $\hat Q$ such that: 
\begin{enumerate}
    \item\label{item:ripr_conv} We have that $\ln q_n \rightarrow \ln \hat q$ in $L_1(P)$ for all sequences $(Q_n)_{n\in\mathbb{N}}$ in $\mathcal{C}$ such that $\displaystyle\lim_{n\rightarrow \infty} D(P\|Q_n)=D(P\|\mathcal{C})$. 
    \item  $\int_\Omega \ln \frac{\mathrm{d}P}{\mathrm{d}\hat Q}\, \mathrm{d}P=D(P\| \mathcal{C})$,
    \item $\int_{\Omega}\frac{\mathrm{d}P}{\mathrm{d}\hat Q}\,\mathrm{d}Q\leq 1$ for all $Q\in\mathcal{C}$. 
    \end{enumerate}
\end{theorem}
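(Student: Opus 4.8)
The plan is to prove Theorem~\ref{thm:li} along the classical lines for reverse information projections, in three steps: (i) a parallelogram-type inequality for $D(P\|\cdot)$ that turns every $D(P\|\mathcal{C})$-minimizing sequence into a Cauchy sequence for the $L_1(P)$-norm of the log-densities; (ii) identifying the limit $\hat Q$ and establishing the convergence assertion, item~1; and (iii) deriving items~2 and~3 by passing to the limit and by a first-order perturbation argument, respectively.

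For step (i), I would fix $Q_0,Q_1\in\mathcal{C}$ with midpoint $\bar Q:=\tfrac12(Q_0+Q_1)\in\mathcal{C}$; assuming, as one may, $D(P\|Q_0),D(P\|Q_1)<\infty$, so that $q_0,q_1>0$ $P$-a.e., a direct computation with densities gives the identity
\[
\tfrac12 D(P\|Q_0)+\tfrac12 D(P\|Q_1)-D(P\|\bar Q)=\int_\Omega p\,\ln\cosh\!\Big(\tfrac12\ln\tfrac{q_0}{q_1}\Big)\,\mathrm{d}\mu\;\ge\;0,
\]
the inequality being AM--GM. For a minimizing sequence $(Q_n)_{n\in\mathbb{N}}$, applying this with $Q_0=Q_m$, $Q_1=Q_n$ and using $D(P\|\bar Q)\ge D(P\|\mathcal{C})$ yields $\int_\Omega p\,\ln\cosh(\tfrac12\ln(q_m/q_n))\,\mathrm{d}\mu\le\tfrac12 D(P\|Q_m)+\tfrac12 D(P\|Q_n)-D(P\|\mathcal{C})\to0$. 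Combining this with $\ln\cosh t\ge c\min(t^2,|t|)$ for a universal $c>0$ — splitting the integral over $\{|t|\le1\}$, where I would apply Cauchy--Schwarz together with $P(\Omega)=1$, and over $\{|t|>1\}$ — gives $\|\ln q_m-\ln q_n\|_{L_1(P)}\to0$. Hence $(\ln q_n)$ converges in $L_1(P)$ to some $g$; I set $\hat q:=e^{g}$ on $\{p>0\}$, so $P\ll\hat Q$, and a Fatou argument along a $P$-a.e.-convergent subsequence lets me take $\hat Q$ to be a sub-probability measure, the values of $\hat q$ on $\{p=0\}$ being immaterial for all three assertions. Item~1 then follows by interleaving any two minimizing sequences (again a minimizing sequence) and invoking the Cauchy estimate; this also gives uniqueness.

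For step (ii), I would note that $D(P\|Q_n)<\infty$ eventually and that $\ln\tfrac{p}{q_n}\in L_1(P)$, its positive part being bounded by $D(P\|Q_n)$ and its negative part by $\int_{\{q_n>p\}}(q_n-p)\,\mathrm{d}\mu\le1$ via $\ln x\le x-1$; the same estimates with Fatou give $\ln\tfrac p{\hat q}\in L_1(P)$ and $D(P\|\hat Q)=\int p\ln\tfrac p{\hat q}\,\mathrm{d}\mu<\infty$. Then $\|\ln\tfrac p{q_n}-\ln\tfrac p{\hat q}\|_{L_1(P)}=\|\ln q_n-\ln\hat q\|_{L_1(P)}\to0$ forces $D(P\|Q_n)\to D(P\|\hat Q)$, i.e.\ $D(P\|\hat Q)=D(P\|\mathcal{C})$. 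For step (iii), I would fix $Q\in\mathcal{C}$, put $R_n:=(1-\alpha_n)Q_n+\alpha_n Q\in\mathcal{C}$ with $\alpha_n>0$, $\alpha_n\to0$, chosen so that also $\bigl(D(P\|Q_n)-D(P\|\mathcal{C})\bigr)/\alpha_n\to0$, and use $\ln(1-\alpha_n+\alpha_n q/q_n)\ge\ln(1-\alpha_n)$ to obtain $D(P\|\mathcal{C})\le D(P\|R_n)\le D(P\|Q_n)-\ln(1-\alpha_n)\to D(P\|\mathcal{C})$; thus $(R_n)$ is again minimizing, so $\ln r_n\to\ln\hat q$ in $L_1(P)$ by item~1. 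Passing to a subsequence along which $q_n\to\hat q$ pointwise $P$-a.e., the integrands $\tfrac1{\alpha_n}\ln(1-\alpha_n+\alpha_n q/q_n)$ are bounded below by $-2$ (for $\alpha_n$ small) and converge $P$-a.e.\ to $q/\hat q-1$, so Fatou gives $\int p\,(q/\hat q-1)\,\mathrm{d}\mu\le\liminf_n\tfrac1{\alpha_n}\bigl(D(P\|Q_n)-D(P\|R_n)\bigr)\le0$, that is $\int\tfrac{\mathrm{d}P}{\mathrm{d}\hat Q}\,\mathrm{d}Q=\int\tfrac{pq}{\hat q}\,\mathrm{d}\mu\le1$.

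I expect the main obstacle to be step (i): first extracting the surplus term $\int p\,\ln\cosh(\tfrac12\ln(q_0/q_1))\,\mathrm{d}\mu$ from the divergence, and then turning it into genuine $L_1(P)$-control of $\ln q_m-\ln q_n$ — the small-$t$ and large-$t$ regimes have to be handled separately, and it is essential that $P$ is a probability measure so that the Cauchy--Schwarz step closes. The limit-exchange bookkeeping in steps (ii)--(iii) is routine provided one checks that every log-density integral in sight lies in $L_1(P)$; the one genuinely useful device there is to perturb the minimizing sequence toward $Q$, which recycles item~1 for item~3 and circumvents the fact that $\hat Q$ itself need not belong to $\mathcal{C}$.
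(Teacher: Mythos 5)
Your proof is correct and follows essentially the same route as the paper's (which obtains Theorem~\ref{thm:li} as the finite-divergence special case of Theorem~\ref{thm:InfoGain}): your parallelogram surplus $\int_\Omega p\ln\cosh\bigl(\tfrac12\ln(q_0/q_1)\bigr)\,\mathrm{d}\mu$ is exactly the paper's squared metric $m_P^2(q_0,q_1)$, your midpoint inequality is Lemma~\ref{lem:3pointNy}, and your perturbation of the minimizing sequence toward $Q$ is the same device used for item~3 of Theorem~\ref{thm:InfoGain}. The only substantive difference is cosmetic: you convert the surplus into $L_1(P)$-control of $\ln(q_m/q_n)$ via $\ln\cosh t\ge c\min(t^2,|t|)$ plus Cauchy--Schwarz, whereas the paper uses the exact concave transform of Lemma~\ref{lem:gamma_transform} together with Jensen's inequality.
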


\subsection{E-statistics and Growth Rate Optimality}\label{subsec:Estat}
The \E-value has recently emerged as a popular alternative to the {\it p}-value for hypothesis testing~\cite{ramdas2022savi,grunwald2022anytime,henzi2022valid}.
Unlike the {\it p}-value, it is eminently suited for testing under  optional continuation --- and more generally, when the rule for stopping or continuing to analyze an additional batch of data is not under control of the data analyst, and may even be unknown or unknowable.
It can be thought of as a measure of statistical evidence that is intimately linked with numerous ideas, such as likelihood ratios, test martingales~\cite{ville1939} and tests of randomness~\cite{levin1976uniform}.
Formally, an \E-value is defined as the value taken by an \E-statistic, which is defined as a random variable $E:\Omega \rightarrow [0,\infty]$ that satisfies $\int_\Omega E \,\mathrm{d}Q \leq 1$ for all $Q\in \mathcal{C}$~\cite{vovk2021values}.
The set of all \E-statistics is denoted as $\mathcal{E}_\mathcal{C}$.
Large \E-values constitute evidence against $\mathcal{C}$ as null hypothesis, so that the null can be rejected when the computed \E-value exceeds a certain threshold.
For example, the test that rejects the null hypothesis when $E\geq \nicefrac{1}{\alpha}$ has a type-I error guarantee of $\alpha$ by a simple application of Markov’s inequality:
$Q(E\geq \nicefrac{1}{\alpha})\leq \alpha \int_\Omega E \,\mathrm{d}Q\leq \alpha$.
For all further details, as well as an extensive introduction to the concept, and how it relates to optional stopping and continuation, we refer to \cite{grunwald2024} and the overview paper \cite{ramdas2022savi}. 

In general, the set $\mathcal{E}_\mathcal{C}$ of \E-statistics is quite large, and the above does not tell us \emph{which} \E-statistic to pick.
This question was studied in~\cite{grunwald2024} and a log-optimality criterion coined GRO ({\em Growth-Rate Optimality\/})  was introduced for the case that the interest is in gaining as much evidence as possible relative to an alternative hypothesis given by a single probability measure $P$. 
GRO is a natural replacement of statistical power, which cannot meaningfully be used in an optional stopping/continuation context. 
This criterion can be traced back to the information-theoretic  Kelly betting criterion in~\cite{kelly1956ANI} and is further discussed at length by~\cite{shafer2021testing,ramdas2022savi,grunwald2024}, to which we refer for more discussion. 

\begin{definition}\label{def:GHKgro}
If it exists, an \E-statistic $\hat{E}\in \mathcal{E}_\mathcal{C}$ is Growth-Rate Optimal (GRO) if it achieves
\[\int_\Omega \ln \hat{E} \, \mathrm{d}P=\sup_{E\in \mathcal{E}_\mathcal{C}} \int_\Omega \ln E \, \mathrm{d}P.\] 
\end{definition}

The following theorem establishes a duality between GRO \E-statistics and reverse information projections.
For a limited set of testing problems, it states that GRO \E-statistics exist and are uniquely given by likelihood ratios.

\begin{theorem}[Gr\"unwald et al.~\cite{grunwald2024}, Theorem 1]\label{thm:GHKdual}
If $P$ and all $Q\in\mathcal{C}$ are probability measures such that $D(P\| \mathcal{C})<\infty$, $p(\omega)>0$ for all $\omega\in \Omega$, and
$\hat Q$ is the RIPr of $P$ on $\mathcal{C}$, then $\hat E=\frac{\mathrm{d}P}{\mathrm{d} \hat Q}$ is GRO with rate equal to $D(P\|\mathcal{C})$, i.e.
\[\sup_{E\in \mathcal{E}_\mathcal{C}} \int_\Omega \ln E \, \mathrm{d}P= \int_\Omega \ln \hat E \, \mathrm{d}P= D(P\|\mathcal{C}).\] 
Furthermore, for any GRO \E-statistic $\tilde E$, we have that $\tilde E=  \hat E$ holds $P$-almost surely.
\end{theorem}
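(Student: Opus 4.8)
The plan is to sandwich $\sup_{E\in\mathcal{E}_\mathcal{C}}\int_\Omega\ln E\,\mathrm{d}P$ between $D(P\|\mathcal{C})$ from below --- with the lower bound witnessed by $\hat E=\mathrm{d}P/\mathrm{d}\hat Q$ --- and $D(P\|\mathcal{C})$ from above, and then to derive the uniqueness claim from strict concavity of the logarithm. For the first part, note that since $D(P\|\mathcal{C})<\infty$ we may invoke Theorem~\ref{thm:li} to obtain the (possibly sub-probability) RIPr $\hat Q$. Because $p(\omega)>0$ for all $\omega$, the density ratio $\hat E=p/\hat q$ is unambiguously defined as a $[0,\infty]$-valued function, with the convention $p/0:=\infty$. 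Item~3 of Theorem~\ref{thm:li} says precisely that $\int_\Omega\hat E\,\mathrm{d}Q\le 1$ for all $Q\in\mathcal{C}$, so $\hat E\in\mathcal{E}_\mathcal{C}$ (and, incidentally, $Q(\hat q=0)=0$ for every $Q\in\mathcal{C}$); item~2 gives $\int_\Omega\ln\hat E\,\mathrm{d}P=D(P\|\mathcal{C})$, which in particular is finite, so $\hat E\in(0,\infty)$ holds $P$-almost surely.

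For the upper bound, fix an arbitrary $E\in\mathcal{E}_\mathcal{C}$ and any $Q\in\mathcal{C}$ with $D(P\|Q)<\infty$, so that $P\ll Q$. Then $Eq$ is a $\mu$-sub-density, since $\int_\Omega Eq\,\mathrm{d}\mu=\int_\Omega E\,\mathrm{d}Q\le 1$. Applying $\ln t\le t-1$ with $t=Eq/p$ and integrating against $P$ yields $\int_\Omega p\ln\tfrac{Eq}{p}\,\mathrm{d}\mu\le\int_\Omega(Eq-p)\,\mathrm{d}\mu\le 0$; since $p\ln\tfrac{p}{q}$ is $P$-integrable with integral $D(P\|Q)$, adding the two and keeping track of the (possibly $-\infty$) values involved gives $\int_\Omega\ln E\,\mathrm{d}P\le D(P\|Q)$. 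Taking a sequence $(Q_n)$ in $\mathcal{C}$ with $D(P\|Q_n)\to D(P\|\mathcal{C})$ (which exists by definition of the infimum, with $D(P\|Q_n)<\infty$ for all large $n$) gives $\int_\Omega\ln E\,\mathrm{d}P\le D(P\|\mathcal{C})$, hence $\sup_{E\in\mathcal{E}_\mathcal{C}}\int_\Omega\ln E\,\mathrm{d}P\le D(P\|\mathcal{C})$. Combined with the previous paragraph this proves both displayed equalities and shows $\hat E$ is GRO with rate $D(P\|\mathcal{C})$.

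For uniqueness, let $\tilde E$ be any GRO \E-statistic, so $\int_\Omega\ln\tilde E\,\mathrm{d}P=D(P\|\mathcal{C})<\infty$ and hence $\tilde E\in(0,\infty)$ $P$-a.s. The average $\bar E:=\tfrac12(\hat E+\tilde E)$ is again in $\mathcal{E}_\mathcal{C}$ by linearity of the integral, so $\int_\Omega\ln\bar E\,\mathrm{d}P\le D(P\|\mathcal{C})$ by the bound just established. Concavity of $\ln$ gives $\ln\bar E\ge\tfrac12\ln\hat E+\tfrac12\ln\tilde E$ $P$-a.s., whose right-hand side is $P$-integrable with integral $D(P\|\mathcal{C})$; therefore $\int_\Omega\big(\ln\bar E-\tfrac12\ln\hat E-\tfrac12\ln\tilde E\big)\,\mathrm{d}P=0$. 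The integrand is nonnegative, so it vanishes $P$-a.s., and strict concavity of $\ln$ then forces $\hat E=\tilde E$ $P$-almost surely.

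I expect the only real difficulty to be measure-theoretic rather than conceptual: making the inequality $\int p\ln\tfrac{Eq}{p}\,\mathrm{d}\mu\le\int(Eq-p)\,\mathrm{d}\mu$ and its rearrangement into $\int\ln E\,\mathrm{d}P\le D(P\|Q)$ fully rigorous when some integrals equal $-\infty$, and verifying that $\hat E$ is a genuine $[0,\infty]$-valued \E-statistic with $\int\ln\hat E\,\mathrm{d}P$ finite --- the latter being exactly what lets strict concavity be applied on a set of full $P$-measure in the uniqueness step. This is where the hypotheses $D(P\|\mathcal{C})<\infty$ and $p>0$ everywhere do their work.
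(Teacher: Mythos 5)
This theorem is stated in the paper as a quotation of Gr\"unwald et al.~\cite{Grunwald2023}, Theorem~1, and the paper itself supplies no proof of it, so there is no in-paper argument to compare against line by line. Your proof is correct, and it is essentially the standard argument from the cited source: the lower bound comes straight from items~2 and~3 of Theorem~\ref{thm:li}; the upper bound is the ``no hypercompression'' step, which in your notation says that for the sub-probability measure $R$ with $\mathrm{d}R = E\,\mathrm{d}Q$ one has $\int_\Omega \ln E\,\mathrm{d}P = D(P\|Q) - D(P\|R) - (1 - R(\Omega)) \leq D(P\|Q)$, exactly your $\ln t \le t-1$ computation; and uniqueness follows from strict concavity of $\ln$ applied to the midpoint \E-statistic. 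It is worth noting that the paper's own proof of the generalization, Theorem~\ref{thm:LR_optim}, uses precisely the same device (the measure $R_n(F)=\int_F E\,\mathrm{d}Q_n$ and the nonnegativity of $D(P\|R_n) + (P(\Omega)-R_n(\Omega))$), but organizes it as a direct comparison of $\hat E$ against $E$ along a minimizing sequence rather than as a two-sided sandwich of $\sup_E \int \ln E\,\mathrm{d}P$; your sandwich structure is cleaner in the finite-divergence setting but relies on $\int\ln\hat E\,\mathrm{d}P$ being finite, which is exactly what fails in the regime the paper's Theorem~\ref{thm:LR_optim} is designed for. Your handling of the measure-theoretic edge cases ($P(E=\infty)=0$ from $P\ll Q^*$, the negative part of $p\ln(Eq/p)$ possibly non-integrable but harmless for an upper bound, and integrability of $\ln\hat E$ and $\ln\tilde E$ licensing the strict-concavity step on a set of full $P$-measure) is sound.
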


\section{The Reverse Information Projection}\label{sec:universal_ripr} 
In this section, we state a result analogous to Theorem~\ref{thm:li} in a more general setting.
Rather than convergence of the logarithm of densities in $L_1(P)$, we consider convergence with respect to a 
different 
metric on the set of measurable positive functions, i.e.\ $\mathcal{M}\left(\Omega,\mathbb{R}_{>0}\right)=\{f:\Omega \to \mathbb{R}_{>0}: f \text{ measurable}\}$. 
For $f,f' \in \mathcal{M}\left(\Omega,\mathbb{R}_{>0}\right)$ we define
\begin{equation}\label{eq:jens_div}
    m_P^2(f, f'):=\frac12 \int_\Omega\ln\left(\frac{\overline f}{f}\right)+\ln\left(\frac{\overline f}{f'}\right)\, \mathrm{d}P,
\end{equation}
where $\overline{f}:=\nicefrac{(f+f')}{2}$.
This is a divergence that can be thought of as the averaged Bregman divergence associated with the convex function $\gamma(x)=x-1-\ln(x)$.
In particular, this means that for $Q,Q'\in \mathcal{C}$ such that $P\ll Q$ and $P\ll Q'$, we have that 
\begin{equation}\label{eq:SymGain}
    m_P^2(q,q')=\frac12 D(P\|Q\rightsquigarrow \bar{Q})+ \frac12 D(P\|Q'\rightsquigarrow \bar{Q}).
\end{equation}
In~\cite{chen2008metrics}, averaged Bregman divergences are studied in detail for general $\gamma$, and they show that the function 
\[m^2_\gamma(x,y)=\frac12 \gamma(x) + \frac12 \gamma(y) - \gamma\left(\frac{x+y}{2}\right)\] 
is the square of a metric if and only if $\ln\left(\gamma''(x)\right)''\geq 0$. 
In our case, $\ln(\gamma''(x))''=2x^{-2}$, so this result holds.
This can be used together with an application of the  Minkowski inequality to show that the triangle inequality holds for the square root of the divergence \eqref{eq:jens_div}, i.e.\ $m_P$, on $\mathcal{M}\left(\Omega,\mathbb{R}_{>0}\right)$. 
It should also be clear that for $f,g\in \mathcal{M}\left(\Omega,\mathbb{R}_{>0}\right)$ if $f=g$ everywhere, then $m_P(f,g)=0$.
Conversely $m_P(f,g)=0$ only implies that $P(f\neq g)=0$. 
This prevents us from calling $m_P$ a metric on $\mathcal{M}\left(\Omega,\mathbb{R}_{>0}\right)$, and we therefore define, analogous to $\mathcal{L}^p$ and $L^p$ spaces, $M\left(\Omega, \mathbb{R}_{>0}\right)$ as the set of equivalence classes of $\mathcal{M}\left(\Omega,\mathbb{R}_{>0}\right)$ under the relation `$\sim$' given by $f\sim g \Leftrightarrow P(f\neq g)=0$.
By the discussion above, $m_P$ properly defines a metric on $M\left(\Omega, \mathbb{R}_{>0}\right)$.
In the following we will often ignore this technicality and simply act as if $m_P$ defines a metric on $\mathcal{M}\left(\Omega, \mathbb{R}_{>0}\right)$, since we are not interested in what happens on null sets of $P$.

Considering convergence with respect to $m_P$ will be useful for our analyses in the following.
In particular, we will exploit on numerous occasions that $m_P$ can be interpreted as a symmetrized version of the description gain, as described in Equation (\ref{eq:SymGain}).
However, other than mathematical convenience, there is no fundamental difference between considering convergence with respect to $m_P$ and convergence of the logarithms in $L_1(P)$, as considered in Theorem~\ref{thm:li}.
Indeed, Lemma~\ref{lem:aux2} in Appendix~\ref{app:proofs} shows that the two types of convergence are equivalent.
It is also this result from which the following proposition follows.

\begin{proposition}\label{prop:complete}
The metric space $(M\left(\Omega,\mathbb{R}_{>0}\right),m_P)$ is complete.
\end{proposition}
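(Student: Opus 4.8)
The plan is to transport the problem to measurable real-valued functions via the logarithm and then run the Riesz--Fischer proof of completeness of $L^p$. Since $f\mapsto\ln f$ is a bijective isometry between $M(\Omega,\mathbb{R}_{>0})$ and the space of ($P$-a.e.\ equivalence classes of) measurable functions $\Omega\to\mathbb{R}$, and since a one-line computation rewrites \eqref{eq:jens_div} as
\[
m_P^2(f,f')=\int_\Omega \ln\cosh\!\left(\frac{\ln f-\ln f'}{2}\right)\mathrm{d}P,
\]
it is equivalent to show the following: writing $g_n:=\ln f_n$ and $\phi(t):=\ln\cosh(t/2)$, every Cauchy sequence $(g_n)$ for the metric $\rho$ with $\rho(g,g')^2:=\int_\Omega\phi(g-g')\,\mathrm{d}P$ converges in $\rho$ to some real-valued measurable $g$; the limit of $(f_n)$ in $M(\Omega,\mathbb{R}_{>0})$ is then the class of $f:=e^{g}$. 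As $\rho$ (equivalently $m_P$) is already known to satisfy the triangle inequality, it suffices to exhibit a convergent subsequence, so I pass to a subsequence, relabeled so that $\rho(g_n,g_{n+1})\le 2^{-n}$, and write $d_n:=g_{n+1}-g_n$.

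The core step is to show that $(g_n)$ converges $P$-almost everywhere to a \emph{finite} limit. Because $\phi\ge 0$ we have $\|\sqrt{\phi(d_n)}\|_{L^2(P)}=\rho(g_n,g_{n+1})$, so $\sum_n\|\sqrt{\phi(d_n)}\|_{L^2(P)}<\infty$; Minkowski's inequality applied to partial sums together with monotone convergence gives $\sum_n\sqrt{\phi(d_n)}\in L^2(P)$, hence $\sum_n\sqrt{\phi(d_n)}(\omega)<\infty$ for $P$-a.e.\ $\omega$. Now I use two elementary estimates on $\phi$, valid with explicit positive constants (they follow from $\phi$ being smooth, even and convex with $\phi''(0)=1/4$, and from $\cosh(t/2)\ge e^{|t|/2}/2$): $\phi(t)\ge a\,t^2$ for $|t|\le 1$ and $\phi(t)\ge b\,|t|$ for $|t|\ge 1$. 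At a point $\omega$ where $\sum_n\sqrt{\phi(d_n(\omega))}<\infty$, the second bound shows that only finitely many $n$ can have $|d_n(\omega)|>1$, and over the remaining $n$ the first bound gives $\sum_n|d_n(\omega)|\le a^{-1/2}\sum_n\sqrt{\phi(d_n(\omega))}<\infty$; hence $\sum_n|d_n(\omega)|<\infty$. Therefore $g_n(\omega)=g_1(\omega)+\sum_{k<n}d_k(\omega)$ converges to a real number $g(\omega)$ for $P$-a.e.\ $\omega$; setting $g:=0$ on the exceptional null set yields a real-valued measurable $g$, so $f:=e^{g}\in\mathcal{M}(\Omega,\mathbb{R}_{>0})$.

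It remains to verify $\rho(g_n,g)\to 0$. Given $\varepsilon>0$, Cauchyness provides $N_0$ with $\rho(g_n,g_m)\le\varepsilon$ whenever $n,m\ge N_0$. For a subsequence index $n_j\ge N_0$, letting $n_k\to\infty$ along the almost-everywhere convergent subsequence and using continuity of $\phi$, Fatou's lemma gives
\[
\rho(g_{n_j},g)^2=\int_\Omega\phi(g_{n_j}-g)\,\mathrm{d}P\le\liminf_{k}\int_\Omega\phi(g_{n_j}-g_{n_k})\,\mathrm{d}P\le\varepsilon^2,
\]
so $\rho(g_{n_j},g)\to 0$ along the subsequence, and then $\rho(g_n,g)\to 0$ for the full sequence by the triangle inequality. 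Translating back through the logarithm, $m_P(f_n,f)\to 0$, which establishes completeness.

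I expect the main obstacle to be precisely the almost-everywhere step of the second paragraph: one has to rule out the limiting "density" $f$ being $0$ or $\infty$ on a set of positive $P$-measure, and the bound $\sum_n\rho(g_n,g_{n+1})^2<\infty$ coming from $L^2$-type control is \emph{not} sufficient for this (square-summable increments need not be absolutely summable pointwise). What saves the argument is the stronger summability $\sum_n\rho(g_n,g_{n+1})<\infty$ together with the fact that $\phi=\ln\cosh$ grows \emph{linearly} at infinity, which is exactly what lets one upgrade square-summability of the increments to absolute summability $P$-almost everywhere. Everything else is routine measure theory, and the triangle inequality for $m_P$ is supplied by the discussion preceding the proposition.
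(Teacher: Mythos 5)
Your proof is correct, and it takes a somewhat different route from the paper's. The paper disposes of this proposition in one line by citing Lemma~\ref{lem:aux2}: via the concave function $g$ of Lemma~\ref{lem:gamma_transform} (with $g(t)\ge 2t$) and Jensen's inequality, an $m_P$-Cauchy sequence has logarithms that are Cauchy in $L_1(P)$, and the reverse inequality $m_\gamma^2(x,y)\le\tfrac12\left|\ln x-\ln y\right|$ lets one import completeness of $L_1(P)$ wholesale. You instead make the log-transport explicit through the identity $m_P^2(f,f')=\int_\Omega\ln\cosh\bigl((\ln f-\ln f')/2\bigr)\,\mathrm{d}P$ (which is correct) and then rerun the Riesz--Fischer argument from scratch: a rapidly Cauchy subsequence, $L^2$-summability of $\sqrt{\phi(d_n)}$, the two-regime lower bounds $\phi(t)\ge a t^2$ on $\left|t\right|\le 1$ and $\phi(t)\ge b\left|t\right|$ on $\left|t\right|\ge 1$ (the latter holding with $b=\ln\cosh(1/2)$ since $\phi$ is convex with $\phi(0)=0$, so $\phi(t)/t$ is nondecreasing) to upgrade to pointwise absolute summability of the increments, and finally Fatou plus the triangle inequality. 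What your version buys is self-containedness and an explicit identification of why the a.e.\ limit cannot be $0$ or $\infty$ on a set of positive measure --- the linear growth of $\ln\cosh$ at infinity --- whereas the paper's version buys brevity by funneling everything through the already-proved Lemma~\ref{lem:aux2} and the classical completeness of $L_1(P)$. Both arguments are sound; your closing remark about square-summability being insufficient is a fair observation but not load-bearing, since one may always pass to a subsequence with summable increments.
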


Everything is now in place to state the main result.

\begin{theorem}\label{thm:InfoGain}
If $\inf_{Q\in\mathcal{C}} D(P\|Q\rightsquigarrow \mathcal{C})<\infty$,
then there exists a measure $\hat Q$ that satisfies the following for every sequence  $(Q_n)_{n\in \mathbb{N}}$ in $\mathcal{C}$ s.t. $D(P\|Q_n\rightsquigarrow \mathcal{C})\to \inf_{Q\in\mathcal{C}} D(P\|Q\rightsquigarrow \mathcal{C})$ as $n\to\infty$:
\begin{enumerate}
    \item\label{item:conv_mpny} $q_n\rightarrow \hat q$ in $m_P$.
    \item\label{item:conv_lrny} If  $P'$ is a measure such that $\left|\inf_{Q\in\mathcal{C}}D(P\|Q\rightsquigarrow P')\right|<\infty$, then 
    $$
    \int_\Omega \ln \frac{\mathrm{d}P'}{\mathrm{d}\hat Q} \, \mathrm{d}P = \lim_{n\to \infty} \int_\Omega \ln \frac{\mathrm{d}P'}{\mathrm{d}Q_n} \, \mathrm{d}P.
    $$
    \item\label{item:conv_evalny} For any $Q\in \mathcal{C}$, 
    $$\int_\Omega \frac{\mathrm{d}P}{\mathrm{d}\hat Q} \, \mathrm{d}Q \leq P(\Omega)+Q(\Omega)-\liminf_{n\to \infty}Q_n(\Omega).$$ 
\end{enumerate}
\end{theorem}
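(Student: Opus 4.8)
The plan is to carry the classical reverse‑information‑projection argument (Li~\cite{li1999Estimation}, and the exponential‑family analysis of~\cite{csiszar2003}) over to the complete metric space $(M(\Omega,\mathbb{R}_{>0}),m_P)$ of Proposition~\ref{prop:complete}, with the description gain taking over the structural role of Kullback--Leibler divergence. Fix a sequence $(Q_n)$ in $\mathcal{C}$ with $D(P\|Q_n\rightsquigarrow\mathcal{C})\to\inf_{Q\in\mathcal{C}}D(P\|Q\rightsquigarrow\mathcal{C})$, which by Proposition~\ref{prop:equvalence} equals $0$. Discarding finitely many terms we may assume every $D(P\|Q_n\rightsquigarrow\mathcal{C})$ is finite; this forces $q_n>0$ $P$-a.e.\ (else $D(P\|Q_n\rightsquigarrow\mathcal{C})=\infty$, using the witness $\tfrac12Q_n+\tfrac12Q^*\in\mathcal{C}$), so each $q_n\in M(\Omega,\mathbb{R}_{>0})$, and testing against the perturbation $(1-\lambda)Q_n+\lambda Q^*\in\mathcal{C}$, using $\ln\bigl((1-\lambda)+\lambda q^*/q_n\bigr)\ge\ln(1-\lambda)$ and rearranging $D(P\|Q_n\rightsquigarrow(1-\lambda)Q_n+\lambda Q^*)\le D(P\|Q_n\rightsquigarrow\mathcal{C})$ at $\lambda=\tfrac12$, also forces $\sup_n Q_n(\Omega)<\infty$.

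For part~\ref{item:conv_mpny} I would establish a parallelogram identity. For indices $n,m$ put $\overline Q=\tfrac12(Q_n+Q_m)\in\mathcal{C}$ with $\mu$-density $\overline q$; since $\overline q\ge q_n/2$ and $\overline q\ge q_m/2$ the integrands $\ln(\overline q/q_n)$ and $\ln(\overline q/q_m)$ exceed $-\ln 2$, so the relevant integrals are well defined in $(-\infty,\infty]$, and cancelling the mass terms via $2\overline Q(\Omega)=Q_n(\Omega)+Q_m(\Omega)$ one obtains
\[
2\,m_P^2(q_n,q_m)=D(P\|Q_n\rightsquigarrow\overline Q)+D(P\|Q_m\rightsquigarrow\overline Q)\le D(P\|Q_n\rightsquigarrow\mathcal{C})+D(P\|Q_m\rightsquigarrow\mathcal{C}),
\]
the last step because $\overline Q\in\mathcal{C}$; the right‑hand side tends to $0$, so $(q_n)$ is $m_P$-Cauchy and, by Proposition~\ref{prop:complete}, converges to some $\hat q\in M(\Omega,\mathbb{R}_{>0})$. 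Taking for $\hat q$ a pointwise $\liminf$ along a $P$-a.e.\ convergent subsequence, Fatou's lemma together with $\sup_nQ_n(\Omega)<\infty$ shows $\int\hat q\,\mathrm{d}\mu<\infty$, so $\hat q$ is the density of a finite measure $\hat Q$. Interleaving any two minimizing sequences again produces a minimizing sequence, hence the $m_P$-limit --- equivalently $\hat Q$ restricted to the support of $P$, which is all that parts~\ref{item:conv_mpny}--\ref{item:conv_evalny} see --- is independent of the chosen sequence.

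For part~\ref{item:conv_lrny} I would first record the chain rule $D(P\|Q\rightsquigarrow\mathcal{C})=D(P\|Q\rightsquigarrow P')+D(P\|P'\rightsquigarrow\mathcal{C})$, which follows from $D(P\|Q\rightsquigarrow Q')=D(P\|Q\rightsquigarrow P')+D(P\|P'\rightsquigarrow Q')$; the hypothesis $\bigl|\inf_QD(P\|Q\rightsquigarrow P')\bigr|<\infty$ makes $D(P\|P'\rightsquigarrow\mathcal{C})$ finite, hence $D(P\|Q_n\rightsquigarrow P')$ and so $\int\ln(\mathrm{d}P'/\mathrm{d}Q_n)\,\mathrm{d}P$ finite for large $n$. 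Since $\ln(\mathrm{d}P'/\mathrm{d}Q_n)-\ln(\mathrm{d}P'/\mathrm{d}\hat Q)=\ln(\hat q/q_n)$ $P$-a.e., the claim reduces to $\int\ln(\hat q/q_n)\,\mathrm{d}P\to 0$; writing $u=2\hat q/(q_n+\hat q)\in(0,2)$ one has the exact identities $2\,m_P^2(q_n,\hat q)=\int-\ln\bigl(u(2-u)\bigr)\,\mathrm{d}P$ and $\int\ln(\hat q/q_n)\,\mathrm{d}P=\int\bigl(\ln u-\ln(2-u)\bigr)\,\mathrm{d}P$, and an elementary estimate $|\ln u-\ln(2-u)|\le C\bigl(\sqrt{-\ln(u(2-u))}+\bigl(-\ln(u(2-u))\bigr)\bigr)$ on $(0,2)$, combined with Cauchy--Schwarz, bounds the latter by $C'\bigl(m_P(q_n,\hat q)+m_P^2(q_n,\hat q)\bigr)\to0$. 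For part~\ref{item:conv_evalny}, fix $Q\in\mathcal{C}$ and pass to a subsequence along which $Q_n(\Omega)\to\liminf_nQ_n(\Omega)$ and $q_n\to\hat q$ $P$-a.e.; from $D\bigl(P\|Q_n\rightsquigarrow(1-\lambda)Q_n+\lambda Q\bigr)\le D(P\|Q_n\rightsquigarrow\mathcal{C})$, for $\lambda\in(0,1)$,
\[
\int\tfrac1\lambda\ln\!\Bigl((1-\lambda)+\lambda\tfrac{q}{q_n}\Bigr)\,\mathrm{d}P\;\le\;\tfrac1\lambda D(P\|Q_n\rightsquigarrow\mathcal{C})+Q(\Omega)-Q_n(\Omega);
\]
the integrand is $\ge\tfrac1\lambda\ln(1-\lambda)$, so Fatou's lemma (in $n$, with $\lambda$ fixed) gives $\int\tfrac1\lambda\ln\bigl((1-\lambda)+\lambda q/\hat q\bigr)\,\mathrm{d}P\le Q(\Omega)-\liminf_nQ_n(\Omega)$, and since $\tfrac1\lambda\ln((1-\lambda)+\lambda s)\uparrow s-1$ as $\lambda\downarrow0$, monotone convergence yields $\int(q/\hat q-1)\,\mathrm{d}P\le Q(\Omega)-\liminf_nQ_n(\Omega)$, i.e.\ $\int\tfrac{\mathrm{d}P}{\mathrm{d}\hat Q}\,\mathrm{d}Q=\int\tfrac{q}{\hat q}\,\mathrm{d}P\le P(\Omega)+Q(\Omega)-\liminf_nQ_n(\Omega)$.

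I expect the main obstacle to be part~\ref{item:conv_lrny}: $m_P$ is a symmetric, $P$-averaged divergence, and one has to extract from its smallness the smallness of the asymmetric integral $\int\ln(\hat q/q_n)\,\mathrm{d}P$, which is not even obviously finite beforehand. The crux is the two‑sided comparison of $|\ln u-\ln(2-u)|$ with $-\ln(u(2-u))$ on $(0,2)$ --- growing like the latter near the endpoints and like its square root near $u=1$ --- and stating this elementary inequality with an honest constant. A secondary technicality is the bookkeeping in part~\ref{item:conv_mpny}: ensuring the $m_P$-limit is the density of a genuine finite measure and that all three conclusions depend only on its restriction to the support of $P$.
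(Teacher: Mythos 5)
Your proposal is correct, and for parts~\ref{item:conv_mpny} and~\ref{item:conv_lrny} it follows the paper's route almost exactly: your ``parallelogram identity'' $2m_P^2(q_n,q_m)=D(P\|Q_n\rightsquigarrow\overline Q)+D(P\|Q_m\rightsquigarrow\overline Q)\leq D(P\|Q_n\rightsquigarrow\mathcal{C})+D(P\|Q_m\rightsquigarrow\mathcal{C})$ is precisely Lemma~\ref{lem:3pointNy}, after which Cauchyness, Proposition~\ref{prop:complete}, and interleaving are identical; and your pointwise comparison of $|\ln u-\ln(2-u)|$ with $\sqrt{t}+t$ for $t=-\ln(u(2-u))$ is a restatement of Lemma~\ref{lem:gamma_transform} (whose $g$ satisfies $g(s)\le 2s+2\sqrt{2s}$), the only difference being that you pass from the pointwise bound to the integral via Cauchy--Schwarz where the paper applies Jensen to the concave $g$ (Lemma~\ref{lem:aux2}); your explicit chain-rule bookkeeping for the finiteness of $\int\ln(p'/q_n)\,\mathrm{d}P$ is a welcome addition the paper leaves implicit. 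Part~\ref{item:conv_evalny} is where you genuinely diverge: the paper couples the mixture weight to $n$ by setting $s_n=D(P\|Q_n\rightsquigarrow\mathcal{C})^{1/2}$, decomposes $D(P\|Q_n\rightsquigarrow Q_{n,s})$ via the Itakura--Saito divergence, works with an auxiliary $\tilde Q=(1-t)Q_1+tQ$, and only at the very end sends $t\to1$; you instead fix the weight $\lambda$, apply Fatou in $n$ (legitimate, since the integrand is bounded below by the constant $\lambda^{-1}\ln(1-\lambda)$), and then send $\lambda\downarrow0$ by monotone convergence using that $\lambda^{-1}\ln(1-\lambda+\lambda s)\uparrow s-1$. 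Your order of limits is cleaner and avoids both the Itakura--Saito machinery and the $\tilde Q$ detour (whose purpose in the paper is to keep $D(P\|Q_n\rightsquigarrow\tilde Q)$ well-defined, a problem that does not arise in your version because $Q_n$ never leaves the mixture); it arrives at the identical bound. Your additional observations --- that $q_n>0$ $P$-a.e.\ and $\sup_nQ_n(\Omega)<\infty$ follow from finiteness of $D(P\|Q_n\rightsquigarrow\mathcal{C})$ via the witness $\tfrac12Q_n+\tfrac12Q^*$, so that $\hat q$ really is the density of a finite measure --- tidy up a point the paper glosses over.
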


Theorem~\ref{thm:li} is a special case of Theorem~\ref{thm:InfoGain} 
when $P$ and all $Q\in \mathcal{C}$ are probability measures such that $D(P\|\mathcal{C})<\infty$.
This follows because Equation \eqref{eq:divdif} implies that minimizing $D(P\|Q\rightsquigarrow Q')$ over $Q$ is equivalent to minimizing $D(P\|Q)$ and because convergence of the densities in $m_P$ is equivalent to convergence of the logarithms in $L_1(P)$ by Lemma~\ref{lem:aux2} in Appendix~\ref{app:li_proof}.
We therefore refer to $\hat Q$ as the \emph{reverse information projection} of $P$ on $\mathcal{C}$, thereby extending the definition of the latter (we refrain from the term `generalized RIPr', because it has already been used for the RIPr whenever it is not attained by an element of $\mathcal{C}$~\cite{csiszar2003} or when the log score is replaced by another loss function \cite{GrunwaldM20}). 
However, the density of the measure $\hat Q$ is only unique as an element of $M(\Omega,\mathbb{R}_{>0})$, since convergence of the densities holds in $m_P$. 
In the current work this causes no ambiguity, so that we simply refer to it as `the' RIPr. 

Note that Theorem~\ref{thm:InfoGain} implies that if there exists a $Q\in \mathcal{C}$ with $D(P\|Q\rightsquigarrow \mathcal{C})=0$, then $Q$ is the RIPr of $P$ on $\mathcal{C}$.
This matches with the intuition that the maximum gain we can get from switching away from the `best' code in $\mathcal{C}$ should be equal to zero.
The following result establishes this more formally. 

\begin{proposition}\label{prop:equvalence}
    The following conditions are equivalent:
    \begin{enumerate}
        \item There exists a measure $P'$  such that $D(P\|P'\rightsquigarrow \mathcal{C})$ is finite.
        \item There exists a measure $Q$  in $\mathcal{C}$ such that $D(P\|Q\rightsquigarrow \mathcal{C})$ is finite.
        \item There exists a sequence of measures $Q_n\in \mathcal{C}$ such that $D(P\|Q_n\rightsquigarrow \mathcal{C})\to 0$ for $n\to\infty$.
    \end{enumerate}
    Consequently, whenever $\inf_{Q\in \mathcal{C}}D(P\|Q\rightsquigarrow \mathcal{C})<\infty$, it must actually be equal to zero.
\end{proposition}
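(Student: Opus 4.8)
The plan is to prove the two non‑trivial implications, $(2)\Rightarrow(3)$ and $(1)\Rightarrow(2)$; the remaining two are immediate, since $D(P\|Q_n\rightsquigarrow\mathcal{C})\to0$ forces $D(P\|Q_n\rightsquigarrow\mathcal{C})<\infty$ for all large $n$ (giving $(3)\Rightarrow(2)$) and taking $P'=Q$ gives $(2)\Rightarrow(1)$. Together these close the cycle, so it suffices to handle $(2)\Rightarrow(3)$ and $(1)\Rightarrow(2)$, and both rest on a single tool: a chain rule for the description gain.

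The first step is to establish that, for measures $Q_1,Q_2$ with $q_1>0$ and $q_2>0$ holding $P$‑almost surely and with $D(P\|Q_1\rightsquigarrow Q_2)$ finite, and for \emph{any} measure $Q_3$,
\[
D(P\|Q_1\rightsquigarrow Q_3)=D(P\|Q_1\rightsquigarrow Q_2)+D(P\|Q_2\rightsquigarrow Q_3),
\]
with the convention that the left-hand side is undefined (hence, inside a supremum, read as $-\infty$) exactly when the last summand is. This follows from the pointwise identity $\ln(q_3/q_1)=\ln(q_2/q_1)+\ln(q_3/q_2)$, which holds $P$-a.s.\ because $q_1,q_2>0$ $P$-a.s.\ — the only extended value that can occur is $-\infty$, on $\{q_3=0\}$, so no $\infty-\infty$ cancellation is needed — together with the fact that $\int_\Omega\ln(q_2/q_1)\,\mathrm{d}P$ is finite, which legitimises splitting the integral. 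Taking the supremum over $Q_3\in\mathcal{C}$ and pulling out the finite constant $D(P\|Q_1\rightsquigarrow Q_2)$, which commutes with the supremum even with $-\infty$ terms present, then gives $D(P\|Q_1\rightsquigarrow\mathcal{C})=D(P\|Q_1\rightsquigarrow Q_2)+D(P\|Q_2\rightsquigarrow\mathcal{C})$. I would also record two elementary facts for later use: finiteness of $D(P\|Q\rightsquigarrow Q')=\int_\Omega\ln(q'/q)\,\mathrm{d}P-(Q'(\Omega)-Q(\Omega))$ forces $q>0$ and $q'>0$ $P$-a.s.\ (otherwise the integrand is $+\infty$, or undefined, on a set of positive $P$-measure, making $D(P\|Q\rightsquigarrow Q')$ equal to $+\infty$ or undefined), and in that case $\int_\Omega\ln(q'/q)\,\mathrm{d}P$ is finite and $D(P\|Q'\rightsquigarrow Q)=-D(P\|Q\rightsquigarrow Q')$.

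For $(2)\Rightarrow(3)$: given $Q\in\mathcal{C}$ with $s:=D(P\|Q\rightsquigarrow\mathcal{C})<\infty$ (which lies in $[0,\infty)$ by the standing assumptions), I would pick $Q_n\in\mathcal{C}$ with $D(P\|Q\rightsquigarrow Q_n)$ well-defined and converging to $s$; for large $n$ these values are finite, so $q>0$ and $q_n>0$ hold $P$-a.s., and the chain rule through the supremum yields $D(P\|Q_n\rightsquigarrow\mathcal{C})=s-D(P\|Q\rightsquigarrow Q_n)\to0$, which is exactly $(3)$. For $(1)\Rightarrow(2)$: given a measure $P'$ with $s':=D(P\|P'\rightsquigarrow\mathcal{C})<\infty$, I would pick $\tilde{Q}\in\mathcal{C}$ with $D(P\|P'\rightsquigarrow\tilde{Q})$ finite; finiteness forces $p'>0$ and $\tilde q>0$ $P$-a.s., makes $\int_\Omega\ln(\tilde q/p')\,\mathrm{d}P$ finite, and hence makes $D(P\|\tilde{Q}\rightsquigarrow P')=-D(P\|P'\rightsquigarrow\tilde{Q})$ finite. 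Applying the chain rule with $Q_1=\tilde{Q}$ and $Q_2=P'$, then taking the supremum over $\mathcal{C}$, gives $D(P\|\tilde{Q}\rightsquigarrow\mathcal{C})=D(P\|\tilde{Q}\rightsquigarrow P')+s'<\infty$, so $\tilde{Q}\in\mathcal{C}$ witnesses $(2)$. (As a by-product one gets the stated consequence that $\inf_{Q\in\mathcal{C}}D(P\|Q\rightsquigarrow\mathcal{C})$, when finite, equals $0$.)

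The two implications themselves are short once the chain rule is in hand, so the main obstacle is making that chain rule fully rigorous. The delicate points are: (i) tracking $P$-a.s.\ positivity of all densities involved so that the logarithms and their sum are $P$-a.s.\ well-defined with no $\infty-\infty$ ambiguity; and (ii) checking that the additive identity, and in particular its interchange with $\sup_{Q_3\in\mathcal{C}}$, remains coherent when $D(P\|Q_1\rightsquigarrow Q_3)$ or $D(P\|Q_2\rightsquigarrow Q_3)$ is $+\infty$ or undefined — this is where the convention "undefined counts as $-\infty$" must be verified to apply consistently on both sides. Everything else (existence of the approximating sequence $Q_n$ and of the auxiliary measure $\tilde Q$ from the definition of the supremum, and the extraction of density positivity from finiteness of a description gain) is routine and uses only the standing assumptions that $\mathcal{C}$ is $\sigma$-convex and $P\ll Q^*$ for some $Q^*\in\mathcal{C}$, under which $D(P\|\,\cdot\,\rightsquigarrow\mathcal{C})$ takes values in $[0,\infty]$ throughout.
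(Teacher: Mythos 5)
Your proof is correct and rests on exactly the same key identity as the paper's: the additive chain rule $D(P\|Q_1\rightsquigarrow\mathcal{C})=D(P\|Q_1\rightsquigarrow Q_2)+D(P\|Q_2\rightsquigarrow\mathcal{C})$ applied to elements approaching the supremum. The only difference is organizational (you route $(1)\Rightarrow(3)$ through $(2)$ in two explicit steps, while the paper does it in one, taking $Q_n$ with $D(P\|P'\rightsquigarrow Q_n)\to D(P\|P'\rightsquigarrow\mathcal{C})$ directly), and you spell out the a.s.-positivity and well-definedness caveats that the paper dispatches with a ``without loss of generality'' remark.
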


To show that the reverse information projection exists, it is therefore enough to prove that one of these equivalent conditions holds.
Which condition is easiest to check will depend on the specific setting, as exemplified by the following propositions. 

\begin{proposition}\label{prop:convhull}
    If $\mathcal{C}$ is the convex hull of finitely many distributions, i.e. $\mathcal{C}=\mathrm{conv}(\{Q_1,\dots,Q_n\})$, then for any probability measure $P$ with $P\ll Q_i$ for at least one $i$, it holds that $D(P\| \frac{1}{n}\sum Q_i \rightsquigarrow \mathcal{C})<\infty$.
\end{proposition}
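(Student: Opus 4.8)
The plan is to prove the stronger statement that $D(P\|\bar Q\rightsquigarrow\mathcal C)\le\ln n$, where $\bar Q:=\tfrac1n\sum_{i=1}^n Q_i$. I would begin with some bookkeeping. Since $\mathcal C$ is the (finite, hence also $\sigma$-) convex hull of the $Q_i$, every element of $\mathcal C$ is a convex combination $Q'=\sum_{i=1}^n\lambda_i Q_i$ with $\lambda_i\ge 0$ and $\sum_i\lambda_i=1$, so in particular $\bar Q\in\mathcal C$ and every $Q'\in\mathcal C$ is again a probability measure; in terms of densities with respect to $\mu$ we have $\bar q=\tfrac1n\sum_i q_i$ and $q'=\sum_i\lambda_i q_i$.

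Next I would handle the null sets. Because the $q_i$ are nonnegative, $\{\bar q=0\}=\{q_i=0\text{ for all }i\}$, which is contained in $\{q_j=0\}$ for the index $j$ with $P\ll Q_j$. Since $Q_j(\{q_j=0\})=\int_{\{q_j=0\}}q_j\,\mathrm d\mu=0$, absolute continuity gives $P(\{\bar q=0\})=0$. Hence $P\ll\bar Q$, the density $\bar q$ is strictly positive $P$-almost surely, and on that set $\mathrm dQ'/\mathrm d\bar Q=q'/\bar q$.

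The core estimate is the pointwise bound $q'/\bar q\le n$, valid $P$-a.s.: since each $\lambda_i\le 1$, $q'=\sum_i\lambda_i q_i\le\sum_i q_i=n\bar q$. Taking logarithms gives $\ln(\mathrm dQ'/\mathrm d\bar Q)\le\ln n$ $P$-a.s., so the positive part of this integrand is bounded by $\ln n$ and the integral $\int_\Omega\ln(\mathrm dQ'/\mathrm d\bar Q)\,\mathrm dP$ is well defined with value in $[-\infty,\ln n]$. Because $Q'(\Omega)=\bar Q(\Omega)=1$, the correction term in \eqref{eq:descr_gain} vanishes and $D(P\|\bar Q\rightsquigarrow Q')=\int_\Omega\ln(\mathrm dQ'/\mathrm d\bar Q)\,\mathrm dP\le\ln n$. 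Taking the supremum over $Q'\in\mathcal C$ yields $D(P\|\bar Q\rightsquigarrow\mathcal C)\le\ln n<\infty$.

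I do not expect a genuine obstacle here; the only points that need care are checking that the description gain is well defined (which is exactly why the uniform upper bound $\ln n$ on the integrand, rather than mere finiteness of the integral, is the convenient thing to establish) and verifying $P\ll\bar Q$ so that $q'/\bar q$ is really the relevant Radon--Nikodym derivative $P$-almost everywhere. Combined with Proposition~\ref{prop:equvalence}, this also shows that the universal RIPr exists whenever $\mathcal C$ is the convex hull of finitely many distributions and $P$ is dominated by one of them.
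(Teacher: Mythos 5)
Your proof is correct and follows essentially the same route as the paper: both arguments rest on the pointwise bound $q'/\bar q\le n$ for $q'=\sum_i\lambda_i q_i$ (the paper writes this as $\sum_i w_i q_i\le(\max_i w_i)\sum_i q_i$, which gives the same $\ln n$ bound), followed by integrating against $P$ and taking the supremum over $\mathcal C$. Your additional verification that $P\ll\bar Q$ and that the description gain is well defined is a welcome bit of care that the paper's proof leaves implicit.
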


\begin{example}\label{ex:gauss_cauchy}
Let $\mathcal{C}$ be a singleton whose single element $Q$ is given by the standard Gaussian and let $P$ be the standard Cauchy distribution.
Since the Cauchy distribution is exponentially heavier-tailed than the Gaussian, 
we have that $D(P\|\mathcal{C})=\infty$.
However, since both distributions have full support, it follows that
\[
D(P\|Q\rightsquigarrow \mathcal{C})=D(P\|Q\rightsquigarrow Q)=0.
\] 
By Theorem~\ref{thm:InfoGain}~(1), $Q$ is therefore the reverse information projection of $P$ on $\mathcal{C}$. 

This example can be extended to composite $\mathcal{C}$ by considering all mixtures of the Gaussian distributions $\mathcal{N}(-1,1)$ and $\mathcal{N}(1,1)$ with mean $\pm1$  and variance $1$.
Proposition \ref{prop:convhull} guarantees the existence of a reverse information projection although the information divergence is still infinite because a Cauchy distribution is more heavy tailed than any finite mixture of Gaussian distributions. 
Symmetry implies that the reverse information projection must be equal to the uniform mixture of $\mathcal{N}(-1,1)$ and $\mathcal{N}(1,1)$,
which coincides with the result one would intuitively expect. 
 \end{example}

\begin{proposition}\label{prop:minimax}
    Assume that $\mathcal{C}$ is a convex set of probability measures that has finite minimax regret and with 
    normalized maximum likelihood 
    distribution $Q^*\in\mathcal{C}$. 
    Then for any probability measure $P$ that 
    is absolutely continuous with respect to $Q^*$, it holds that $D(P\|Q^*\rightsquigarrow \mathcal{C})<\infty$.
\end{proposition}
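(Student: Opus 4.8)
The plan is to read the bound off directly from the defining property of the normalized maximum likelihood (NML) distribution. First I would recall that $\mathcal{C}$ having finite minimax regret means that the Shtarkov integral $Z:=\int_\Omega \sup_{Q'\in\mathcal{C}} q'\,\mathrm{d}\mu$ is finite, in which case the minimax regret equals $\ln Z$ and is attained by the NML density $q^*=\bigl(\sup_{Q'\in\mathcal{C}}q'\bigr)/Z$. The single fact I need from this is the pointwise domination $q'(\omega)\le Z\,q^*(\omega)$, valid for every $Q'\in\mathcal{C}$ and every $\omega\in\Omega$; equivalently $q'/q^*\le Z$ wherever $q^*>0$, while $q'=0$ wherever $q^*=0$.

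Next I would unwind the description gain for the pair $(Q^*,Q')$. Since $P$, $Q^*$ and every $Q'\in\mathcal{C}$ are probability measures, the correction term $Q'(\Omega)-Q^*(\Omega)$ in the definition \eqref{eq:descr_gain} vanishes, so
\[
D(P\|Q^*\rightsquigarrow Q')=\int_\Omega \ln\frac{q'}{q^*}\,\mathrm{d}P .
\]
Because $P$ has a $\mu$-density and $P\ll Q^*$, one checks that $P(\{q^*=0\})=0$, so the integrand is defined $P$-almost everywhere; by the domination above it is at most $\ln Z$ there, hence its positive part is $P$-integrable and the integral is a well-defined element of $[-\infty,\ln Z]$. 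In particular $D(P\|Q^*\rightsquigarrow Q')\le \ln Z$ for every $Q'\in\mathcal{C}$, uniformly.

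Finally I would take the supremum over $Q'\in\mathcal{C}$, which gives $D(P\|Q^*\rightsquigarrow\mathcal{C})\le \ln Z<\infty$, the claim. (It is also $\ge D(P\|Q^*\rightsquigarrow Q^*)=0$ since $P\ll Q^*$, so in fact $0\le D(P\|Q^*\rightsquigarrow\mathcal{C})\le\ln Z$: the description gain available from $Q^*$ is at most the minimax regret.)

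I do not expect a genuine obstacle. The argument rests entirely on the one-line NML domination inequality; the only points needing a little care are purely bookkeeping — verifying that the integral defining $D(P\|Q^*\rightsquigarrow Q')$ is well-defined, which is immediate once the integrand is bounded above by the finite constant $\ln Z$, and disposing of the $\mu$-null set on which $\sup_{Q'}q'$ (hence $q^*$) vanishes, which is harmless since $P$ gives it no mass. If one wished to avoid the explicit NML form, the same bound $D(P\|Q^*\rightsquigarrow\mathcal{C})\le\ln Z$ follows verbatim from the regret-minimizing property of $Q^*$, since $\sup_{\omega}\ln\bigl(\sup_{Q'\in\mathcal{C}}q'(\omega)/q^*(\omega)\bigr)$ is by definition the minimax regret.
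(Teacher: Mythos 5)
Your proposal is correct and follows essentially the same route as the paper's proof: both rest on the single observation that the NML property gives the uniform pointwise bound $\sup_{Q'\in\mathcal{C}}\sup_\omega \ln\bigl(q'(\omega)/q^*(\omega)\bigr)\le \ln Z<\infty$, which immediately bounds $\int_\Omega \ln(q'/q^*)\,\mathrm{d}P$ and hence the supremum over $Q'$. Your extra bookkeeping about well-definedness and the $P$-null set where $q^*$ vanishes is a harmless elaboration of the same argument.
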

For an extensive discussion on minimax regret in the present coding context, as well as the normalized maximum likelihood distribution (also known as {\em Shtarkov\/} distribution), see e.g. \cite{Grunwald2009} or \cite{Erven2014}.
In short, the minimax regret is defined as $\inf_{Q\in \mathcal{C}} \sup_{Q'\in \mathcal{C},\omega\in\Omega} \ln \nicefrac{q'(\omega)}{q(\omega)}$.
This quantity is known to be finite if and only if the normalized maximum likelihood distribution $Q^*$, defined as $q^*(\omega)=\nicefrac{\sup_{Q\in \mathcal{C}} q(\omega)}{\left(\int_\Omega \sup_{Q\in \mathcal{C}} q \, \mathrm{d}\mu\right)}$, is well-defined.
One-dimensional exponential families with finite minimax regret have been classified in \cite{Grunwald2009}.

\subsection{Strict sub-probability measure}
\label{sec:strictlyspeaking}
We return now to the familiar setting where $P$ is a probability measure and $\mathcal{C}$ a convex set of probability measures.
It is easy to verify that the RIPr $\hat Q$ of $P$ on $\mathcal{C}$ is then a sub-probability measure.
This follows because we know that there exists a sequence $(Q_n)_{n\in \mathbb{N}}$ in $\mathcal{C}$ 
such that $q_n$ converges point-wise $P$-a.s. to $\hat q$ and Fatou's Lemma tells us
\begin{align}
    \int_\Omega \hat q\,\mathrm{d}\mu =\int_\Omega \liminf_{n\to\infty} q_n\,\mathrm{d}\mu
    \leq \liminf_{n\to \infty} \int_\Omega q_n \, \mathrm{d}\mu=1.
\end{align}
It is not clear a priori whether this can ever be a strict inequality.
For example, if the sample space is finite, the set of probability measures is compact, so the 
limit of any sequence of probability measures (i.e.\ the reverse information projection) will also be a probability measure.
The following example illustrates that this is not always the case for infinite sample spaces, 
and it can in fact already go wrong for a countable sample space with $D(P\|\mathcal{C})<\infty$.

\begin{example}
Let $\Omega=\mathbb{N}$ and $\mathcal{F}=2^\mathbb{N}$. 
Furthermore, let $P$ denote the probability measure $\delta_1$ concentrated in the point $i=1$ and $\mathcal{C}$ the set of distributions $Q$ satisfying
\[\sum_{i=1}^\infty \frac1i q(i)=\frac12.\] 
This set is defined by a linear constraint, so that $\mathcal{C}$ is convex, and for any $Q\in \mathcal{C}$, we have
\[q(1)+\sum_{i=2}^\infty \frac1i q(i) =\sum_{i=1}^\infty \frac1i q(i)=\frac12,\]
implying that $q(1)\leq \nicefrac{1}{2}$. 
It follows that $D(P\|Q)=-\ln(q(1))\geq \ln(2)$. 
The sequence $Q_n=\frac{n-2}{2n-2}\delta_1+\frac{n}{2n-2}\delta_n$ satisfies $Q_n\in\mathcal{C}$ and 
\[D(P\|Q_n)=\ln\frac{2n-2}{n-2} \to \ln(2).\]
Consequently, it must hold that $D(P\|\mathcal{C})=\ln(2)$.
The sequence $Q_n$ converges to the strict sub-probability measure $(\nicefrac12) \delta_1$, which must therefore be the RIPr of $P$ on $\mathcal{C}$. 
\end{example}

A more general example, which can be seen as a template to create such situations, is given in Appendix~\ref{app:supp}.
The common theme is that $\mathcal{C}$ is defined using only constraints of the form $\sum_i f_1(i)q(i)=c$, where $f_1$ is some positive function such that $\lim_{n\to \infty} f_1(n)=0$.
Since $\mathcal{C}$ only contains probability measures, there is the additional constraint that $\sum_i f_0(i)q(i) =1$, where $f_0$ denotes the constant function $f_0\equiv 1$.
This function $f_0$ dominates all other constraints $f_1$ in the sense that $\lim_{i\to \infty} \nicefrac{f_1(i)}{f_0(i)}=0$, but is itself not dominated by any of the constraints in the same manner.
It turns out that this is the precise condition that dictates whether or not a constraint has to be respected when taking point-wise limits of elements in $\mathcal{C}$.
Indeed, as shown in the theorem below, any constraint on $\mathcal{C}$ that is dominated by another constraint in the sense described above cannot be violated by taking point-wise limits.
Therefore, if we add a restriction to $\mathcal{C}$ that dominates the constant function $1$, i.e. that is defined by some function $f_1$ with $\lim_{n\to \infty} f_1(n)=\infty$, then the RIPr cannot be a strict sub-probability measure.

\begin{theorem}\label{thm:aftagende}
Take $\Omega=\mathbb{N},\mathcal{F}=2^\mathbb{N}$, and let $\mathcal{C}$ be a convex set of probability measures.
Suppose that for $f_0,f_1:\mathbb{N}\to \mathbb{R}_{>0}$, we have that $\sum_i f_0(i) q(i)\leq \lambda_0$ and $\sum_i f_1(i) q(i)=\lambda_1$ for all $Q\in \mathcal{C}$. 
If $Q_{n}$ denotes a sequence of measures in $\mathcal{C}$ that converges point-wise to some distribution $Q^{*}$, and $f_{0}$ dominates $f_{1}$ in the sense that
\begin{equation}
    \lim_{i\to \infty} \frac{f_{1}\left(i\right)}{f_{0}\left(i\right)} = 0,
\end{equation}
then 
\begin{equation}
    \sum_{i}f_{1}\left(i\right)\cdot q^{*}\left(i\right)=\lambda_{1}.
\end{equation}
\end{theorem}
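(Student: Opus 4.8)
The plan is to turn the domination hypothesis $\lim_{i\to\infty} f_1(i)/f_0(i)=0$, combined with the uniform bound $\sum_i f_0(i)q_n(i)\le\lambda_0$, into a uniform-in-$n$ smallness bound on the $f_1$-weighted tail masses $\sum_{i>N}f_1(i)q_n(i)$; this is exactly the uniform-integrability-type control that permits interchanging $\lim_n$ with the infinite sum $\sum_i$. Throughout I use that $\lambda_0$ is a finite real number (which is implicit in writing the constraint $\sum_i f_0(i)q(i)\le\lambda_0$; if it were $+\infty$ the conclusion would genuinely fail — take $\mathcal{C}$ to be all probability measures with $\sum_i i\,q(i)=2$ and $Q_n=(1-\tfrac1n)\delta_1+\tfrac1n\delta_{n+1}\to\delta_1$). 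Note that $\sigma$-convexity of $\mathcal{C}$ plays no role: all that is used is that each $Q_n$ lies in $\mathcal{C}$ and hence satisfies both constraints; and in fact the hypothesis that the pointwise limit $Q^*$ is a full distribution is not needed either.

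First I would dispose of the easy inequality $\sum_i f_1(i)q^*(i)\le\lambda_1$. Since $q_n(i)\to q^*(i)$ for every $i$, also $f_1(i)q_n(i)\to f_1(i)q^*(i)$ pointwise, and all terms are non-negative, so Fatou's lemma for the counting measure on $\mathbb{N}$ gives $\sum_i f_1(i)q^*(i)\le\liminf_n\sum_i f_1(i)q_n(i)=\lambda_1$.

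For the reverse inequality, fix $\varepsilon>0$. By the domination hypothesis there is an $N=N(\varepsilon)$ with $f_1(i)\le\varepsilon f_0(i)$ for all $i>N$, hence for every $n$
\[
\sum_{i>N}f_1(i)q_n(i)\;\le\;\varepsilon\sum_{i>N}f_0(i)q_n(i)\;\le\;\varepsilon\sum_i f_0(i)q_n(i)\;\le\;\varepsilon\lambda_0 ,
\]
so that $\lambda_1=\sum_i f_1(i)q_n(i)\le\sum_{i=1}^{N}f_1(i)q_n(i)+\varepsilon\lambda_0$. The truncated sum is a \emph{finite} sum of pointwise-convergent terms, so $\sum_{i=1}^{N}f_1(i)q_n(i)\to\sum_{i=1}^{N}f_1(i)q^*(i)$ as $n\to\infty$; letting $n\to\infty$ in the last inequality yields $\lambda_1\le\sum_{i=1}^{N}f_1(i)q^*(i)+\varepsilon\lambda_0\le\sum_i f_1(i)q^*(i)+\varepsilon\lambda_0$. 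Since $\varepsilon>0$ was arbitrary, $\lambda_1\le\sum_i f_1(i)q^*(i)$, which together with the previous paragraph gives the claimed equality.

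The argument has essentially no hard core; the one point requiring care — and the only ``obstacle'' worth naming — is recognizing that domination is precisely the hypothesis that makes the tail masses uniformly small, and being careful to pass to the limit only on a \emph{finite} truncation (never directly on the infinite sum), so that the error term $\varepsilon\lambda_0$ stays uniform in $n$; this last step is where finiteness of $\lambda_0$ is used.
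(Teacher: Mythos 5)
Your proof is correct, but it takes a genuinely different route from the paper's. The paper applies Fatou's lemma to the single combined function $f_0 + a f_1$ for an \emph{arbitrary} real $a$ (noting that domination makes $f_0(i) + a f_1(i) \geq 0$ for all large $i$), obtaining the inequality $a\bigl(\sum_i f_1(i)q^*(i) - \lambda_1\bigr) \leq \lambda_0 - \sum_i f_0(i)q^*(i)$ for every $a \in \mathbb{R}$; since the right-hand side is a fixed finite number, the coefficient of $a$ must vanish, which is the claim. You instead split the equality into two inequalities: Fatou gives $\sum_i f_1(i)q^*(i) \leq \lambda_1$ directly, and for the reverse you use domination to get a uniform-in-$n$ tail bound $\sum_{i>N} f_1(i)q_n(i) \leq \varepsilon\lambda_0$ and pass to the limit only on the finite truncation. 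The two arguments are essentially dual ways of exploiting the same hypothesis. Your version is more elementary and makes the role of the finiteness of $\lambda_0$ completely explicit (including a counterexample when it fails), and it is also slightly more careful on one point the paper glosses over: Fatou requires nonnegativity of the integrand, whereas $f_0 + a f_1$ is only guaranteed nonnegative for $i$ sufficiently large, so the paper's application of Fatou strictly speaking needs the finitely many exceptional terms to be split off first (harmless, since finite sums of pointwise-convergent terms converge, but unstated). The paper's approach buys brevity and a pleasant ``linear functional'' flavor; yours buys transparency about exactly which hypotheses do what. Both are valid.
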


\subsection{Greedy Approximation}\label{sec:algo}
So far, we have discussed the existence and properties of the RIPr of $P$ on $\mathcal{C}$. 
However, there will be many situations where it is infeasible to compute this exact projection, as it requires solving a complex minimization problem. 
For example, if $\mathcal{C}$ is given by the convex hull of some parameterized family of distributions, the reverse information projection might be an arbitrary mixture of elements of this family, and the minimization problem need not be convex in the parameters of the family. 
To this end, Li and Barron~\cite{li1999mixture} propose an iterative greedy algorithm for the case that $\mathcal{C}$ is given by the $\sigma$-convex hull (all countable mixtures, see Appendix~\ref{ap:convexity}) of a parameterized family of distributions, i.e. $\mathcal{C}=\sigma\text{-conv}(\{Q_\theta: \theta\in \Theta\})$, and $D(P\|\mathcal{C})<\infty$. 
The algorithm starts by setting $Q_1 := Q_{\theta_1}$, where $\theta_1$ minimizes $D(P\|Q_{\theta_1})$, and then iteratively defining $Q_k:= (1-\alpha_k)Q_{k-1} + \alpha_k Q_{\theta_k}$, where $\alpha_k = \nicefrac{2}{(k+1)}$\footnote{Li actually proposes to either minimize over $\alpha_k$ or use $\alpha_2=\nicefrac{2}{3}$ and $\alpha_k=\nicefrac{2}{k}$ for $k>2$; the formulation given here is a slight simplification by Brinda~\cite{brinda2018adaptive}.} and $\theta_k$ is chosen to minimize $D(P\|Q_k)$.
It is shown that, if $\sup_{x,\theta_1,\theta_2} \log \nicefrac{q_{\theta_1}(x)}{q_{\theta_2(x)}}$ is bounded, then $D(P\|Q_k)$ converges to $D(P\|\mathcal{C})$ at rate $1/k$.
Later, Brinda~\cite{brinda2018adaptive} showed that the condition that  the likelihood ratio has to be uniformly bounded in $x$ can be relaxed to the condition that (\ref{eq:brindabound}) below is finite. 
In both of these previous works, it is simply assumed that a minimizer in each step exists, though it need not necessarily be unique.
We will do likewise in the following, where we give an adaptation of the algorithm that works when the KL divergence is infinite.  

\begin{algorithm}
\caption{Greedy Approximation of the RIPr}\label{alg:ripr} 
\begin{algorithmic}[1]
   \State Fix $Q^*\in \mathcal{C}$ s.t. $|\inf_{\theta \in \Theta} \int_\Omega \log \nicefrac{q^*}{q_\theta}\, \mathrm{d}P|<\infty$ 
   \State Let $Q_1=Q_{\theta_1}$, where $\theta_1=\displaystyle\argmin_{\theta'\in \Theta} D(P\|Q_{\theta'} \rightsquigarrow Q^*)$
   \For{$k=2,3,\dots $}
      \State Choose  $\alpha_k=\frac{2}{k+1}$ and $\theta_k= \argmin_{\theta'\in \Theta} D(P\|(1-\alpha_k) Q_{k-1}+\alpha_k Q_{\theta'}\rightsquigarrow  Q^*)$
      \State Let $Q_k=(1-\alpha_k) Q_{\theta_{k-1}}+\alpha_k Q_{\theta_k}$
   \EndFor
\end{algorithmic}
\end{algorithm}

\begin{proposition}\label{prop:algo_conv}
    Suppose that $\inf_{Q\in\mathcal{C}} D(P\|Q\rightsquigarrow \mathcal{C})<\infty$, let $(Q_k)_{k\in \mathbb{N}}$ be the output of Algorithm~\ref{alg:ripr}, and 
 let $Q$ be any measure in $\mathcal{C}$, so that  $q =\sum_{\theta \in \Theta'} q_\theta \cdot w_Q(\theta)$ for some probability mass function $w_Q$ on a countable $\Theta' \subset\Theta$.
    If $D(P\|Q'\rightsquigarrow Q'')$ is finite for all $Q',Q'' \in \mathcal{C}$,
    then it holds that
\[ D(P\|Q_k \rightsquigarrow Q) \leq \frac{b_Q^{(k)} (P)}{k},\]
    where
    $b_Q^{(k)}(P)$ is given by 
\begin{align} 
      & \int_\Omega \left( 1 + \sup_{\theta^* \in \{\theta_i\}_{i=1}^k} \log \frac{\sup_{\theta\in \Theta } q_\theta }{ q_{\theta^*} } \right) \frac{\sum_{\theta \in \Theta'} q_\theta^2 \cdot w_Q(\theta)}{q^2}\, \mathrm{d}P \leq   
   \nonumber \\ \label{eq:brindabound}
&    
\sup_{Q \in \mathcal{C}} \int_\Omega \left( 1 + \sup_{\theta^*,\theta \in \Theta} \log \frac{
q_\theta }{ q_{\theta^*} } \right) \frac{
\sum_{\theta \in \Theta'} q_\theta^2 \cdot w_Q(\theta)}{q^2}\, \mathrm{d}P. 
    \end{align}
\end{proposition}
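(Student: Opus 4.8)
\textbf{Proof proposal for Proposition~\ref{prop:algo_conv}.}

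The plan is to mimic the Li--Barron / Brinda greedy-approximation analysis, but replacing the KL divergence $D(P\|\cdot)$ (which is infinite here) throughout by the description gain relative to the fixed reference measure $Q^*$, exploiting that by assumption $D(P\|Q'\rightsquigarrow Q'')$ is finite for all $Q',Q''\in\mathcal{C}$. The key observation that makes this substitution legitimate is the telescoping/additivity identity for description gain: whenever all the relevant integrals are finite, $D(P\|Q'\rightsquigarrow Q'') = D(P\|Q'\rightsquigarrow Q^*) - D(P\|Q''\rightsquigarrow Q^*)$ and, more usefully, $D(P\|Q_k\rightsquigarrow Q) = D(P\|Q_k\rightsquigarrow Q^*$-normalized quantities$)$ can be rewritten so that minimizing $D(P\|(1-\alpha_k)Q_{k-1}+\alpha_k Q_{\theta'}\rightsquigarrow Q^*)$ over $\theta'$ is the same as maximizing $D(P\|Q_{k-1}\rightsquigarrow(1-\alpha_k)Q_{k-1}+\alpha_k Q_{\theta'})$, i.e. the algorithm really is greedily reducing the gap to $Q$. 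So the first step is to record these algebraic identities and note that, under the finiteness hypothesis, all terms appearing below are finite and the manipulations are valid.

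The second and central step is the one-step improvement inequality. Fix a target $Q=\sum_{\theta\in\Theta'} q_\theta\, w_Q(\theta)\in\mathcal{C}$ and write $\Delta_k := D(P\|Q_k\rightsquigarrow Q)$ (nonnegative since $Q\in\mathcal{C}$ and by the definition of $D(P\|\cdot\rightsquigarrow\mathcal{C})$, or directly). Using convexity of $\gamma(x)=x-1-\ln x$ (equivalently, concavity of $\ln$) applied along the segment from $Q_{k-1}$ to a candidate $Q_\theta$, one gets for each $\theta$
\[
D(P\|(1-\alpha_k)Q_{k-1}+\alpha_k Q_\theta \rightsquigarrow Q) \le (1-\alpha_k)\,\Delta_{k-1} + \alpha_k D(P\|Q_\theta\rightsquigarrow Q) + (\text{second-order remainder}).
\]
Averaging this over $\theta\sim w_Q$ and using that the algorithm picks $\theta_k$ to minimize the left-hand side (via the $Q^*$-reformulation), the linear term $\alpha_k \mathbb{E}_{w_Q} D(P\|Q_\theta\rightsquigarrow Q)$ is $\le 0$ by a Jensen-type argument (since $Q = \mathbb{E}_{w_Q} Q_\theta$ and description gain is concave in its second slot at the point $Q$ itself — this is exactly where convexity of $\mathcal{C}$ and the choice of barycentre enter). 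This leaves
\[
\Delta_k \le (1-\alpha_k)\Delta_{k-1} + \alpha_k^2 \cdot c_k,
\]
where $c_k$ is a bound on the quadratic remainder; the remainder is controlled by a second-order Taylor expansion of $\gamma$, which produces the factor $\frac{\sum_\theta q_\theta^2 w_Q(\theta)}{q^2}$ together with the logarithmic factor $1+\sup_{\theta^*\in\{\theta_i\}_{i=1}^k}\log\frac{\sup_\theta q_\theta}{q_{\theta^*}}$ coming from bounding $\gamma''$ along the segment — exactly the integrand defining $b_Q^{(k)}(P)$. This is Brinda's refinement of Li--Barron: one bounds $\gamma''(\xi)$ on the interval between $q_{k-1}(x)/q(x)$ and mixtures thereof by something involving the log-likelihood-ratio range restricted to the atoms actually used so far.

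The third step is the recursion solve. With $\alpha_k = 2/(k+1)$ and $\Delta_k \le (1-\alpha_k)\Delta_{k-1} + \alpha_k^2 b_Q^{(k)}(P)$ (using monotonicity of $b_Q^{(k)}(P)$ in $k$, since the sup is over a growing set $\{\theta_i\}_{i=1}^k$), the standard induction — multiply through by $k(k+1)$ and telescope — yields $\Delta_k \le b_Q^{(k)}(P)/k$; the base case $k=1$ is handled by the choice of $Q_1$ and the finiteness of $D(P\|Q_1\rightsquigarrow Q^*)$. Finally, the second inequality in the displayed bound (replacing the specific $\{\theta_i\}$-sup and $w_Q$ by suprema over all of $\Theta$ and all $Q\in\mathcal{C}$) is immediate by enlarging the index sets. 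I expect the main obstacle to be the second step: making the Jensen argument for the linear term fully rigorous requires care that $\mathbb{E}_{w_Q} D(P\|Q_\theta\rightsquigarrow Q)\le 0$ when some summands may individually be $+\infty$ or ill-defined, and one must invoke the standing assumption that $D(P\|Q'\rightsquigarrow Q'')<\infty$ for all $Q',Q''\in\mathcal{C}$ — plus possibly a dominated-convergence / monotone-limit argument over the countable support $\Theta'$ — to push the integral inside the countable sum and to justify that the greedy step over $\theta\in\Theta$ dominates the $w_Q$-average over $\theta\in\Theta'\subset\Theta$. Pinning down exactly which integrability condition licenses each interchange, and verifying the second-order remainder bound has the precise form \eqref{eq:brindabound} rather than something slightly larger, is where the real work lies.
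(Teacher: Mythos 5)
Your first step --- the additivity identity $D(P\|Q_\theta\rightsquigarrow Q^*)=D(P\|Q_\theta\rightsquigarrow Q)+D(P\|Q\rightsquigarrow Q^*)$, which under the finiteness hypothesis makes the algorithm's $Q^*$-relative minimizations equivalent to minimizations relative to any target $Q\in\mathcal{C}$ --- is essentially the paper's entire proof: the paper records exactly this reduction and then invokes Brinda's Theorem~3.0.13 (which in turn rests on Li's Lemma~5.9) for the one-step improvement inequality and the recursion solve. Your steps two and three are a faithful sketch of re-deriving that cited theorem, and the technical points you flag as ``the real work'' (the Jensen step for the linear term and the second-order remainder taking exactly the form $b_Q^{(k)}(P)$) are precisely what the citation is being used to cover, so your outline matches the paper's route rather than deviating from it.
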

It follows that if 
$b_Q^{(k)}$ is uniformly bounded over all
$Q\in \mathcal{C}$,
in particular if (\ref{eq:brindabound}) is finite, 
then $D(P\|Q_k\rightsquigarrow \mathcal{C})$ converges to zero, i.e. $Q_k$ converges to the RIPr of $P$ on $\mathcal{C}$, at rate $\nicefrac{1}{k}$. 
%
The former holds under the strong, but often imposed assumption that the likelihood ratios in $\mathcal{C}$ are uniformly bounded; for example when $\mathcal{C}$ is given by the $\sigma$-convex hull of Gaussian densities restricted to a cube~\cite[Example 1]{li1999Estimation}.
However, \eqref{eq:brindabound} might also be finite under weaker assumptions.
For example, consider the set of Gaussian mixtures as in Example~\ref{ex:gauss_cauchy}, that is, $\mathcal{C}=\{w\cdot \mathcal{N}(-1,1)+(1-w)\cdot \mathcal{N}(1,1): w\in [0,1]\}$.
It can be seen that $b_Q(P)<\infty$ for all $Q\in \mathcal{C}$ whenever $P$ has a finite first moment.
Moreover, if the latter holds, then $b_Q(P)$ is uniformly bounded over all $Q\in \mathcal{C}'$ where $\mathcal{C}'=\{w\cdot \mathcal{N}(-1,1)+(1-w)\cdot \mathcal{N}(1,1):w\in [c,1-c]\}$ for some $c\in (0,\nicefrac12)$.

While Proposition~\ref{prop:algo_conv} is a satisfying theoretical result, we must concede that Algorithm~\ref{alg:ripr} might not be the fastest to implement in practice.
This arises from the fact that the objective $D(P\| (1-\alpha_k)Q_{{k-1}} + \alpha_k Q_{\theta'} \rightsquigarrow Q^*)$ need not be convex in $\theta'$. 
One might therefore have to resort to an exhaustive search over a discretization of the parameter space.
On top of that, there is no guarantee that the information gain is easily computable.
As an alternative for the case that $\Theta$ is finite and $D(P\|\mathcal{C})<\infty$, one might use the iterative algorithm proposed by Csisz\'ar and Tusn\'ady~\cite[Theorem~5]{csiszar1984information}.
A big advantage of the latter is that their recursive update step has an explicit formula, which makes each iteration considerably faster. 
The downside is that, while convergence in terms of KL is guaranteed, it is unclear at what rate this happens in general.
Furthermore, proving convergence of their algorithm in the setting where $D(P\|\mathcal{C})=\infty$ seems far from a straightforward exercise.  

\subsection{Discussion}
The results in this section might be regarded as a generalization of large parts of Chapters 3 and 4 in 
Li's Ph.D. thesis \cite{li1999Estimation} and in fact the tools in this section were initially developed to clear up some ambiguity around 
the proof of 
Theorem~\ref{thm:li}, Part \ref{item:ripr_conv} as provided by Li.
That is, in~\cite{li1999Estimation} it is stated that for all sequences $(Q_n)_{n\in \mathbb{N}}$ in $\mathcal{C}$ such that $\lim_{n\to \infty} D(P\|Q_n)=D(P\|\mathcal{C})$ it holds that $\ln q_n\to \ln \hat q$ in $L_1(P)$.
However, the proof thereof refers to  \cite[Lemma~4.3]{li1999Estimation}, which only shows existence of one such a sequence. 
Then \cite[Lemma~4.4]{li1999Estimation} also shows that if $\hat Q$ is such that  $\log q_n \to \log \hat q$ in $L_1(P)$ for some sequence $(Q_n)_{n\in \mathbb{N}}$ that achieves 
$\lim_{n\rightarrow \infty} D(P\|Q_n)= D(P\|\mathcal{C})$,
then it must hold that $D(P\|\hat Q)=D(P\|\mathcal{C})$. 
However, it is a priori not clear whether every sequence $(Q_n)_{n\in \mathbb{N}}$ that achieves $\lim_{n\rightarrow \infty} D(P\|Q_n)= D(P\|\mathcal{C})$ has such a limit. 
Moreover, it is never shown that, if it exists, this limit must be the same for every such sequence.
Note that it is not at all our intention here to criticize Li's fundamental and ground-breaking work. Li's is one of those rare theses that have had a major impact outside of their own research area: being a thesis on information-theory, it served as the central tool and inspiration for papers on fast convergence rates in machine learning theory \cite{erven2015fast,GrunwaldM20}, and also for  \cite{grunwald2024}, which  led to a breakthrough in ($\E$-based) hypothesis testing. Our aim is merely to indicate that Theorem~\ref{thm:InfoGain} ties up some loose ends in Li's original, pioneering results. 
\section{Optimal \texttt{E}-statistics}\label{sec:ordering}
In this section, we assume that $P$ and all $Q\in \mathcal{C}$ are probability measures, and we are interested in the hypothesis test with $P$ as alternative and $\mathcal{C}$ as null.
To this end, Theorem~\ref{thm:InfoGain} shows that --- whenever it exists --- the likelihood ratio of $P$ and its RIPr is an \E-statistic. 
A natural question is whether the optimality of the RIPr in terms of describing data distributed according to $P$ carries over to some sort of optimality of the \E-statistic, as is true for the GRO criterion in the case that $D(P\|\mathcal{C})<\infty$. 
It turns out that this is true in terms of an intuitive extension of the GRO criterion.
Completely analogously
to the coding story, we simply have to change from absolute to pairwise comparisons.

\begin{definition}\label{def:ordering}
For \E-statistics $E,E'\in \mathcal{E}_\mathcal{C}$, we say that $E$ is \emph{stronger} than $E'$ if the following integral is well-defined and non-negative, possibly infinite:
\begin{equation}\label{eq:ordering}
\int_{\Omega} \ln\left(\frac{E}{E'}\right) \, \mathrm{d}P,
\end{equation}
where we adhere to the conventions $\ln(\nicefrac{0}{c})=-\infty$ and $\ln(\nicefrac{c}{0})=\infty$ for all $c\in\mathbb{R}_{>0}$. 
Furthermore, an \E-statistic $E^*\in \mathcal{E}_\mathcal{C}$ is a \emph{strongest} \E-statistic if it is stronger than any other \E-statistic $E\in \mathcal{E}_\mathcal{C}$.  
\end{definition} 
The notion of optimality in Definition~\ref{def:ordering} comes down to 
the simple idea that if one \E-statistic $E$ is stronger than another 
\E-statistic $E'$, then  {\em repeatedly\/} testing based on $E$ 
eventually becomes more powerful than repeatedly testing based on $E'$ 
in the sense that 
there is a higher probability of rejecting a false null-hypothesis. 
Let us explain in more detail what we mean by this.
Suppose that we conduct the same experiment $N$ times independently to 
test the veracity of the hypothesis $\mathcal{C}$, resulting in outcomes 
$\omega_1,\dots,\omega_N$.
For any given \E-statistic $E \in \mathcal{E}_{\mathcal{C}}$, we have that 
$\prod_{i=1}^N E(\omega_i)$
is still an e-statistic, not just for fixed $N$ but even if $N$ is a 
random (i.e. data-dependent) stopping time. So, as indicated before, it 
can be used to test $\mathcal{C}$ with Type-I error guarantees. Yet, for 
two \E-statistics $E,E'\in \mathcal{E}_\mathcal{C}$, the law of large 
numbers states that if $P$ is true, it will almost surely hold that 
\[
\frac{\prod_{i=1}^n E(\omega_i)}{\prod_{i=1}^n E'(\omega_i)} = \exp\left(n\int_\Omega \ln \left(\frac{E}{E'}\right) \, \mathrm{d}P+o(n)\right).
\]
It follows that if the integral 
$\int_\Omega \ln \left(\frac{E}{E'}\right) \, \mathrm{d}P$ 
is positive then with high probability $E$ will, for large enough $n$, 
give more evidence against $\mathcal{C}$ than $E'$ if the alternative is 
true, i.e.\ a test based on $E$ will asymptotically have more power than a 
test based on $E'$.

Since we assume throughout that there exists a $Q^*\in \mathcal{C}$ such 
that $P\ll Q^*$, it follows that for any \E-statistic $E$ we must have 
$P(E=\infty)=0$, which simplifies any subsequent analyses greatly. 
\begin{proposition}\label{prop:unique}
    Assume that $\mathcal{C}$ is a set of probability measures and that 
    $P$ is a probability measure. If there is an $E'\in \mathcal{E}_\mathcal{C}$ such that $\sup_{E\in \mathcal{E}_\mathcal{C}} \int_{\Omega} \ln\left(\nicefrac{E}{E'}\right) \, \mathrm{d}P <\infty$,
    then a strongest \E-statistics exists. 
    Furthermore, if $E_1$ and $E_2$ are both 
    strongest \E-statistics then $E_1=E_2$ holds $P$-a.s.
\end{proposition}

The strongest \E-statistic in Definition~\ref{def:ordering} can be seen as 
a generalization of the GRO \E-statistic, because if 
$\int_\Omega \ln E\, \mathrm{d}P$ and $\int_\Omega \ln E'\, \mathrm{d}P$ 
are both finite, \eqref{eq:ordering} can be written as the difference 
between the two logarithms, i.e. 
$\int_\Omega \ln\left(\nicefrac{E}{E'}\right)\,\mathrm{d}P=\int_\Omega \ln E\, \mathrm{d}P-\int_\Omega \ln E'\, \mathrm{d}P$. 
In this case, finding the strongest e-statistic therefore corresponds to 
maximizing $\int_\Omega \ln E\, \mathrm{d}P$ over all \E-statistics, thus 
recovering the original GRO criterion. 
%
As an extension of that case, we prove that whenever the RIPr exists, it 
always leads to the strongest \E-statistic. 

\begin{theorem}\label{thm:LR_optim}
    Suppose that both $P$ and all $Q\in \mathcal{C}$ are probability 
    measures and that 
    $\inf_{Q\in\mathcal{C}}D(P\|Q\rightsquigarrow \mathcal{C})<\infty$.
    If $\hat Q$ denotes the RIPr of $P$ on $\mathcal{C}$, then 
    $\hat E=\nicefrac{\mathrm{d}P}{\mathrm{d}\hat Q}$ is the strongest 
    \E-statistic. 
\end{theorem}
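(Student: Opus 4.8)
The plan is to reduce the statement to the convergence properties already established in Theorem~\ref{thm:InfoGain}. First I would verify that $\hat E = \nicefrac{\mathrm dP}{\mathrm d\hat Q}$ is indeed an \E-statistic: this is exactly item~\ref{item:conv_evalny} of Theorem~\ref{thm:InfoGain}, since $P$ and all $Q\in\mathcal C$ are probability measures and we may pick the approximating sequence $(Q_n)$ so that $Q_n(\Omega)=1$ for all $n$, whence the right-hand side $P(\Omega)+Q(\Omega)-\liminf_n Q_n(\Omega)=1+1-1=1$. So $\int_\Omega \hat E\,\mathrm dQ\le 1$ for every $Q\in\mathcal C$, i.e. $\hat E\in\mathcal E_\mathcal C$.

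Next, to show $\hat E$ is \emph{stronger} than an arbitrary $E\in\mathcal E_\mathcal C$, I would show that $\int_\Omega \ln(\hat E/E)\,\mathrm dP$ is well-defined and non-negative. The key idea is to realize each \E-statistic $E$ as (a limit of) likelihood ratios against measures in $\mathcal C$. More precisely, for $E\in\mathcal E_\mathcal C$ one considers the sub-probability measure $Q_E$ with density $q_E := q^* \cdot (\text{something})$ — concretely, set $\mathrm dQ_E := E^{-1}\,\mathrm dP$ on $\{E>0\}$ — so that $\mathrm dP/\mathrm dQ_E = E$ on the relevant set; but $Q_E$ need not lie in $\mathcal C$, so instead I would use the sequence $(Q_n)$ from Theorem~\ref{thm:InfoGain} achieving $D(P\|Q_n\rightsquigarrow\mathcal C)\to 0$ and exploit that $D(P\|Q_n\rightsquigarrow\mathcal C)\ge D(P\|Q_n\rightsquigarrow Q')$ for any relevant $Q'$. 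Taking $P' := Q_E$ (assuming the finiteness hypothesis $|\inf_{Q\in\mathcal C}D(P\|Q\rightsquigarrow P')|<\infty$ holds; the boundary cases where it fails have to be handled separately using the $\ln(c/0)=\infty$, $\ln(0/c)=-\infty$ conventions and the fact that $P(E=\infty)=0$), item~\ref{item:conv_lrny} of Theorem~\ref{thm:InfoGain} gives
\[
\int_\Omega \ln\frac{\mathrm dQ_E}{\mathrm d\hat Q}\,\mathrm dP = \lim_{n\to\infty}\int_\Omega \ln\frac{\mathrm dQ_E}{\mathrm dQ_n}\,\mathrm dP .
\]
Now $\int_\Omega \ln(\hat E/E)\,\mathrm dP = \int_\Omega \ln(\mathrm dQ_E/\mathrm d\hat Q)\,\mathrm dP$, while $\int_\Omega \ln(\mathrm dQ_E/\mathrm dQ_n)\,\mathrm dP = D(P\|Q_n\rightsquigarrow Q_E) - (Q_E(\Omega)-1) \ge D(P\|Q_n\rightsquigarrow Q_E)$ up to the normalization correction, and $D(P\|Q_n\rightsquigarrow Q_E)\to 0$ because it is squeezed between a non-negative quantity and $D(P\|Q_n\rightsquigarrow\mathcal C)\to 0$ — wait, this needs care: $Q_E\notin\mathcal C$, so I would rather argue directly that $\liminf_n D(P\|Q_n\rightsquigarrow Q_E)\le \liminf_n D(P\|Q_n\rightsquigarrow\mathcal C)=0$ only if $Q_E\in\mathcal C$; for general $Q_E$ one instead bounds using that $\int E\,\mathrm dQ_n\le 1$ together with Jensen's inequality applied to the concave logarithm, $\int_\Omega \ln\frac{\mathrm dQ_E}{\mathrm dQ_n}\,\mathrm dP = \int_\Omega \ln\frac{E^{-1}p}{q_n}\,\mathrm dP$, and $-\int \ln(E q_n/p)\,\mathrm dP \ge -\ln\int (E q_n/p)\,\mathrm dP = -\ln\int E\,\mathrm dQ_n \ge 0$. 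Passing to the limit via item~\ref{item:conv_lrny} then yields $\int_\Omega \ln(\hat E/E)\,\mathrm dP\ge 0$, which is exactly strongness.

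For the uniqueness part, suppose $\tilde E$ is also strongest. Then $\tilde E$ is stronger than $\hat E$ and $\hat E$ is stronger than $\tilde E$, so $\int_\Omega \ln(\hat E/\tilde E)\,\mathrm dP \ge 0$ and $\int_\Omega \ln(\tilde E/\hat E)\,\mathrm dP\ge 0$; these two integrands are negatives of each other (using $P(\hat E=\infty)=P(\tilde E=\infty)=0$ so no $\infty-\infty$ arises, and noting that $\hat E>0$ $P$-a.s. since $p>0$ is not assumed here but $P\ll\hat Q$ can be derived, or one argues on $\{\hat E>0\}\cap\{\tilde E>0\}$ which carries full $P$-mass), hence $\int_\Omega \ln(\hat E/\tilde E)\,\mathrm dP = 0$. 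To conclude $\hat E = \tilde E$ $P$-a.s.\ I would use strict concavity: the mixture $\bar E := (\hat E+\tilde E)/2$ is again an \E-statistic (by convexity of $\mathcal E_\mathcal C$), and $\int_\Omega \ln(\bar E/\hat E)\,\mathrm dP \ge 0$ by strongness of... no — strongness goes the other way. Instead, strongness of $\hat E$ gives $\int \ln(\hat E/\bar E)\,\mathrm dP\ge 0$, i.e. $\int \ln\frac{2\hat E}{\hat E+\tilde E}\,\mathrm dP\ge 0$; combined with the analogous inequality for $\tilde E$ and strict concavity of $\ln$, Jensen forces $\hat E=\tilde E$ $P$-a.s.

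The main obstacle I anticipate is the case analysis for \E-statistics $E$ that do \emph{not} satisfy the integrability hypothesis of item~\ref{item:conv_lrny} of Theorem~\ref{thm:InfoGain} — i.e.\ when $\inf_{Q\in\mathcal C}D(P\|Q\rightsquigarrow Q_E)$ is $-\infty$ or $+\infty$ — and verifying that in those cases $\int_\Omega\ln(\hat E/E)\,\mathrm dP$ is still well-defined (not $\infty-\infty$) and non-negative purely from the sign conventions and the \E-statistic inequality $\int E\,\mathrm dQ\le 1$; here one must carefully track the sets $\{E=0\}$ and $\{E=\infty\}$, using $P(E=\infty)=0$ and the fact that on $\{E=0\}$ the integrand is $+\infty$, which only helps. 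Making the Jensen/Fatou passage to the limit rigorous, ensuring $\hat E>0$ $P$-a.s., and handling the normalization discrepancy between sub-probability $Q_E$ and the probability measures in $\mathcal C$ are the remaining technical points.
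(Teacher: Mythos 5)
Your argument for strongest-ness is essentially the paper's: you take a sequence $(Q_n)$ with $D(P\|Q_n\rightsquigarrow\mathcal{C})\to 0$, tilt by $E$ (you use $E^{-1}\cdot P$ where the paper uses the sub-probability measures $R_n$ with $\mathrm{d}R_n=E\,\mathrm{d}Q_n$, but both reduce to showing $\int_\Omega\ln\bigl(p/(Eq_n)\bigr)\,\mathrm{d}P\ge 0$ — your Jensen step and the paper's appeal to $D(P\|R_n)\ge 0$ are the same inequality), and then pass to the limit using $\int_\Omega\ln(q_n/\hat q)\,\mathrm{d}P\to 0$; the edge cases you flag (e.g. $P(E=0)>0$) are dispatched by the paper in one line via $\hat E>0$ $P$-a.s. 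Your uniqueness argument via the midpoint \E-statistic $\bar E=(\hat E+\tilde E)/2$ and strict concavity is correct and is in fact a useful supplement, since the paper's written proof addresses only the strongest-ness claim and not the $P$-a.s.\ uniqueness asserted in the theorem.
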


The likelihood ratio between $P$ and its RIPr is in fact the only 
\E-statistic in the form of a likelihood ratio with $P$ in the numerator, 
as the following proposition shows. 
Though the statement is more general, the proof is completely analogous to 
part of the proof of Lemma 4.1 in~\cite{li1999Estimation}.

\begin{proposition}\label{prop:estat_implies_ripr}
    Suppose that $\mathcal{C}$ is a set of probability measures and that 
    $P$ is a probability measure. 
    If there exists a measure $Q^*\in \mathcal{C}$ such that 
    $\nicefrac{\mathrm{d}P}{\mathrm{d}Q^*} \in \mathcal{E}_\mathcal{C}$, 
    then $D(P\|Q^*\rightsquigarrow \mathcal{C})=0$, i.e. $Q^*$ is the RIPr 
    of $P$ on $\mathcal{C}$. 
\end{proposition}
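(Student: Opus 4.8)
The plan is to show directly that the assumption $\nicefrac{\mathrm{d}P}{\mathrm{d}Q^*}\in\mathcal{E}_\mathcal{C}$ forces the description gain $D(P\|Q^*\rightsquigarrow Q')$ to be $\leq 0$ for every $Q'\in\mathcal{C}$; combined with the fact that $D(P\|Q^*\rightsquigarrow Q^*)=0$, this yields $D(P\|Q^*\rightsquigarrow\mathcal{C})=0$, and then Proposition~\ref{prop:equvalence} (or the remark after Theorem~\ref{thm:InfoGain}) identifies $Q^*$ as the universal RIPr. So the whole statement reduces to the single inequality $\int_\Omega \ln\!\left(\nicefrac{\mathrm{d}Q'}{\mathrm{d}Q^*}\right)\mathrm{d}P - (Q'(\Omega)-Q^*(\Omega)) \leq 0$ for all $Q'\in\mathcal{C}$.

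The key step is a convexity/perturbation argument in the spirit of Li's Lemma 4.1. Fix $Q'\in\mathcal{C}$ and, for $\alpha\in[0,1]$, consider $Q_\alpha := (1-\alpha)Q^* + \alpha Q' \in\mathcal{C}$ (using $\sigma$-convexity, in particular ordinary convexity). Since $\nicefrac{\mathrm{d}P}{\mathrm{d}Q^*}$ is an \E-statistic, $\int_\Omega \nicefrac{p}{q^*}\,\mathrm{d}Q_\alpha \leq 1$, i.e. $\int_\Omega \nicefrac{p\, q_\alpha}{q^*}\,\mathrm{d}\mu \leq 1$. Writing $q_\alpha/q^* = 1 + \alpha(q'/q^* - 1)$ (wherever $q^*>0$; note $P\ll Q^*$ so $P$-a.e. statements are unaffected by $\{q^*=0\}$), this gives $\int_{\{q^*>0\}} p\cdot\bigl(1 + \alpha(q'/q^*-1)\bigr)\,\mathrm{d}\mu \leq 1$, hence $\alpha\int_\Omega \bigl(q'/q^* - 1\bigr)p\,\mathrm{d}\mu \leq 1 - P(q^*>0) = 0$, so $\int_\Omega (q'/q^*)\,\mathrm{d}P \leq P(\Omega) = 1$. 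Now apply the inequality $\ln x \leq x - 1$ pointwise to $x = q'/q^*$ (interpreting the endpoints as in the conventions around \eqref{eq:divdif}): $\int_\Omega \ln(q'/q^*)\,\mathrm{d}P \leq \int_\Omega (q'/q^* - 1)\,\mathrm{d}P \leq Q'(\Omega) - 1 = Q'(\Omega) - Q^*(\Omega)$ (using $Q^*(\Omega)=1$ and $\int (q'/q^*)\mathrm{d}P \leq Q'(\Omega)$ — one should double-check this last bound, but since $p\leq$ const fails in general, it is cleaner to bound $\int (q'/q^*)\mathrm{d}P \le 1 = Q'(\Omega)$ when $Q'$ is a probability measure, which it is here). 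Thus $D(P\|Q^*\rightsquigarrow Q') = \int_\Omega \ln(q'/q^*)\,\mathrm{d}P - (Q'(\Omega)-Q^*(\Omega)) \leq 0$.

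Taking the supremum over $Q'\in\mathcal{C}$ gives $D(P\|Q^*\rightsquigarrow\mathcal{C})\leq 0$, and since this quantity is always $\geq 0$ (it is $\geq D(P\|Q^*\rightsquigarrow Q^*) = 0$), we conclude $D(P\|Q^*\rightsquigarrow\mathcal{C}) = 0$. By the remark following Theorem~\ref{thm:InfoGain}, any $Q$ with $D(P\|Q\rightsquigarrow\mathcal{C})=0$ is the universal RIPr, so $Q^*$ is the universal RIPr of $P$ on $\mathcal{C}$.

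The main obstacle I anticipate is the careful handling of the integrals when the description gain is a priori only the "well-defined" (possibly $+\infty$ or $-\infty$) expression \eqref{eq:descr_gain} rather than a difference of finite divergences: one must justify that $\int \ln(q'/q^*)\,\mathrm{d}P$ is well-defined (its positive part is controlled by $\ln x \le x-1$ and the bound $\int (q'/q^*)\mathrm{d}P \le 1 < \infty$ just derived, so the integral is either finite or $-\infty$, hence well-defined), and that splitting $q_\alpha/q^* = 1 + \alpha(q'/q^*-1)$ and exchanging integral with the finite linear combination is legitimate — this is fine because each term is nonnegative or the combination is dominated appropriately. A secondary subtlety is that $P(q^*=0)=0$ by the standing assumption $P\ll Q^*$, which is what lets us ignore the set $\{q^*=0\}$ throughout; this should be stated explicitly.
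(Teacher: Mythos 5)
Your proof is correct and takes essentially the same route as the paper's: both hinge on combining $\ln x \le x-1$ with the \E-statistic property to conclude $D(P\|Q^*\rightsquigarrow Q')=\int_\Omega \ln(q'/q^*)\,\mathrm{d}P\le \int_\Omega (q'/q^*)\,\mathrm{d}P-1\le 0$ for every $Q'\in\mathcal{C}$, and then note the supremum is attained (at value $0$) by $Q'=Q^*$. Your convexity detour via $Q_\alpha$ is superfluous, since the bound $\int_\Omega (q'/q^*)\,\mathrm{d}P=\int_\Omega(\mathrm{d}P/\mathrm{d}Q^*)\,\mathrm{d}Q'\le 1$ follows by applying the \E-statistic property directly to $Q'\in\mathcal{C}$, which is exactly what the paper does.
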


We now return to Example~\ref{ex:gauss_cauchy}, where the GRO criterion is 
not able to distinguish between \E-variables, but we are able to do so 
with Definition~\ref{def:ordering} and Theorem~\ref{thm:LR_optim}. 

\begin{excont}[continued]\label{ex:simple}
In the case that $P$ is the standard Cauchy and $\mathcal{C}=\{Q\}$, where 
$Q$ is the standard Gaussian, it is straightforward to see that the 
likelihood ratio between $P$ and $Q$ is an \E-statistic, i.e.
\[\int_\Omega \frac{\mathrm{d}{P}}{\mathrm{d}Q}\, \mathrm{d}Q= \int_\Omega\,\mathrm{d}P =1.\]
However, for the growth rate it holds that
\[\int_\Omega \ln\left(\frac{\mathrm{d}{P}}{\mathrm{d}Q}\right)\, \mathrm{d}P= D(P\|Q)=\infty.\]
The same argument can be used to show that for any $0< c\leq 1$, we have 
an \E-statistic given by $c\,{\mathrm{d}P}/{\mathrm{d}Q}$, which still has 
infinite growth rate.
The GRO criterion in Definition~\ref{def:GHKgro} is not able to tell which of these \E-statistics is preferable.
However, since $Q$ is the RIPr of $P$ on $\mathcal{C}$, it follows from Theorem~\ref{thm:LR_optim} that $\nicefrac{\mathrm{d}P}{\mathrm{d}Q}$ is the strongest \E-statistic, and in particular it is stronger than $c\, {\mathrm{d}P}/{\mathrm{d}Q}$ for all $0<c<1$. 
 \end{excont}

\subsection{Convexity}\label{sec:convexity}
In the discussion above, the null hypothesis $\mathcal{C}$ is assumed to be convex, which does not hold for many of the null hypotheses commonly employed in statistics, such as the set of all Gaussian distributions with varying mean and/or variance.
However, it follows from the Fubini-Tonelli theorem that the set of \E-statistics on $\mathcal{C}$ equals the set of \E-statistics on the convex hull of $\mathcal{C}$.
The same is true if the convex hull is replaced by the $\sigma$-convex hull where countable mixtures are allowed or by the Choquet-convex hull where arbitrary mixtures are allowed (see Appendix~\ref{ap:convexity} for precise definitions). 
Therefore, if there exists a strongest \E-statistic for testing the alternative $P$ against any of these notions of the convex hull, then that is also the strongest \E-statistic for testing $P$ against the original null hypothesis, regardless of whether that was convex.
It follows from Theorem~\ref{thm:LR_optim} that, to find the strongest \E-statistic, it suffices to find the RIPr of $P$ on any of the notions of the convex hull. In particular, if RIPrs exists on more than one of these, they must coincide; on the other hand, none of the three RIPrs may exist, and our results also do not rule out the possibility that the  RIPr exists on just one or two of the three convex hulls. To witness, in Appendix~\ref{ap:convexity} we give an example (Example~\ref{ex:ripr_coincides}) in which the
RIPr of $P$ on the $\sigma$-convex hull exists, while the RIPr of $P$  on the convex hull does not. At the same time, there are constraints here: Theorem~\ref{thm:ripr_convexity} in Appendix~\ref{ap:convexity} implies that if the RIPr on the convex hull of $\mathcal{C}$ exists, then the RIPr on the $\sigma$-convex hull of $\mathcal{C}$ also exists (and then they must be equal). 

Things become much more clear-cut if the RIPr $\hat{Q}$ of $P$ on a certain notion of the convex hull exists {\em and is an element of that set}. In that case, $\hat Q$ is also the RIPr of $P$ on any stronger notion of the convex hull.
Indeed, the different levels of convex hulls are nested, and their corresponding sets of \E-statistics coincide, so this follows directly from Proposition~\ref{prop:estat_implies_ripr}:
\begin{corollary}\label{cor:ripr_convexity}
    Let $\mathcal{C}$ denote a set of probability measures (not necessarily convex) and let $P$ denote a probability measure. 
    If the RIPr of $P$ on the convex hull of $\mathcal{C}$ exists and is given by $\hat{Q}\in \mathrm{conv}(\mathcal{C})$, then  $\hat{Q}$ is also, (a) the RIPr of $P$ on the $\sigma$-convex and, (b), on the Choquet-convex hull of $\mathcal{C}$.
    Similarly, if $\hat Q\in \sigma\text{-}\mathrm{conv}(\mathcal{C})$ is the RIPr of $P$ on $\sigma\text{-}\mathrm{conv}(\mathcal{C})$, then (c) $\hat Q$ is also the RIPr of $P$ on the Choquet-convex hull of $\mathcal{C}$.
\end{corollary}
Further details regarding convexity are presented in Appendix \ref{ap:convexity}. 
In particular, Theorem~\ref{thm:ripr_convexity} in the latter gives an analogous result to 
Corollary~\ref{cor:ripr_convexity}, Part (a), for the case that $P$ and $\mathcal{C}$ are not restricted to 
be probability measures, and the RIPr is not assumed to be attained in the set.

\subsection{Approximation}\label{sec:approximation}
In Section~\ref{sec:algo}, we discussed an algorithm that provides an approximation of the RIPr for scenarios where it is not possible to explicitly compute the latter. 
However, the convergence guarantee given by Proposition~\ref{prop:algo_conv} is in terms of the information gain.
That is, if $Q_k$ is the approximation of the projection after $k$ iterations, then under suitable conditions it holds that $D(P\|Q_k\rightsquigarrow \mathcal{C})\to 0$.
This is not enough if we want to use such an approximation for hypothesis testing: we need that $\nicefrac{p}{q_k}$ gets closer and closer to being an \E-statistic. 
The following theorem gives a condition under which this is true.
For $p\in(0,\infty]$ we use $\|f\|_p$ to denote the $\mathcal{L}^p(\Omega, P)$ norm of a function $f\in \mathcal{M}(\Omega, \mathbb{R}_{>0})$, i.e. $\left(\int_\Omega (f)^p \, \mathrm{d}P\right)^{\nicefrac1p}$.

\begin{theorem}\label{thm:approx}
Assume that $\inf_{Q\in\mathcal{C}} D(P\|Q\rightsquigarrow \mathcal{C})<\infty$, fix $Q,Q' \in \mathcal{C}$, set $\delta:=D(P \| Q\rightsquigarrow\mathcal{C})$ and suppose that there exists $\beta\in (0,\infty]$ such that  $\|\nicefrac{q'}{q}\|_{1+\beta}<\infty$.
If $\beta\leq 1$ or $D(P\|Q'\rightsquigarrow \mathcal{C})\leq K\delta$, then it holds that 
\begin{equation}\label{eq:rateofconv}
    \int_\Omega \frac{p}{q} \, \mathrm{d}Q' = \int_\Omega \frac{q'}{q}\, \mathrm{d}P= 1+O\left( C_{\beta} \cdot \delta^{\frac{\beta}{1+\beta}}\right) \text{ as } \delta \to 0,  
\end{equation}
where $C_{\beta} = \|\nicefrac{q'}{q}\|_{1+\beta}$ if $\beta \leq 1$ and $C_{\beta}= K^{\frac{\beta-1}{2(1+ \beta)}} \|\nicefrac{q'}{q}\|_{1+\beta}$ otherwise. 

\commentout{
\peterg{and suppose that for some $\beta \in [0,\infty]$, 
\begin{equation}\label{eq:theapproxcondition}
C_{q, q',\beta}:= \left( \int_\Omega \left(\frac{q'}{q}\right)^{2+\beta} \, \mathrm{d} P  \right)^{1/(2+\beta)} < \infty.
\end{equation}
where $C_{q,q',\infty}$ is defined as $\sup q'/q$.{\color{red} use essential supremum?}}
\commentout{let $C_{q, q'} > 0$ be such that 
\begin{equation}\label{eq:theapproxcondition}
\int_\Omega \left(\frac{q'}{q}\right)^2 \, \mathrm{d} P = C_{q, q'}.
\end{equation}}

\begin{equation}
    \int_\Omega \frac{p}{q} \, \mathrm{d}Q' =  \int_\Omega \frac{q'}{q}\, \mathrm{d}P \leq 1 + 2 \max \left\{ C_{q,q',0}\left(2 \delta\right)^{\nicefrac{1}{2}},2\delta \right\}.
\end{equation} 
Moreover, if $D(P \| Q'\rightsquigarrow\mathcal{C}) \leq K \delta$ for some $K\geq0$, then for any $\beta>0$ with 
\peterg{$C_{q,q',\beta}  < \infty$, then 
\begin{equation}\label{eq:rateofconvB}
    \int_\Omega \frac{p}{q} \, \mathrm{d}Q' =  \int_\Omega \frac{q'}{q}\, \mathrm{d}P \leq 1 
    + c_2 \cdot \delta^{\frac{1+\beta}{2+\beta}} + 2 \delta
    = 1 + O\left( \delta^{\frac{1+\beta}{2+\beta}} 
    \right),
\end{equation}
where
$$c_2 = 2 C_{q,q', \beta} \left (8 \max \{K, 1 \} \cdot \frac{2+\beta}{2 \beta} \right)^{\frac{\beta}{(4+ 2\beta)}} \cdot \left( \frac{2+\beta}{4} \right)^\frac{1}{2+ \beta}$$
with, for $\beta = \infty$, $c_2 = 4 C_{q,q',\infty}   
\sqrt{\max \{K, 1 \} }$.
}

} 
\end{theorem}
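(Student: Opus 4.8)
The plan is to reduce the statement to a one-sided estimate on $\int_\Omega(\nicefrac{q'}{q}-1)\,\mathrm{d}P$ and to extract it from the convexity of $\mathcal{C}$. The identity $\int_\Omega\nicefrac{p}{q}\,\mathrm{d}Q'=\int_\Omega\nicefrac{q'}{q}\,\mathrm{d}P$ is a change of variables (well-defined because $\|\nicefrac{q'}{q}\|_{1+\beta}<\infty$ forces $q,q'$ to be $P$-a.s.\ finite and positive), so with $R:=\nicefrac{q'}{q}-1\ge-1$ it suffices to control $\int_\Omega R\,\mathrm{d}P$. For $\lambda\in(0,1]$ the mixture $Q_\lambda:=(1-\lambda)Q+\lambda Q'$ lies in $\mathcal{C}$, and since all three measures are probabilities, $g(\lambda):=D(P\|Q\rightsquigarrow Q_\lambda)=\int_\Omega\ln(1+\lambda R)\,\mathrm{d}P\le\delta$; here $g$ is concave with $g(0)=0$. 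I would record two consequences: with $\gamma(x)=x-1-\ln x\ge0$ one has $\int_\Omega R\,\mathrm{d}P=g(1)+\int_\Omega\gamma(\nicefrac{q'}{q})\,\mathrm{d}P$, and the chain rule for the description gain (take the supremum over $Q''$ in $D(P\|Q\rightsquigarrow Q')+D(P\|Q'\rightsquigarrow Q'')=D(P\|Q\rightsquigarrow Q'')$) gives $\delta=g(1)+D(P\|Q'\rightsquigarrow\mathcal{C})$, so $g(1)=\delta-D(P\|Q'\rightsquigarrow\mathcal{C})\le\delta$.

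When $\beta\le1$, I would bound $\int_\Omega R\,\mathrm{d}P$ directly from $g(\lambda)\le\delta$ by a scalar optimisation. The elementary estimate $\ln(1+x)\ge x-c_\beta|x|^{1+\beta}$, valid for $x\ge-\lambda_0$ with $c_\beta$ uniform over $\lambda_0$ bounded away from $1$, gives after integrating against $P$ that $\lambda\int_\Omega R\,\mathrm{d}P\le\delta+c_\beta\lambda^{1+\beta}\|R\|_{1+\beta}^{1+\beta}$, hence $\int_\Omega R\,\mathrm{d}P\le\nicefrac{\delta}{\lambda}+c_\beta\lambda^{\beta}\|R\|_{1+\beta}^{1+\beta}$; minimising over $\lambda$ yields $\int_\Omega R\,\mathrm{d}P=O(\|R\|_{1+\beta}\delta^{\beta/(1+\beta)})$. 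Since $\|R\|_{1+\beta}\le1+\|\nicefrac{q'}{q}\|_{1+\beta}$ and $\|\nicefrac{q'}{q}\|_{1+\beta}\ge\int_\Omega\nicefrac{q'}{q}\,\mathrm{d}P$ by Jensen (so $\|\nicefrac{q'}{q}\|_{1+\beta}$ is bounded below for small $\delta$ by the bound just obtained), this is $O(\|\nicefrac{q'}{q}\|_{1+\beta}\delta^{\beta/(1+\beta)})$, the upper half of \eqref{eq:rateofconv}; the matching lower bound is the easy direction whenever $D(P\|Q'\rightsquigarrow\mathcal{C})$ is of order $\delta$, since then $\int_\Omega R\,\mathrm{d}P\ge g(1)=\delta-D(P\|Q'\rightsquigarrow\mathcal{C})=-O(\delta)$ and $\delta^{\beta/(1+\beta)}\ge\delta$ for small $\delta$.

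The case $\beta>1$ is the main obstacle, and here the hypothesis $D(P\|Q'\rightsquigarrow\mathcal{C})\le K\delta$ is essential: the second-order version of the argument above only delivers the rate $\delta^{1/2}$, whereas $\delta^{\beta/(1+\beta)}$ is now faster. From $g(1)=\delta-D(P\|Q'\rightsquigarrow\mathcal{C})\ge(1-K)\delta$ the lower bound of \eqref{eq:rateofconv} follows at once via $\int_\Omega R\,\mathrm{d}P=g(1)+\int_\Omega\gamma(\nicefrac{q'}{q})\,\mathrm{d}P\ge(1-K)\delta$. For the upper bound I would exploit that $m_P$ is small: a short computation gives $m_P^2(q,q')=g(1/2)-\tfrac12 g(1)$, which is $O(\delta)$ because $g(1/2)\le\delta$ and $g(1)\ge(1-K)\delta$; and the definition of $m_\gamma^2$ together with $1+\tfrac{R^2/4}{1+R}=\tfrac{(1+R/2)^2}{1+R}$ yields the sandwich $\tfrac12\int_\Omega\tfrac{R^2}{(2+R)^2}\,\mathrm{d}P\le m_P^2(q,q')\le\tfrac18\int_\Omega\tfrac{R^2}{1+R}\,\mathrm{d}P$, so $\int_\Omega\tfrac{R^2}{(2+R)^2}\,\mathrm{d}P=O(\delta)$, whence $P(|R|>t)=O(\delta t^{-2})$ and $\int_{\{|R|\le t\}}R^2\,\mathrm{d}P=O(\delta)$ for $t$ bounded away from $1$.

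It then remains to bound $\int_\Omega\gamma(\nicefrac{q'}{q})\,\mathrm{d}P$ by splitting at a threshold $t$: on $\{|R|\le t\}$ one has $\gamma(1+R)\lesssim R^2$, contributing $O(\delta)$; on $\{R>t\}$ one has $\gamma(1+R)\le1+R=\nicefrac{q'}{q}$, so Hölder against $\|\nicefrac{q'}{q}\|_{1+\beta}$ with the tail bound on $P(R>t)$ gives $O(\|\nicefrac{q'}{q}\|_{1+\beta}(\delta t^{-2})^{\beta/(1+\beta)})$; and on $\{-1<R<-t\}$—the region where $\gamma$ is unbounded because $\nicefrac{q'}{q}\to0$—one uses $\gamma(1+R)\le-\ln(1+R)=\ln(\nicefrac{q}{q'})$ and the bound $1+\tfrac{R^2/4}{1+R}\ge\tfrac14\nicefrac{q}{q'}$ valid there to reduce this piece again to $m_P^2(q,q')=O(\delta)$. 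Choosing $t$ to balance these three contributions produces the rate $\delta^{\beta/(1+\beta)}$, and tracking the $K$-dependence of $m_P^2(q,q')$ through the balance is what yields the constant $C_\beta=K^{(\beta-1)/(2(1+\beta))}\|\nicefrac{q'}{q}\|_{1+\beta}$. I expect the delicate steps to be precisely this threshold optimisation and the treatment of the small-ratio region, where one must rely on $m_P$—rather than any unweighted moment of $R$—being the quantity that the hypothesis keeps under control.
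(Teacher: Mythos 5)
Your proposal is correct in its essentials and reaches the stated rate by a genuinely different route. Both arguments start from the same convexity observation --- $Q_\lambda=(1-\lambda)Q+\lambda Q'\in\mathcal{C}$, hence $\int_\Omega\ln(1+\lambda R)\,\mathrm{d}P\le\delta$ for all $\lambda$ (your concave $g$ is exactly $\delta$ minus the paper's convex $g$, by the chain rule for description gain). For $\beta\le 1$ the paper first handles $\beta=1$ by a second-order Taylor expansion, upper-bounding $g''$ by a multiple of $\|\nicefrac{q'}{q}\|_2^2$, and then obtains $\beta<1$ by truncating $q'$ at a level $a$, applying the $\beta=1$ bound to the truncated part, adding a Markov bound for the tail, and optimizing over $a$; you instead get all of $\beta\le1$ in one step from the pointwise inequality $\ln(1+x)\ge x-c_\beta|x|^{1+\beta}$ and an optimization over the mixture weight $\lambda$, which is cleaner and unified. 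For $\beta>1$ the divergence is larger: the paper lower-bounds $g''(\alpha)$ uniformly in $\alpha$ using the moment condition and derives a contradiction from $g(1)\le K\delta$, whereas you use the (correct and rather elegant) identity $m_P^2(q,q')=g(\nicefrac{1}{2})-\tfrac12 g(1)=O((1+K)\delta)$, the sandwich $\tfrac12\int_\Omega R^2/(2+R)^2\,\mathrm{d}P\le m_P^2(q,q')\le\tfrac18\int_\Omega R^2/(1+R)\,\mathrm{d}P$, and a three-region split of $\int_\Omega\gamma(\nicefrac{q'}{q})\,\mathrm{d}P$. I checked the pointwise inequalities you rely on ($\gamma(1+R)\le 1+R$ on $\{R>t\}$, and $(1+R/2)^2\ge\nicefrac14$ on $\{-1<R<-t\}$) and they hold, so this does yield the rate $\delta^{\beta/(1+\beta)}$; in fact a constant threshold $t$ already suffices, since the two non-H\"older regions contribute only $O(K\delta)=O(\delta^{\beta/(1+\beta)})$.

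The one point where your argument falls short of the statement as written is the $K$-dependence of the constant for $\beta>1$. Your chain gives $P(R>t)=O(K\delta t^{-2})$, and H\"older then yields $O\left(\|\nicefrac{q'}{q}\|_{1+\beta}(K\delta)^{\beta/(1+\beta)}\right)$, i.e.\ the exponent $\beta/(1+\beta)$ on $K$, whereas the theorem claims $K^{(\beta-1)/(2(1+\beta))}$, which is strictly smaller for every $\beta>1$. Your closing assertion that tracking the $K$-dependence through the threshold balance recovers the stated constant is not substantiated: all three contributions in your split scale at least linearly in the bound $m_P^2(q,q')=O(K\delta)$, and $t$ must stay bounded away from $1$ for the small-$|R|$ region, so no choice of $t$ reduces the exponent of $K$ below $\beta/(1+\beta)$ along this route. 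The paper extracts its better constant from the lower bound on $g''$, which is where the $(\beta-1)$-type exponent originates. So you prove the theorem with the correct rate in $\delta$ but a weaker constant; if the explicit $C_\beta$ is regarded as part of the claim, that piece is missing. (A remark in your favour: like the paper --- which assumes $\epsilon>0$ without loss of generality --- you effectively prove only the upper bound $\le 1+O(\cdot)$, but you are more explicit than the paper about when the matching lower bound actually holds.)
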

Explicit values for the constants in~\eqref{eq:rateofconv} can be found in the proof in the appendix.
In particular, Theorem~\ref{thm:approx} implies the following: if there are $C,\delta_0 > 0$ such that $\|\nicefrac{q'}{q}\|_2 \leq C$ for all $Q' \in \mathcal{C}$ and all $Q\in \mathcal{C}$ with $D(P \|Q\rightsquigarrow\mathcal{C}) \leq \delta_0$,
then any sequence $Q_1, Q_2, \ldots$ with $D(P \| Q_k\rightsquigarrow\mathcal{C}) \rightarrow 0$ will have $\sup_{q'\in \mathcal{C}}\int_\Omega \nicefrac{q'}{q_k} \, \mathrm{d}P= 1 + O(\delta_k^{\nicefrac{1}{2}})$, where $\delta_k = D(P \| Q_k\rightsquigarrow\mathcal{C})$.
This gives an easy to check condition for the convergence of $\nicefrac{p}{q_k}$ to an \E-statistic. 
\commentout{\peterg{We note that in the limit for $\beta \downarrow 0$, the dependence of $c_2$ on $K$ in  (\ref{eq:rateofconvB}) disappears and the rate obtained by (\ref{eq:rateofconvB}) becomes equal to the rate (\ref{eq:rateofconv}) but with better constants. However, the proof of (\ref{eq:rateofconvB}) does not seem to extend to the limiting case $\beta=0$. 
We do conjecture though that the constant factors in~\eqref{eq:rateofconv}  can be improved to (at least) those of (\ref{eq:rateofconvB}), and the latter constants may be improvable as well} --- our aim here was to give a result with a simple proof that provides a convergence rate. }
This square-root rate of convergence cannot be improved in general without an extra assumption, even if all likelihood ratios are bounded, i.e. $\|\nicefrac{q'}{q}\|_\infty < \infty$.
This can be seen by taking $P$ and $Q$ to be Bernoulli distributions with parameter $\nicefrac{1}{2}$ and $\nicefrac{1}{2} + \epsilon$ respectively, $\mathcal{C}$ the set of Bernoulli distributions with parameters in $[\nicefrac{1}{4},\nicefrac{3}{4}]$  and $Q'$ Bernoulli $\nicefrac{1}{4}$. 
Then $\delta = D(P \| Q\rightsquigarrow\mathcal{C}) = 2 \epsilon^2 (1+ o(1))$ yet $\int_\Omega \nicefrac{q'}{q}\, \mathrm{d}P =  1 + 4 \epsilon (1 + o(1))$.
But if likelihood ratios are bounded and we additionally consider $Q'$ in a `neighborhood' of $Q$ (i.e. $D(P \| Q'\rightsquigarrow\mathcal{C}) \leq K \delta$), 
then a linear rate is possible as shown in Theorem~\ref{thm:approx} by letting $\beta$ tend to infinity; the rate then interpolates between $\delta^{1/2}$ and $\delta$ depending on the largest $\beta$ for which the $(1+\beta)$-th moment exists.
Furthermore the following example shows that in general bounds on the integrated likelihood ratios are necessary for the convergence to hold at all. 

\begin{example}\label{ex:notApproxEvariable}
Let $\mathcal{Q}$ represent the family of geometric distributions on $\Omega = \mathbb{N}_0$ and let $\mathcal{C}=\mathrm{conv}(\mathcal{Q})$. 
The elements of $\mathcal{Q}$ are denoted by $Q_\theta$ with density $q_{\theta}(n) = \theta^n (1-\theta)$, where $\theta\in [0,1)$ denotes the probability of failure. 
For simplicity, assume that $P \in \mathcal{Q}$ so that the reverse information projection of $P$ on $\mathcal{C}$ is equal to $P$. 
Take for example $P= Q_{\nicefrac{1}{2}}$, then for any $\theta,\theta'\in [0,1)$
\begin{align}\label{eq:geometric}
\int_\Omega \frac{q_{\theta'}}{q_{\theta}} \, \mathrm{d}P
&= \sum_{n=0}^{\infty} \left(\frac{1}{2} \frac{\theta'}{\theta}\right)^n 
\frac{1}{2} \frac{1-\theta'}{1- \theta} 
 = \begin{cases}
\frac{1}{1- \frac{\theta'}{2 \theta}} \cdot \frac{1}{2} \frac{1-\theta'}{1- \theta}, & \text{\ if\ } \theta' < 2 \theta; \\
\infty, & \text{\ otherwise;} \end{cases}
    \end{align}
whereas
\begin{align*}
D(P \| Q_{\theta}) 
& 
= \sum_{n=0}^{\infty} \left(\frac{1}{2}\right)^{n+1}
\left(- n \log (2 \theta)  - \log 2 (1-\theta)  \right)= 
\log \frac{\nicefrac{1}{2}}{\theta} \cdot  \sum_{n=1}^{\infty} n  \left(\frac{1}{2}\right)^{n+1}   + \log \frac{\nicefrac{1}{2}}{1- \theta}  \cdot \sum_{n=0}^{\infty} \left(\frac{1}{2}\right)^{n+1} 
\\ & = 
\log \frac{\nicefrac{1}{2}}{1- \theta} + \log \frac{\nicefrac{1}{2}}{\theta},
\end{align*}
Now consider a sequence $\nicefrac{1}{3} < \theta_1 < \theta_2 < \theta_3 \ldots $ that converges to $\nicefrac{1}{2}$. 
Then by the above, 
\[D(P\|Q_{\theta_i}) \to 0= D(P\|\mathcal{C}).\] 
We also see that for all $i$ and all $\theta'\in [2\theta_i, 1)$, we have 
\[
\int_\Omega \frac{q_{\theta'}}{q_{\theta_i}} \mathrm{d}P=\infty,
\]
i.e.\ for all $i$ we have $\sup_{\theta'\in [0,1)}\int_\Omega \nicefrac{q_{\theta'}}{q_{\theta_i}} \mathrm{d}P=\infty$.
\end{example}

\subsection{Related Work}
The results on the existence of optimal \E-statistics displayed in this section bear similarities with work concurrently done by Zhang et al.~\cite{zhang2023exact}.
In particular, they show that if $\mathcal{C}$ is a convex polytope, then there exists an \E-statistic in the form of a likelihood ratio between two unspecified measures. 
Since a convex polytope contains the uniform mixture of its vertices, which can be shown to have finite information gain, this also follows from our Proposition~\ref{prop:convhull}.
However, the techniques used to prove their results appear to be of a completely different nature than the ones used in this paper, as they rely mostly on classical results in convex geometry together with results on optimal transport (and with these techniques, they provide various other results incomparable to ours).

In the case of compact alternative they furthermore discuss a property which they refer to as nontrivial \E-power.
That is, if the alternative is a convex polytope $\mathcal{A}$, then at least one of their \E-statistics in the form of a likelihood ratio satisfies $\inf_{P\in \mathcal{A}} \int_\Omega \ln E \, \mathrm{d}P > 0$.
We now show that the existence of such an \E-statistic also follows from our results. In fact, if
$\mathcal{A}$ is any convex set (not just a polytope) such that $\inf_{P\in \mathcal{A}} D(P\|\mathcal{C})<\infty$, then (as \cite{zhang2023exact} point out) a similar result is already implied by Gr\"unwald et al.~\cite{grunwald2024} as long as the infimum is achieved on the left. 
Indeed, they show that the likelihood ratio of the distribution that achieves the infimum and its RIPr is an \E-statistic that has nontrivial \E-power.
This leaves the case that $\inf_{P\in \mathcal{A}} D(P\|\mathcal{C})=\infty$. Indeed, the current work implies that also in this case, an \E-statistic with nontrivial (in fact, infinite) \E-power exists, as long as $\mathcal{A}$ is a convex polytope.
That is, if we use $P^*$ to denote the uniform mixture of the vertices of $\mathcal{A}$, then for any vertex $P\in\mathcal{C}$, we have that
\[ \int_\Omega \ln\frac{\mathrm{d}P^*}{\mathrm{d}\hat Q^*}\, \mathrm{d}P \geq \int_\Omega \ln\frac{\frac1n\mathrm{d}P}{\mathrm{d}\hat Q^*}\, \mathrm{d}P \geq D(P\|\mathcal{C})-\ln(n),  \]
where $\hat Q^*$ denotes the RIPr of $P^*$. 
It follows that $\inf_{P\in\mathcal{A}}\int_\Omega \ln\frac{\mathrm{d}P^*}{\mathrm{d}\hat Q^*}\, \mathrm{d}P=\infty$, so that the \E-statistic given by the likelihood ratio of $P^*$ to its RIPr has ``nontrivial \E-power''.  
However, more work is needed to determine whether such constructions are in any way optimal and whether the restriction that $\mathcal{A}$ is a convex polytope can be relaxed.

Second, after the first version of this paper was made available online, a follow-up paper appeared by Larsson et al.~\cite{larsson2024numeraire}.
They show that, under no conditions on $P$ and $\mathcal{C}$ whatsoever, there exists an \E-statistic $E^*$ that is the strongest \E-statistic in the sense of Definition~\ref{def:ordering}.
This \E-statistic, which they call `the numeraire', gives rise to a measure $Q^*$ such that $\nicefrac{\mathrm{d}Q^*}{\mathrm{d}P}=\nicefrac{1}{E^*}$.
Whenever the conditions of Theorem~\ref{thm:InfoGain} hold, $Q^*$ coincides with the reverse information projection of $P$ on $\mathcal{C}$, so that it provides (in their words) ``[...] a natural definition of the RIPr in the absence of any assumptions on $\mathcal{C}$ or $P$.'' We refer to their work~\cite{larsson2024numeraire} for all further details. 

\section{Summary and Future Work}\label{sec:summary}
We have shown that, under very mild conditions, there exists a measure that achieves the minimax description gain over a convex set of measures $\mathcal{C}$ relative to a measure $P$.
Whenever the information divergence between $P$ and $\mathcal{C}$ is finite, this measure coincides with the reverse information projection of $P$ on $\mathcal{C}$.
As such, it provides a natural extension of the reverse information projection to cases where the the minimax description gain is finite, while the information divergence is infinite.
In the context of hypothesis testing, this extended notion of the RIPr can be used to define an \E-statistic for testing the simple alternative $P$ against the composite null $\mathcal{C}$. 
This \E-statistic is optimal in a sense that is a natural, but novel extension of the previously known GRO optimality criterion for \E-statistics.
We have shown an example where GRO is unable to differentiate between \E-statistics, while our novel criterion can, so that it is a strict extension. 
Additionally, we discussed an algorithm that can be used to approximate the reverse information projection in scenarios where it is not explicitly computable and show under what circumstances this also leads to an approximation of the optimal \E-statistic. 

The results presented thus far suggest various avenues for further research of which we  discuss two. 
First, Theorem \ref{thm:InfoGain} is formulated for general measures so one may ask for an interpretation of the RIPr in the case that $P$ and $\mathcal{C}$ are not probability measures. 
If $\Omega$ is finite and $\lambda$ is a measure on $\Omega$, 
then we may define a probability measure $Po(\lambda)$ as the product measure
$Po(\lambda)=\bigotimes_{\omega\in\Omega}Po\left(\lambda(\omega)\right)$,
where $Po\left(\lambda(\omega)\right)$ denotes the Poisson distribution with mean $\lambda(\omega)$. 
With this definition we get
\[D(P\|Q\rightsquigarrow Q')=D(Po(P)\|Po(Q)\rightsquigarrow Po(Q')).\]
Furthermore, it can be shown that if the RIPr $\hat Q$ of $P$ on $\mathcal{C}$ exists and is an element of $\mathcal{C}$, then $Po(\hat Q)$ is also the RIPr of $Po(P)$ on the convex hull of $\mathcal{C}':=\{Po(Q)|Q\in \mathcal{C}\}$. 
Consequently, $\nicefrac{Po(P)}{Po(\hat Q)}$ can be thought of as an \E-statistic for $\mathcal{C}'$.
More work is needed to determine whether this interpretation has any applications and if it can be generalized to arbitrary $\Omega$.

Second, even if $D(P\|\mathcal{C})=\infty$, the R\'enyi divergence 
$D_{\alpha}\left(P\|Q\right)$ (see e.g. \cite{Erven2014}) may be a well-defined non-negative real number for $\alpha\in(0,1)$ and $Q\in\mathcal{C}$. 
These R\'enyi divergences are jointly convex in $P$ and $Q$ \cite{Erven2014} and for 
each $0<\alpha<1$ one may define a reverse R\'enyi projection $\hat{Q}_{\alpha}$ of $P$ on $\mathcal{C}$ \cite{Kumar2016}. 
If it exists, it can be shown that this distribution will satisfy
\[
\int_{\Omega}\left(\frac{\mathrm{d}P}{\mathrm{d}\hat{Q}_{\alpha}}\right)^{\alpha}\,\mathrm{d}Q \leq 1
\]
for all $Q\in\mathcal{C}$, i.e. $\left(\nicefrac{\mathrm{d}P}{\mathrm{d}\hat{Q}_{\alpha}}\right)^{\alpha}$ is an \E-statistic.
We conjecture that the projections $\hat{Q}_{\alpha}$ will converge to the RIPr for $\alpha$ tending to 1, which might lead to further applications. 
%


\newpage

\bibliographystyle{IEEEtran}
\bibliography{proper_bib,master,peter,database1,phd_basic}


%

\newpage
\,
\appendices


\section{Proofs}\label{app:proofs}
\subsection{Proofs for Section~\ref{sec:universal_ripr}}\label{app:li_proof}
Before giving the intended results, we note that we introduced $m_P$ as the averaged Bregman divergence associated with $\gamma(x)=x-1-\ln(x)$. 
For the proof, it will be useful to also define the Bregman divergence associated with $\gamma(x)=x-1-\ln(x)$ itself, which is the so-called Itakura-Saito divergence.
For $f, g\in \mathcal{M}\left(\Omega,\mathbb{R}_{>0}\right)$, it is given by
\[IS_P(f, g)=\int_\Omega \left(\frac{f}{g} -1 - \ln \frac{f}{g}\right) \, \mathrm{d}P.\]
By definition, it holds that 
\[
m_P^2(f, g)=\frac12 IS\left(f, \frac{f+g}{2}\right)+\frac12 IS\left(g, \frac{f+g}{2}\right).
\] 
Furthermore, for $Q\in \mathcal{C}$, we have $IS_P(q, p)=D(P\|Q)$.
We now state some auxiliary results before giving the proofs for Section~\ref{sec:universal_ripr}.

\begin{lemma}\label{lem:gamma_transform}
For $x,y\in \mathbb{R}_{>0}$, we have
\[\left|\ln(x)-\ln(y)\right|=g(m_\gamma^2(x,y)),\] 
where $g$ denotes the function
\[g(t)=2t+2\ln\left(1+\left(1-\exp\left(-2t\right)\right)^{\nicefrac{1}{2}}\right).\]
The function $g$ is concave and satisfies $g(t)\geq 2t$.
\end{lemma}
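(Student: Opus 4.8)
The plan is to evaluate $m_\gamma^2(x,y)$ in closed form and then simply invert the resulting relation. Substituting $\gamma(x)=x-1-\ln x$ into $m_\gamma^2(x,y)=\tfrac12\gamma(x)+\tfrac12\gamma(y)-\gamma(\tfrac{x+y}{2})$, the linear terms and the constants cancel and one is left with
\[
m_\gamma^2(x,y)=\ln\frac{x+y}{2}-\tfrac12\ln x-\tfrac12\ln y=\tfrac12\ln\frac{(x+y)^2}{4xy}.
\]
This depends on $(x,y)$ only through the ratio $r:=x/y$, and both $m_\gamma^2(x,y)$ and $|\ln x-\ln y|$ are invariant under $x\leftrightarrow y$, i.e.\ under $r\mapsto 1/r$. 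Hence I may assume $r\ge 1$ and set $s:=\ln r=|\ln x-\ln y|\ge0$; then
\[
\frac{(x+y)^2}{4xy}=\frac{(r+1)^2}{4r}=\left(\frac{\sqrt r+1/\sqrt r}{2}\right)^2=\cosh^2(s/2),
\]
so, writing $t:=m_\gamma^2(x,y)$, one gets the clean relation $e^{t}=\cosh(s/2)$.

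It then remains to solve $e^t=\cosh(s/2)$ for $s\ge0$. Viewing this as a quadratic in $e^{s/2}$ yields $e^{s/2}=e^{t}+\sqrt{e^{2t}-1}$ (the other root lies below $1$ and is discarded since $s\ge0$), whence
\[
|\ln x-\ln y|=s=2\ln\!\bigl(e^{t}+\sqrt{e^{2t}-1}\bigr)=2t+2\ln\!\bigl(1+\sqrt{1-e^{-2t}}\bigr)=g(t),
\]
which is the claimed identity; the degenerate case $x=y$ corresponds to $t=s=0$ and is consistent with $g(0)=0$.

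Finally I would verify the two stated properties of $g$. The inequality $g(t)\ge 2t$ is immediate, since $\sqrt{1-e^{-2t}}\ge0$ makes the logarithmic term nonnegative for $t\ge0$. For concavity, differentiating $g(t)=2\ln\bigl(e^t+\sqrt{e^{2t}-1}\bigr)$ gives $g'(t)=\dfrac{2e^{t}}{\sqrt{e^{2t}-1}}=\dfrac{2}{\sqrt{1-e^{-2t}}}$, which is nonincreasing in $t$ because its denominator increases; hence $g''\le 0$. (Equivalently, $g$ is the inverse of the increasing, strictly convex map $s\mapsto\ln\cosh(s/2)$ on $[0,\infty)$ — whose second derivative equals $\tfrac{1}{4\cosh^2(s/2)}>0$ — and the inverse of an increasing convex function is concave.) All the computations are elementary; the only points requiring a little care are selecting the correct root when inverting $\cosh$ and checking the boundary case $x=y$.
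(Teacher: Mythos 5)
Your proof is correct and follows essentially the same route as the paper's: both reduce $m_\gamma^2(x,y)$ to the closed form $\tfrac12\ln\frac{(x+y)^2}{4xy}$, pass to the ratio $x/y$, and invert by solving a quadratic (your $\cosh$ substitution is just a cleaner dressing of the paper's direct solve for $w=x/y$), arriving at the same $g$ and the same $g'(t)=2(1-e^{-2t})^{-1/2}$. The only cosmetic difference is that you deduce concavity from monotonicity of $g'$ (plus the nice parenthetical inverse-of-a-convex-function remark) where the paper computes $g''$ explicitly.
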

\begin{proof}
Let $m=\frac{x+y}{2}$. 
Our goal is to determine the function $g$ function such that
\[\left|\ln(x)-\ln(y)\right|=g(m^2_\gamma(x,y)).\]
We first rewrite the right-hand side
\begin{align*}
g(m^2_\gamma(x, y)) & =g\left(\ln\left(m\right)-\frac{1}{2}\ln\left(x\right)-\frac{1}{2}\ln\left(y\right)\right)\\
& =g\left(\frac{1}{2}\ln\left(\frac{m^{2}}{x\cdot y}\right)\right)\\
& =g\left(\frac{1}{2}\ln\left(\frac{\left(\frac{m}{y}\right)^{2}}{\frac{x}{y}}\right)\right)\\
& =g\left(\frac{1}{2}\ln\left(\frac{\left(\frac{1+\frac{x}{y}}{2}\right)^{2}}{\frac{x}{y}}\right)\right).
\end{align*}
Plugging this back in and replacing $\frac{x}{y}$ by $w$ leads to
\begin{equation*}
    \left|\ln\left(w\right)\right|  =g\left(\frac{1}{2}\ln\left(\frac{\left(\frac{1+w}{2}\right)^{2}}{w}\right)\right)
\end{equation*}
Then we solve the equation
\begin{equation*}
    \frac{1}{2}\ln\left(\frac{\left(\frac{1+w}{2}\right)^{2}}{w}\right)  =t,
\end{equation*}
which gives
\begin{equation*}
    w =2\exp\left(2t\right)-1+2\cdot\left(\exp\left(4t\right)-\exp\left(2t\right)\right)^{\nicefrac{1}{2}}
\end{equation*}
\begin{align*}
g\left(t\right) & =\ln\left(2\exp\left(2t\right)-1+2\cdot\left(\exp\left(4t\right)-\exp\left(2t\right)\right)^{\nicefrac{1}{2}}\right)\\
 & =2t+\ln\left(2-\exp\left(-2t\right)+2\cdot\left(1-\exp\left(-2t\right)\right)^{\nicefrac{1}{2}}\right)\\
 & =2t+2\ln\left(1+\left(1-\exp\left(-2t\right)\right)^{\nicefrac{1}{2}}\right).
\end{align*}

The derivatives of $g$ are
\begin{align*}
    g'(t)&= 2+2\frac{\left(1-\exp\left(-2t\right)\right)^{\nicefrac{-1}{2}}\exp\left(-2t\right)}{1+\left(1-\exp\left(-2t\right)\right)^{\nicefrac{1}{2}}}=\frac{2}{(1-\exp(-2t))^{1/2}}\\
    g''(t)&=\frac{-\exp\left(-\nicefrac{t}{2}\right)}{2^{\nicefrac{1}{2}}\left(\sinh{t}\right)^{\nicefrac{3}{2}}}.
\end{align*}
We see that $g''(t)<0$ and conclude that $g$ is concave. Finally, we have 
\begin{align*}
g\left(t\right) & =2t+2\ln\left(1+\left(1-\exp\left(-2t\right)\right)^{\nicefrac{1}{2}}\right)\geq 2t,
\end{align*}
because $1-\exp\left(-2t\right)\geq 0$.
\end{proof}

\begin{lemma}\label{lem:aux2}
Let $(f_n)_{n\in \mathbb{N}}$ be a sequence of elements of $\mathcal{M}(\Omega, \mathbb{R}_{>0})$, then
\[\limsup_{m,n\to \infty} m_P(f_m,f_n) = 0 \Leftrightarrow \limsup_{m,n\to \infty}  \int_\Omega \left|\ln \left(\frac{f_m}{f_n}\right) \right|\, \mathrm{d}P= 0.\]
\end{lemma}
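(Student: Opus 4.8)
The plan is to reduce the claim to a pointwise identity plus Jensen's inequality, using the function $g$ supplied by Lemma~\ref{lem:gamma_transform}. The first observation is that $m_P^2$ is exactly the $P$-integral of the pointwise divergence appearing in that lemma: a direct computation shows $m_\gamma^2(x,y)=\frac12\ln\frac{(x+y)/2}{x}+\frac12\ln\frac{(x+y)/2}{y}$, which is nonnegative by concavity of the logarithm, and comparing with \eqref{eq:jens_div} gives $m_P^2(f,f')=\int_\Omega m_\gamma^2\bigl(f(\omega),f'(\omega)\bigr)\,\mathrm{d}P(\omega)$ for all $f,f'\in\mathcal{M}(\Omega,\mathbb{R}_{>0})$. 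If $P(\Omega)=0$ there is nothing to prove, so I assume $P(\Omega)>0$ and pass to the probability measure $\bar P:=P/P(\Omega)$.

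Next comes the core estimate. By Lemma~\ref{lem:gamma_transform} we have the pointwise identity $\bigl|\ln\frac{f_m(\omega)}{f_n(\omega)}\bigr|=g\bigl(m_\gamma^2(f_m(\omega),f_n(\omega))\bigr)$, where $g$ is concave, satisfies $g(t)\geq 2t\geq 0$ on $[0,\infty)$, and has $g(0)=0$. Integrating the identity against $\bar P$ and applying Jensen's inequality to the concave $g$ yields
\[
\frac{1}{P(\Omega)}\int_\Omega\Bigl|\ln\frac{f_m}{f_n}\Bigr|\,\mathrm{d}P
=\int_\Omega g\bigl(m_\gamma^2(f_m,f_n)\bigr)\,\mathrm{d}\bar P
\leq g\!\left(\int_\Omega m_\gamma^2(f_m,f_n)\,\mathrm{d}\bar P\right)
= g\!\left(\frac{m_P^2(f_m,f_n)}{P(\Omega)}\right).
\]

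Finally, the hypothesis forces $m_P(f_m,f_n)\to 0$, hence $m_P^2(f_m,f_n)/P(\Omega)\to 0$, as $m,n\to\infty$; since $g$ is continuous at $0$ with $g(0)=0$, the right-hand side above tends to $0$, and so does the nonnegative quantity on the left, which is exactly the asserted conclusion $\limsup_{m,n\to\infty}\int_\Omega|\ln(f_m/f_n)|\,\mathrm{d}P=0$. The one point that needs care is that $\int_\Omega m_\gamma^2(f_m,f_n)\,\mathrm{d}P$ is only known a priori to be finite for $m,n$ large (where it equals $m_P^2(f_m,f_n)$, which the hypothesis makes small), so Jensen's inequality is invoked only for such $m,n$; this is harmless since both $\limsup$s in the statement depend only on large $m,n$. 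I expect this finiteness bookkeeping to be the only subtlety — everything else is the identity of Lemma~\ref{lem:gamma_transform} together with concavity and continuity of $g$.
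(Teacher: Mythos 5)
Your proof is correct and follows essentially the same route as the paper's: the pointwise identity from Lemma~\ref{lem:gamma_transform}, Jensen's inequality applied to the concave $g$, and continuity of $g$ at $0$. Your extra care in normalizing $P$ to a probability measure before invoking Jensen (and in noting the finiteness caveat) is a minor refinement that the paper's terser argument glosses over, but it does not change the substance.
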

\begin{proof}
By Lemma~\ref{lem:gamma_transform}, we have for $m,n\in \mathbb{N}$,
\begin{align*}
    m_P^2(f_n,f_m)&=\int_\Omega m_\gamma^2(f_n,f_m) \, \mathrm{d}P\\
    &\leq \frac12 \int_\Omega \left|\ln\left(\frac{f_m}{f_n}\right)\right|\, \mathrm{d}P,
\end{align*}
as well as
\begin{align*}
    \int_\Omega \left| \ln \left(\frac{f_n}{f_m}\right) \right|\, \mathrm{d}P
    &= \int_{\Omega} g(m^2_\gamma(f_n , f_m)) \,\mathrm{d}P\\
    &\leq g\left(\int_{\Omega} m^2_\gamma(f_n , f_m) \,\mathrm{d}P\right)\\
    &= g\left(m^2_P\left(f_n , f_m\right)\right).
\end{align*}
The result then follows by continuity of $g$. 
\end{proof}

\begin{lemma}\label{lem:3pointNy}
For $Q_1 ,Q_2\in \mathcal{C}$ such that $P\ll Q_i$ for $i\in \{1,2\}$, we have
\[
m_P^2(q_1 , q_2) \leq \frac{D(P\|Q_1\rightsquigarrow \mathcal{C})+D(P\|Q_2\rightsquigarrow \mathcal{C})}{2}.
\] 
\end{lemma}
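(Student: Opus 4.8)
The plan is to realize the midpoint measure $\overline{Q} := \tfrac12(Q_1 + Q_2)$ as an element of $\mathcal{C}$ and to recognize $m_P^2(q_1,q_2)$ as the average of the two description gains obtained by moving from $Q_1$, respectively $Q_2$, to $\overline{Q}$. By $\sigma$-convexity of $\mathcal{C}$ (in particular closure under finite mixtures) we have $\overline{Q}\in\mathcal{C}$, and its density is $\overline{q} = \tfrac12(q_1+q_2)$, which is exactly the quantity $\overline f$ appearing in the definition \eqref{eq:jens_div} of $m_P$ with $f = q_1$, $f' = q_2$.

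First I would check that everything in sight is well-defined. Since $\overline q \ge \tfrac12 q_i$ we have $\ln(\overline q / q_i) \ge -\ln 2$ pointwise for $i \in \{1,2\}$, so each integral $\int_\Omega \ln(\overline q / q_i)\,\mathrm{d}P$ exists as an element of $(-\infty,\infty]$; similarly the integrand defining $m_P^2(q_1,q_2)$ is bounded below, so that quantity is well-defined (and nonnegative, being the square of a metric). Moreover, $P \ll Q_i$ together with $\overline q \ge \tfrac12 q_i$ forces $P \ll \overline Q$, so $D(P\|Q_i\rightsquigarrow\overline Q)$ is a well-defined element of $(-\infty,\infty]$ via \eqref{eq:descr_gain}.

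Next I would simply unfold \eqref{eq:descr_gain}:
\[
D(P\|Q_i\rightsquigarrow\overline Q) = \int_\Omega \ln\frac{\overline q}{q_i}\,\mathrm{d}P - \bigl(\overline Q(\Omega) - Q_i(\Omega)\bigr), \qquad i\in\{1,2\},
\]
and add the two equations. Because $\overline Q(\Omega) = \tfrac12\bigl(Q_1(\Omega)+Q_2(\Omega)\bigr)$, the correction terms cancel, $\bigl(\overline Q(\Omega)-Q_1(\Omega)\bigr) + \bigl(\overline Q(\Omega)-Q_2(\Omega)\bigr) = 0$, leaving
\[
D(P\|Q_1\rightsquigarrow\overline Q) + D(P\|Q_2\rightsquigarrow\overline Q) = \int_\Omega \ln\frac{\overline q}{q_1} + \ln\frac{\overline q}{q_2}\,\mathrm{d}P = 2\,m_P^2(q_1,q_2).
\]
Finally, since $\overline Q\in\mathcal{C}$ and $D(P\|Q_i\rightsquigarrow\overline Q)$ is well-defined, it is one of the terms in the supremum defining $D(P\|Q_i\rightsquigarrow\mathcal{C})$, hence $D(P\|Q_i\rightsquigarrow\overline Q) \le D(P\|Q_i\rightsquigarrow\mathcal{C})$ for $i\in\{1,2\}$; substituting into the last display and dividing by $2$ yields the claim.

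I expect no genuine obstacle: the algebraic heart of the argument — the cancellation of the $\Omega$-mass correction terms for the midpoint measure — is immediate once $\overline Q$ is introduced, and the only care needed is the (routine) bookkeeping that the possibly-infinite integrals are well-defined and that $P\ll\overline Q$, both of which follow at once from the pointwise bound $\overline q \ge \tfrac12 q_i$.
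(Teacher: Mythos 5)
Your proposal is correct and follows essentially the same route as the paper: introduce the midpoint $\overline{Q}\in\mathcal{C}$, bound each supremum $D(P\|Q_i\rightsquigarrow\mathcal{C})$ from below by $D(P\|Q_i\rightsquigarrow\overline{Q})$, and observe that the average of these two description gains equals $m_P^2(q_1,q_2)$. The paper states this last identity without comment, whereas you verify it explicitly (cancellation of the $\Omega$-mass correction terms and well-definedness via $\overline q\geq\tfrac12 q_i$), which is a welcome addition but not a different argument.
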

\begin{proof}
    Let $\bar{Q}$ denote the midpoint between $Q_1$ and $Q_2$. Then we have
    \begin{align*}
    \frac{D(P\|Q_1\rightsquigarrow \mathcal{C})+D(P\|Q_2\rightsquigarrow \mathcal{C})}{2} 
    &= \frac{\sup_{Q\in\mathcal{C}} D(P\|Q_1\rightsquigarrow Q)+\sup_{Q\in\mathcal{C}} D(P\|Q_2\rightsquigarrow Q)}{2}\\
    &\geq \frac{D(P\|Q_1\rightsquigarrow \bar{Q})+D(P\|Q_2\rightsquigarrow \bar{Q})}{2}
    =m_P^2(q_1 , q_2).
    \end{align*}   
\end{proof}

\begin{proof}[Proof of Proposition~\ref{prop:complete}]
This follows as a direct corollary of Lemma~\ref{lem:aux2}.
\end{proof}

We now deviate slightly from the order of the results in Section~\ref{sec:universal_ripr} and first state the proof of Proposition~\ref{prop:equvalence}, so that we can use its results in the proof of Theorem~\ref{thm:InfoGain}.

\begin{proof}[Proof of Proposition~\ref{prop:equvalence}]
    The implications $(3)\to(2)\to(1)$ are obvious, so we show here only the implication $(1)\to(3)$.
    Assume that $P'$ is a measure such that $-\infty <D(P\|P'\rightsquigarrow\mathcal{C})<\infty$. Then there exists a sequence of measures $Q_n\in \mathcal{C}$ such that 
    \[
    D(P\|P'\rightsquigarrow Q_n)\to D(P\|P'\rightsquigarrow\mathcal{C})
    \] 
    for $n\to\infty$. Without loss of generality we may assume that $-\infty <D(P\|P'\rightsquigarrow Q_n)<\infty$ for all $n$. The result follows because 
    \[
    D(P\|P'\rightsquigarrow\mathcal{C})=D(P\|P'\rightsquigarrow Q_n)+D(P\|Q_n\rightsquigarrow\mathcal{C})
    \]
    and all involved quantities are finite.
\end{proof}

\begin{proof}[Proof of Theorem~\ref{thm:InfoGain}~(\ref{item:conv_mpny})]
Let $(Q_n)_{n\in \mathcal{C}}$ denote a sequence in $\mathcal{C}$ such that 
\[
\lim_{n\to \infty} D(P\|Q_n\rightsquigarrow \mathcal{C})=\inf_{Q\in \mathcal{C}} D(P\|Q\rightsquigarrow \mathcal{C})=0,
\] 
where the last equality follows from Proposition~\ref{prop:equvalence}.
Without loss of generality, we may assume that $D(P\|Q_n\rightsquigarrow \mathcal{C})<\infty$ for all $n$, so that $P\ll Q_n$ for all $n$. 
It then follows from Lemma~\ref{lem:3pointNy} that for $m,n\in \mathbb{N}$ we have
$$
m_P^2(q_m, q_n)\leq
 \frac{D(P\|Q_m\rightsquigarrow \mathcal{C})+D(P\|Q_n\rightsquigarrow \mathcal{C})}{2}.
$$
It follows that $(q_n)_{n\in \mathbb{N}}$ is a Cauchy sequence with respect to $m_P$, so that $(q_n)_{n\in \mathbb{N}}$ converges to some function $\hat q$ in $m_P$.
The latter follows from the completeness of $(\mathcal{M}\left(\Omega,(0,\infty)\right), m_P)$, i.e.\ Proposition~\ref{prop:complete}.

Furthermore, suppose that $(Q'_n)_{n\in \mathcal{C}}$ is another sequence in $\mathcal{C}$ such that 
\[\lim_{n\to \infty} D(P\|Q'_n\rightsquigarrow \mathcal{C})=0.\]
Then, by the same reasoning as before, $Q_1 ,Q'_1 ,Q_2 ,Q'_2 ,Q_3 ,Q'_3 ,\dots$ is also a Cauchy sequence that converges and since a Cauchy sequence can only converge to a single element this implies the desired uniqueness.
\end{proof}

\begin{proof}[Proof of Theorem~\ref{thm:InfoGain} (\ref{item:conv_lrny})]
The equality 
\[
\int_\Omega \ln\frac{p'}{\hat q}\,\mathrm{d}P=\lim_{n\to \infty} \int_\Omega  \ln \frac{p'}{q_n} \,\mathrm{d}P
\]
follows from Theorem~\ref{thm:InfoGain}~(\ref{item:conv_mpny}) together with the fact that convergence of $q_n$ in $m_P$ implies convergence of the logarithms in $L_1(P)$.
\end{proof}

\begin{proof}[Proof of Theorem~\ref{thm:InfoGain} (\ref{item:conv_evalny})]
Let $(Q_n)_{n\in \mathcal{C}}$ denote a sequence in $\mathcal{C}$ such that 
\[
\lim_{n\to \infty}D(P\|Q_n\rightsquigarrow \mathcal{C})=0.
\]
 Without loss of generality, we may assume that $D(P\|Q_n\rightsquigarrow \mathcal{C})<\infty$ for all $n$ and that $q_n$ converges to $\hat q$  $P$-almost surely. 
The latter is valid, because convergence in $m_P$ implies convergence of the logarithms in $L_1(P)$ by Lemma~\ref{lem:aux2}, which gives the existence of an almost surely converging sub-sequence.

Let $\tilde{Q}=(1-t)Q_1 +tQ$  for fixed $Q\in \mathcal{C}$ and fixed $0<t<1$. Let $Q_{n, s}$ denote the convex combination $Q_{n,s}=(1-s_n)Q_n+s_n \tilde{Q}$ and $s_n \in [0,1]$. 
By Theorem~\ref{thm:InfoGain}~(\ref{item:conv_mpny}), we know that there exists some $\hat Q$ such that $q_n \to \hat q$ in $m_P$. 

Since $Q_{n,s}\in\mathcal{C}$ by convexity, we have that $D(P\|Q_n\rightsquigarrow Q_{n,s})\leq D(P\|Q_n\rightsquigarrow \mathcal{C})$.
We also have
\begin{align*}
    D(P\|Q_n\rightsquigarrow Q_{n, s})&=
     s_nD(P\|Q_n\rightsquigarrow \tilde{Q})+s_nIS_P\left(\tilde{q}, q_{n, s}\right)+(1-s_n)IS_P\left(q_n, q_{n, s}\right)\\
    &\geq s_n D(P\|Q_n\rightsquigarrow \tilde{Q})+s_nIS_P(\tilde{q}, q_{n, s}).
\end{align*}
Hence
\[
s_n D(P\|Q_n\rightsquigarrow \tilde{Q})+s_nIS_P(\tilde{q}, q_{n, s}) 
    \leq D(P\|Q_n\rightsquigarrow \mathcal{C}).
\]
Division by $s_n$ gives
\[
D(P\|Q_n\rightsquigarrow \tilde{Q}) +IS_P(\tilde{q}, q_{n, s})\leq \frac{D(P\|Q_n\rightsquigarrow \mathcal{C})}{s_n}.
\]
Choosing $s_n=D(P\|Q_n\rightsquigarrow \mathcal{C})^{\nicefrac{1}{2}}$, this gives
\begin{align*}
    D(P\|Q_n\rightsquigarrow \tilde{Q})+IS_P(\tilde{q}, q_{n, s}) &\leq s_n^{\nicefrac{1}{2}}.
\end{align*}
Then we get
\begin{align*}
    IS_P(\tilde{q}, q_{n, s}) &\leq D(P\|\tilde{Q}\rightsquigarrow Q_n)+s_n^{\nicefrac{1}{2}}.\\
    \int_\Omega \left(\frac{\tilde{q}}{q_{n, s}}+\ln \frac{q_{n, s}}{q_n}\right)\,\mathrm{d}P &\leq P(\Omega)+\tilde{Q}(\Omega)-Q_n(\Omega)+s_n^{\nicefrac{1}{2}}.
\end{align*}
Writing $q_n$ as $\frac{q_{n,s}-s_n \tilde{q}}{1-s_n}$, we see
\begin{align*}
    \ln \frac{q_{n, s}}{q_n}&= \ln \frac{q_{n, s}}{\frac{q_{n, s}-s_n\tilde{q}}{1-s_n}}\\
    &=\ln(1-s_n)-\ln \frac{q_{n, s}-s_n\tilde{q}}{q_{n, s}}\\
    &=\ln(1-s_n)-\ln\left(1-s_n\frac{\tilde{q}}{q_{n, s}} \right)\\
    &\geq \ln(1-s_n)+s_n\frac{\tilde{q}}{q_{n, s}}.
\end{align*}
Hence 
\[
    \ln(1-s_n)+(1+s_n)\int_\Omega \frac{\tilde{q}}{q_{n, s}} \, \mathrm{d}P
    \leq P(\Omega)+\tilde{Q}(\Omega)-Q_n(\Omega)+s_n^{\nicefrac{1}{2}}.
\]
As $\lim_{n\to \infty} s_n=0$, taking the limit inferior as $n\to \infty$ on both sides gives
\[
\liminf_{n\to \infty} \int_\Omega \frac{\tilde{q}}{q_{n, s}} \, \mathrm{d}P\leq P(\Omega)+\tilde{Q}(\Omega)-\liminf_{n\to \infty}Q_n(\Omega).
\]
An application of Fatou's lemma gives 
\[
\int_\Omega \frac{\mathrm{d}P}{\mathrm{d}\hat Q}  \, \mathrm{d}\tilde{Q}
\leq P(\Omega)+\tilde{Q}(\Omega)-\liminf_{n\to \infty}Q_n(\Omega).
\]
Since $\tilde{Q}=(1-t)Q_1 +tQ$ we get the inequality
\begin{align*}
\int_\Omega \frac{\mathrm{d}P}{\mathrm{d}\hat Q}  \, \mathrm{d}\left(\left(1-t\right)Q_1+t Q\right) &\leq P(\Omega)+(1-t)Q_1 (\Omega) +tQ(\Omega)-\liminf_{n\to \infty}Q_n(\Omega),\\
\left(1-t\right)\int_\Omega \frac{\mathrm{d}P}{\mathrm{d}\hat Q}  \, \mathrm{d}Q_1
+t\int_\Omega \frac{\mathrm{d}P}{\mathrm{d}\hat Q}  \, \mathrm{d}Q 
&\leq P(\Omega)+(1-t)Q_1 (\Omega) +tQ(\Omega)-\liminf_{n\to \infty}Q_n(\Omega).
\end{align*}
Finally we let $t$ tend to one and obtain the desired result.
\end{proof}

\begin{proof}[Proof of Proposition~\ref{prop:convhull}]
    Let $Q\in \mathcal{C}$ arbitrarily. Then there exists a sequence $(w_i)_{i=1}^n$ in $[0,1]$ with $\sum_i w_i=1$ such that $Q=\sum_{i=1}^n w_i Q_i$.
    It follows that
    \begin{align*}
        D\left(P\| \frac{1}{n}\sum_i Q_i \rightsquigarrow Q\right) &= \int_\Omega \ln \frac{\sum_i w_i Q_i}{\frac{1}{n}\sum_i Q_i}\, \mathrm{d}P\\
        &\leq \int_\Omega \ln \frac{\max_i w_i \sum_i Q_i}{\frac1n \sum_i Q_i}\, \mathrm{d}P\\
        &= \ln(n)+\ln(\max_i w_i)\leq \ln(n).
    \end{align*}
    The proposition follows by taking the supremum over $Q$ on both sides. 
\end{proof}

\begin{proof}[Proof of Proposition~\ref{prop:minimax}]
    Since $Q^*$ is the normalized maximum likelihood distribution we have $\sup_Q \sup_{\omega} \ln{\frac{\mathrm{d}Q}{\mathrm{d}Q^*}} <\infty.$ In particular 
    \begin{align*}
    \sup_{Q\in \mathcal{C}} D(P\|Q^*\rightsquigarrow Q)&=\sup_{Q\in\mathcal{C}}\int_\Omega\ln{\frac{\mathrm{d}Q}{\mathrm{d}Q^*}}\,\mathrm{d}P\\
    &\leq \sup_{Q\in\mathcal{C}}\sup_\omega \ln{\frac{\mathrm{d}Q}{\mathrm{d}Q^*}(\omega)}<\infty.
    \end{align*} 
\end{proof}

\begin{proof}[Proof of Proposition~\ref{prop:algo_conv}]
    We can write
    \[D(P\|Q_\theta \rightsquigarrow Q^*)=D(P\|Q_\theta \rightsquigarrow Q)+D(P\| Q\rightsquigarrow Q^*).\]
    By assumption all terms are finite so that minimising $D(P\|Q_\theta \rightsquigarrow Q^*)$ over $\theta$ must be equivalent to minimising $D(P\| Q_\theta \rightsquigarrow Q)$ over $\theta$. 
    The same argument holds for step $5$ in Algorithm~\ref{alg:ripr}. 
    The result then follows from \cite[Theorem 3.0.13]{brinda2018adaptive}. 
    While the algorithm described there works by choosing $\theta_k$ to minimise $\int_\Omega \log ((1-\alpha_k) q_{\theta_{k-1}}+\alpha_k q_\theta)\,\mathrm{d}P$, the proof relies on \cite[Lemma 5.9]{li1999Estimation}, which indeed uses minimisation of $D(P\|(1-\alpha_k) Q_{\theta_{k-1}}+\alpha_k Q_{\theta} \rightsquigarrow Q)$  as described here.
\end{proof}


\begin{proof}[Proof of Theorem \ref{thm:aftagende}]
For any $a\in\mathbb{R}$
we have 
\begin{equation}
    f_{0}\left(i\right)+a\cdot f_{1}\left(i\right)=f_{0}\left(i\right)\cdot\left(1+a\cdot\frac{f_{1}\left(i\right)}{f_{0}\left(i\right)}\right).
\end{equation}
 Since $\frac{f_{1}\left(i\right)}{f_{0}\left(i\right)}\to0$ for
$i\to\infty$ we have that $f_{0}\left(i\right)+a\cdot f_{1}\left(i\right)\geq0$
for $i$ sufficiently large. Therefore, we can apply Fatou's lemma
to the function and obtain 
\begin{align*}
\sum f_{0}\left(i\right)\cdot q^{*}\left(i\right)+a\cdot\sum f_{1}\left(i\right)\cdot q^{*}\left(i\right)
&=\sum\left(f_{0}\left(i\right)+a\cdot f_{1}\left(i\right)\right)\cdot q^{*}\left(i\right)\\
  &=\sum\liminf_{n\to\infty}\left(f_{0}\left(i\right)+a\cdot f_{1}\left(i\right)\right)\cdot q_{n}\left(i\right)\\
  &\leq\liminf_{n\to\infty}\sum_{i}\left(f_{0}\left(i\right)+a\cdot f_{1}\left(i\right)\right)\cdot q_{n}\left(i\right)\\
  &=\liminf_{n\to\infty}\left(\sum_{i}f_{0}\left(i\right)\cdot q_{n}\left(i\right)+a\cdot\sum_{i}f_{1}\left(i\right)\cdot q_{n}\left(i\right)\right)\\
  &=\liminf_{n\to\infty}\left(\lambda_{0}+a\cdot\lambda_{1}\right)
  =\lambda_{0}+a\cdot\lambda_{1}.
\end{align*} 
Hence 
\begin{equation}
    a\cdot\left(\sum f_{1}\left(i\right)\cdot q^{*}\left(i\right)-\lambda_{1}\right)\leq\lambda_{0}-\sum f_{0}\left(i\right)\cdot q^{*}\left(i\right).
\end{equation}
 This inequality should hold for all $a\in\mathbb{R}$, which is only
possible if 
\begin{align*}
\sum f_{1}\left(i\right)\cdot q^{*}\left(i\right)-\lambda_{1} & =0.\\
\sum f_{1}\left(i\right)\cdot q^{*}\left(i\right) & =\lambda_{1}.
\end{align*}
\end{proof}

\subsection{Proofs for Section~\ref{sec:ordering}}

\begin{proof}[Proof of Proposition \ref{prop:unique}]
    Assume that $E_1 ,E_2 ,E_3 ,\dots$ is a sequence of \E-variables such that 
    \begin{equation*}
        \int_{\Omega} \ln\left(\frac{E_n}{E'}\right) \, \mathrm{d}P
        \to\sup_E \int_{\Omega} \ln\left(\frac{E}{E'}\right) \, \mathrm{d}P
    \end{equation*}
    for $n\to\infty$. Then $E_{n,m}=\left(E_m +E_n\right)/2$ are also \E-variables and by convexity
    \begin{equation*}
        \int_{\Omega} \ln\left(\frac{E_{m,n}}{E'}\right) \, \mathrm{d}P
        \to\sup_E \int_{\Omega} \ln\left(\frac{E}{E'}\right) \, \mathrm{d}P\, ,
    \end{equation*}
    which implies that $m_{\gamma}^2\left(E_m ,E_n\right)\to 0$ for 
    $m,n\to\infty$. By completeness $E_n$ converges to some \E-variable 
    $E_{\infty}$. Using Lemma \ref{lem:aux2} we see that
    $m_{\gamma}\left(E_n ,E_{\infty}\right)\to 0$ implies that 
    \begin{align*}
        \int_{\Omega} \ln\left(\frac{E_{m}}{E'}\right) \, \mathrm{d}P
        &\to \int_{\Omega} \ln\left(\frac{E_\infty}{E'}\right) \, \mathrm{d}P
    \intertext{so that} 
        \sup_E \int_{\Omega} \ln\left(\frac{E}{E'}\right) \, \mathrm{d}P
        &=\int_{\Omega} \ln\left(\frac{E_{\infty}}{E'}\right) \, \mathrm{d}P \, .
    \intertext{Hence}
        \sup_E \int_{\Omega} \ln\left(\frac{E}{E_{\infty}}\right) \, \mathrm{d}P
        &=0
    \end{align*}
    Therefore $E_{\infty}$ is a strongest \E-statistic.
    
    Assume that both $E_1$ and $E_2$ are strongest \E-variables. Then they are both stronger than the average $\bar{E}=\left(E_1+E_2\right)/2$. Hence
\begin{equation*}
    0\leq m_{\gamma}^2 \left(E_1,E_2\right)
    =\frac{1}{2}\int\left(\ln\left(\frac{\bar{E}}{E_1}\right)+\ln\left(\frac{\bar{E}}{E_2}\right)\right) \, \mathrm{d}P
    \leq 0.
\end{equation*}
Therefore $E_1=E_2$ $P$-almost surely.
\end{proof}
\begin{proof}[Proof of Theorem \ref{thm:LR_optim}]
Firstly, since $\hat E>0$ holds $P$-almost surely, we have that $\hat E$ is stronger than any $E'\in \mathcal{E}_\mathcal{C}$ with $P(E'=0)>0$. 

Secondly, let $E\in \mathcal{E}_\mathcal{C}$ be an \E-statistic for which $E>0$ holds $P$-almost surely. 
Furthermore, let $Q_n$ be a sequence of measures in $\mathcal{C}$ such that $D(P\|Q_n\rightsquigarrow \mathcal{C})\to 0$.
We can define a sequence of sub-probability measures $R_n$ by $R_n(F)=\int_F E \, \mathrm{d} Q_n$, which satisfies $\nicefrac{\mathrm{d}R_n}{\mathrm{d} Q_n} = E$.
We see
\begin{align*}
    \int_\Omega \ln\left(\frac{\hat E}{E}\right)\, \mathrm{d}P 
    &= \int_\Omega \ln\left(\frac{\mathrm{d}Q_n}{ \mathrm{d}\hat Q }  \right)\,\mathrm{d}P +D(P\|R_n)+(P(\Omega)-R_n(\Omega))\\
    &\geq \int_\Omega \ln\left(\frac{\mathrm{d}Q_n}{ \mathrm{d}\hat Q }  \right)\,\mathrm{d}P.
\end{align*}
The last expression goes to zero as $n\to \infty$, so we see that $\hat E$ is stronger than $E$.
\end{proof}

\begin{proof}[Proof of Proposition~\ref{prop:estat_implies_ripr}]
Using the fact that $\ln(x)\leq x-1$ for $x>0$, we see
    \[D(P\|Q^*\rightsquigarrow Q)
    =\int_\Omega \ln \frac{\mathrm{d}Q}{\mathrm{d}Q^*}\, \mathrm{d}P
    \leq \int_\Omega \left(\frac{\mathrm{d}Q}{\mathrm{d}Q^*}-1 \right)\, \mathrm{d}P
    = \int_\Omega \frac{\mathrm{d}P}{\mathrm{d}Q^*} \, \mathrm{d}Q -1
    \leq 0,\]
    where the last inequality follows from the fact that $\nicefrac{\mathrm{d}P}{\mathrm{d}Q^*}$ is an \E-statistic. 
\end{proof}

\begin{proof}[Proof of Theorem~\ref{thm:approx}]
Without loss of generality, assume that $\int_\Omega \nicefrac{q'}{q} \, \mathrm{d}P = 1 + \epsilon$ for some $\epsilon >0$.
For the sake of brevity, we write $c_{\beta}:=\|\nicefrac{q'}{q}\|_{1+\beta}^{1+\beta} $. 
We now define a function $g:[0,1]\to \mathbb{R}_{\geq 0}$ as
\[g(\alpha):=D\left(P\|(1-\alpha)Q+\alpha Q'\rightsquigarrow \mathcal{C}\right).\] 
Notice that $g(0)=\delta$ and $g(\alpha)\geq 0$, since $(1-\alpha)Q+\alpha Q'\in\mathcal{C}$.
This function and its derivatives will guide the rest of the proofs, and we now list some properties that we will need: 
\begin{align}
 g'(\alpha) &:= \frac{ \mathrm{d}}{\mathrm{d}\alpha} g (\alpha) = \int_\Omega \frac{q-q'}{(1-\alpha) q + \alpha q'}\, \mathrm{d}P,
 \intertext{so that}
g'(0) &= \int_\Omega \left(1 -\frac{q'}{q} \right)\, \mathrm{d}P = -\epsilon, \\
g''(\alpha) &:=\frac{\mathrm{d}^2}{\mathrm{d} \alpha^2} 
g (\alpha) = \int_\Omega \left(\frac{q'- q}{(1-\alpha) q + \alpha q'}\right)^2 \, \mathrm{d}P,  \label{eq:secondderivative} \\ 
\intertext{so that} g''(0) & = \int_\Omega \left(1 -\frac{q'}{q} \right)^2 \, \mathrm{d}P =  1 - 2(1+\epsilon) + c_1 \nonumber \intertext{and}
0 \leq g''(\alpha)&\leq \frac{1}{(1-\alpha)^2} g''(0). \label{eq:secondderbound}
\end{align}
We now prove (\ref{eq:rateofconv}).  
We start with the case $\beta =1$ and will use the result for $\beta=1$ to prove the case for $\beta < 1$. The proof for 
the case $\beta > 1$ comes later; it requires a completely different proof.

Case {\em $\beta = 1$}. The general idea is simple: at $\alpha=0$ the function $g(\alpha)$ is equal to $\delta$ and has derivative $-\epsilon$. Its second derivative is positive and bounded by constant times $g''(0)\leq c_1$ for all $\alpha \leq \nicefrac12$. Thus, if $\epsilon$ is larger then a certain threshold, $g(\alpha)$ will become negative at some $\alpha \leq \nicefrac12$, but this is not possible since $g$ is a description gain and we would arrive at a contradiction. The details to follow simply amount to calculating the threshold as a function of $\delta$.


By Taylor's theorem, we have for any $\alpha \in [0,\nicefrac{1}{2}]$  that 
\begin{align*}
g(\alpha) &= g(0)+ g'(0)\alpha +  \max_{0 \leq \alpha^{\circ} \leq \alpha}\frac{g''(\alpha^{\circ})}{2} \alpha^2\\
&\leq  g(0)+  g'(0)\alpha + 2g''(0)\alpha^2 \\
&\leq \delta - \epsilon \alpha + 2\alpha^2 c_1,
\end{align*} 
where we use the properties derived above.
This final expression has a minimum in $\alpha^*=\min\{ \nicefrac{\epsilon}{4c_1},\nicefrac12 \}$. 
By non-negativity of $g$, we know that $\delta-\epsilon\alpha^*+2\alpha^{*2}c_1\geq 0$.
This gives $\epsilon \leq (8c_1\delta)^{\nicefrac12}$ in the case that $\alpha^*=\nicefrac{\epsilon}{4c_1}<\nicefrac12$, and $\epsilon \leq 2\delta+c_1$ otherwise.
In the latter case, it holds that $c_1<\nicefrac{\epsilon}{2}$, so the bound can be loosened slightly to find the simplification $\epsilon \leq 4\delta$. 
This concludes the proof for $\beta=1$, which we now use to prove Case $\beta<1$.

Case {\em $\beta < 1 $}. For any $a>0$, it holds that
\begin{equation}\label{eq:markovtype}
    \int_\Omega \frac{q'}{q}\, \mathrm{d}P= \int_\Omega \frac{q'}{q} \mathbf{1}_{\{\nicefrac{q'}{q}\leq a\}}\, \mathrm{d}P+\int_\Omega \frac{q'}{q} \mathbf{1}_{\{\nicefrac{q'}{q}> a\}}\, \mathrm{d}P.
\end{equation}
We write $q'':=q' \mathbf{1}_{\{\nicefrac{q'}{q}\leq a\}}$ and we will bound the first term on the right-hand side of~\eqref{eq:markovtype} using the proof above with $Q'$ replaced by $Q''$. 
Since $Q''$ is not necessarily an element of $\mathcal{C}$, we need to verify non-negativity, which follows because
for each $\alpha\in (0,1)$, we have that $D(P\|(1-\alpha)Q+\alpha Q''\rightsquigarrow \mathcal{C})\geq D(P\|(1-\alpha)Q+\alpha Q'\rightsquigarrow \mathcal{C})\geq 0$.
Furthermore, it holds that
\begin{align*}
    \left\|\frac{q''}{q}\right\|_2^2&=\int_\Omega \left(\frac{q''}{q}\right)^2 \, \mathrm{d}P \\
    &= \int_\Omega \left(\frac{q''}{q}\right)^{1+\beta} \left(\frac{q''}{q}\right)^{1-\beta} \, \mathrm{d}P\\
    &\leq  a^{1-\beta} c_{\beta}
\end{align*} 
The results above therefore give 
\[\int_\Omega \frac{q''}{q}\, \mathrm{d}P\leq 1+ \max\{(8a^{1-\beta} c_{\beta} \delta)^{\nicefrac12}, 2\delta\}.\] 
For the second term on the right-hand side of~\eqref{eq:markovtype}, we use a Markov-type bound, i.e.
\begin{align*}
    \int_\Omega \frac{q'}{q} \mathbf{1}_{\{\nicefrac{q'}{q}> a\}}\, \mathrm{d}P 
    &\leq \int_\Omega \frac{q'}{q}\left( \frac{\nicefrac{q'}{q}}{a}\right)^\beta \mathbf{1}_{\{\nicefrac{q'}{q}> a\}}\, \mathrm{d}P \\
    &\leq a^{-\beta} c_{\beta}.
\end{align*}  
Putting this together gives
\[ \int_\Omega \frac{q'}{q}\, \mathrm{d}P\leq 1+ \max\{(8a^{1-\beta} c_{\beta}\delta)^{\nicefrac12}, 4\delta\}+ a^{-\beta} c_{\beta}.\]
Since this holds for any $a$, we now pick it to minimize this bound. 
To this end, consider 
\begin{align*}
    \frac{\mathrm{d}}{\mathrm{d}a} (8a^{1-\beta} c_{\beta}\delta)^{\nicefrac12}+a^{-\beta}c_{\beta}&=\frac{(1-\beta)(8 c_{\beta}\delta)^{\nicefrac12}}{2}a^{-\nicefrac{(1+\beta)}{2}} -\beta a^{-(1+\beta)}c_{\beta}.
\end{align*}
Setting this to zero, we find
\begin{align*}
    a^*=\left(\frac{\beta c_{\beta}^{\nicefrac12}}{(1-\beta)(2\delta)^{\nicefrac12}}\right)^\frac{2}{1+\beta}.
\end{align*}
The proof is concluded by noting that
\begin{align*}
    (8{a^*}^{1-\beta}c_\beta\delta)^{\nicefrac12}
    &=\left(8 \left(\frac{\beta c_{\beta}^{\nicefrac12}}{(1-\beta)(2\delta)^{\nicefrac12}}\right)^{2\frac{1-\beta}{1+\beta}} c_\beta\delta\right)^{\nicefrac12}\\
    &=2c_\beta^{\nicefrac{1}{(\beta+1)}}(2\delta)^{\nicefrac{\beta}{(\beta+1)}}\left(\frac{\beta}{1-\beta}\right)^{\frac{1-\beta}{1+\beta}}
\intertext{and} 
    {a^*}^{-\beta}c_\beta
    &=\left(\frac{\beta c_{\beta}^{\nicefrac12}}{(1-\beta)(2\delta)^{\nicefrac12}}\right)^{\frac{-2\beta}{1+\beta}} c_\beta\\
    &=c_\beta^{\nicefrac{1}{(\beta+1)}} \left(\frac{\beta}{1-\beta}\right)^{\frac{-2\beta}{1+\beta}}(2\delta)^{\nicefrac{\beta}{(1+\beta)}}.
\end{align*}

Case {\em $\beta > 1$}.
We now prove the result for $\beta\in (1,\infty)$; the proof for $\beta=\infty$ follows by a minor modification of (\ref{eq:gga}).
If $\epsilon \leq 0$ there is nothing to prove, so without loss of generality we can write $\epsilon = \gamma \delta$ for some $\gamma > 0$; we will bound $\gamma$. 
Whereas the previous proof exploited the fact that the second derivative $g''(\alpha)$ was bounded above in terms of $\delta$ and hence `not too large', the proof below uses the condition that $c_\beta$ is finite to show first, (a), that $g''(\alpha)$ can also be bounded {\em below\/} in terms of $(\gamma, \delta)$. 
Therefore, if $\epsilon$ exceeds a certain threshold, as $\alpha$ moves away from the $\alpha^*$ at which $g(\alpha)$ achieves its minimum in the direction of the furthest boundary point (i.e. if $\alpha^* < \nicefrac12$, we consider $\alpha \uparrow 1$, if $\alpha^* \geq \nicefrac12$ we consider $\alpha \downarrow 0$), $g(\alpha)$ will become larger than $K\delta$ or $\delta$ respectively, and we arrive at a contradiction. (b) below gives the detailed calculation of this threshold. 

{\em Proof of (a).\/}
Fix some $0 \leq \tilde{\alpha} < 1$ (we will derive a bound for any such $\tilde{\alpha}$ and later optimize for $\tilde{\alpha}$; for a sub-optimal yet easier derivation take $\tilde{\alpha}=\nicefrac12$). By Taylor's theorem,  we have 
$0 \leq g(\tilde{\alpha}) = \delta - \tilde{\alpha} \epsilon + (\nicefrac{1}{2}) \tilde{\alpha}^2 g''(\alpha^{\circ})$ for some $0 \leq \alpha^{\circ} \leq \tilde{\alpha}$. Plugging in  $\epsilon = \gamma\delta$ we find that
$$
g''(\alpha^{\circ}) \geq \frac{2}{\tilde{\alpha}^2} (\tilde{\alpha} \gamma -1) \delta.
$$
This gives a lower bound on $g''(\alpha^{\circ})$ for {\em some\/} $\alpha^{\circ}$ in terms of $(\gamma,\delta)$. We now turn this into a weaker lower bound on {\em all\/} $\alpha$. 
First,  using (\ref{eq:secondderbound}) and then $\alpha^{\circ} \leq \tilde{\alpha}$ and then the above lower bound,  we find
\begin{align}
    g''(0) & \geq \max_{\alpha\in [0,\tilde{\alpha}]} (1- \alpha)^2 g''(\alpha) \geq (1-\alpha^{\circ})^2 g''(\alpha^{\circ}) \nonumber  \\ \label{eq:ggg}
&    
\geq (1- \tilde{\alpha})^2 g''(\alpha^{\circ})
    \geq 2 f_{\tilde{\alpha}} (\gamma,\delta),
    \end{align}
where $f_{\tilde{\alpha}}(\gamma,\delta) := ((1-\tilde{\alpha})/\tilde{\alpha})^2 (\tilde{\alpha} \gamma -1) \delta$ is a function that is linear in $\gamma$ and $\delta$. 
We have now lower bounded $g''(0)$ in terms of $\gamma,\delta$. We next show that, under our condition that $c_\beta < \infty$, this implies a (weaker) lower bound on $g''(\alpha)$ for all $\alpha$. For this, 
    fix any $C > 1$. We have for all $0 < \alpha \leq 1$:
\begin{align}
    g''(\alpha)  & \geq \int_\Omega {\mathbf 1}_{q' \leq Cq} \cdot 
\left(\frac{q'- q}{(1-\alpha) q + \alpha q'}\right)^2 \, \mathrm{d} P  \nonumber \\
    & \geq \int_\Omega {\mathbf 1}_{q' \leq Cq} \cdot
    \left(\frac{q'- q}{(1-\alpha) q + \alpha Cq}\right)^2 
    \, \mathrm{d}P \nonumber \\
  & = \int_\Omega {\mathbf 1}_{q' \leq Cq} \cdot
    \left(\frac{q'- q}{q}\right)^2 
    \, \mathrm{d}P \cdot \frac{1}{(1+\alpha (C-1))^2}  \nonumber  \\
 & \geq  \frac{1}{(1+(C-1))^2}  \left( g''(0) - \int_\Omega {\mathbf 1}_{q' > Cq} \cdot
    \left(\frac{q'}{q} - 1 \right)^2 
    \, \mathrm{d}P  \right) \nonumber  \\ \label{eq:gga}
    & \geq   \frac{1}{C^2}  \left(  2 f_{\tilde\alpha}(\gamma,\delta)  - C^{1-\beta} c_\beta
    \right),
\end{align}
where in the fourth line we used the definition of $g''(0)$, and in the fifth line we used (\ref{eq:ggg}) and a Markov-type bound on the integral, i.e. we used that $
\int_\Omega {\mathbf 1}_{q' > Cq} \cdot
    \left({q'}/{q} - 1 \right)^2 
    \, \mathrm{d}P $ is bounded by 
$$
\int_\Omega {\mathbf 1}_{q' > Cq} \cdot
    \left(\frac{q'}{q} \right)^2 
    \, \mathrm{d}P \leq \int_\Omega \left( \frac{\nicefrac{q'}{q}}{C} \right)^{\beta-1}  \cdot
    \left(\frac{q'}{q} \right)^2 \mathrm{d}P = C^{1-\beta} c_\beta.
$$
By differentiation we can determine the $C$ that maximizes the bound (\ref{eq:gga}). 
This gives $C^{1- \beta} = f_{\tilde\alpha}(\gamma,\delta) (\nicefrac{4}{c_\beta(1+\beta)})$.
and with this choice of $C$, (\ref{eq:gga})
becomes
\begin{equation}
    g''(\alpha) \geq 
f_{\tilde{\alpha}}(\gamma,\delta)^{(\beta+1)/(\beta-1)} c_\beta^{2/(1-\beta)} h(\beta) 
\end{equation}
where $h(\beta) = (4/(1+\beta))^{2/(\beta-1)} \cdot 2 (\beta-1)/(1+\beta)$.
We are now ready to continue to:

{\em Proof of (b)}.  Let $\alpha^* \in [0,1]$ be the point at which $g(\alpha)$ achieves its minimum. 
If $\alpha^* \leq \nicefrac12$, a second-order Taylor approximation of $g(1)$ around $\alpha^*$
gives that 
\begin{align*}
    K \delta &\geq g(1) \geq \frac12 (1-\alpha^*)^2 \min_{\alpha \in [\alpha^*,1]} g''(\alpha)\\
&\geq \frac{1}{8} f_{\tilde{\alpha}}(\gamma,\delta)^{(\beta+1)/(\beta-1)} c_\beta^{2/(1-\beta)} h(\beta), 
\end{align*}
so that after some manipulations 
\begin{equation}\label{eq:K3}
f_{\tilde\alpha}(\gamma,\delta)^{(1+ \beta)/(\beta-1)} \leq 
8 K' c_\beta^{2 /(\beta-1)} \cdot h(\beta)^{-1} \delta ,
\end{equation} 
with $K'= K$. If $\alpha^* > \nicefrac12$, we perform a completely analogous second-order Taylor approximation of $g(0)$ around $\alpha^*$,  which will then give (\ref{eq:K3}) again but with $K'$ replaced by $1$. We thus always have (\ref{eq:K3}) with $K'= \max \{K, 1 \}$.
Unpacking $f_{\tilde{\alpha}}$ in (\ref{eq:K3}) and rearranging gives:
$$
\gamma \leq \frac{\tilde\alpha}{(1-\tilde\alpha)^2 }
\cdot V
+ \frac{1}{\tilde\alpha} 
$$
with 
$$V= c_\beta^{2/(1+\beta)}  \cdot \left(\frac{8K'}{h(\beta)} \right)^{\frac{\beta-1}{1+ \beta}}  \delta^{\frac{-2}{1+\beta}}.$$
We now pick the $\tilde\alpha$ that makes both terms on the right equal, so that the right-hand side becomes equal to $\nicefrac{2}{\tilde{\alpha}}$. This is the solution to the equation $(\nicefrac{\tilde\alpha}{(1-\tilde\alpha)})^2 V= 1$ which must clearly be obtained for some $0 < \tilde\alpha < 1$, so this $\tilde\alpha$ satisfies our assumptions. Basic calculation gives
$$
\gamma \leq \frac{2}{\tilde\alpha} = 2 \cdot \left(V^{\nicefrac12} + 1 \right)
$$
and unpacking $V$ we obtain  
\begin{equation*}
\epsilon = \gamma \delta \leq  c^* \cdot \delta^{\frac{\beta}{1+\beta}} + 2 \delta. 
\end{equation*}
where 
$$
c^* = c_\beta^{1/(1+\beta)}  \cdot \left(\frac{8K'}{h(\beta)} \right)^{\frac{\beta-1}{2(1+ \beta)}} .
$$
Unpacking $h(\beta)$ gives the desired result.

\commentout{
For the second part, 
we only consider the case that $q'/q \leq c_1$  $P$-almost surely (the case with a bound on $q/q'$ follows by completely symmetric reasoning).  
A simple computation gives,
for all $0 \leq \alpha \leq 1$,  $(1-\alpha)q + \alpha q' \leq c q$.
From (\ref{eq:secondderivative}) and (\ref{eq:secondderbound}) we now see that 
\begin{align}
g''(\alpha) & \geq \frac{g''(0)}{c_1}   & \text{for all $0 \leq \alpha \leq 1$};  \label{eq:taylorA} \\ 
g''(\alpha) & \leq \frac{g''(0)}{(1-\alpha)^2}  & \text{for all $0 \leq \alpha \leq \frac{1}{2}$}. \label{eq:taylorB}
\end{align}
Further note that for any $0 < \alpha^* < 1$, a Taylor approximation around $\alpha^*$ gives 
$$g(1) \geq g(\alpha^*) + (1-\alpha^*)g'(\alpha^*)+ 
\frac{1}{2} \left(1- \alpha^* \right)^2 \min_{\alpha \in [\alpha^*,1]} g''(\alpha).$$
If $g'(\nicefrac{1}{2}) \geq 0$, we apply this with $\alpha^*= \nicefrac{1}{2}$; using $g(\alpha^*)\geq 0$  and (\ref{eq:taylorA}) we find 
\[
K \delta = g(1) \geq \frac{g''(0)}{8c_1}.
\]
If $g'(\nicefrac{1}{2}) \leq 0$, we perform analogous steps, Tayloring $g(0)$ instead of $g(1)$ around $\alpha^* = \nicefrac{1}{2}$, to find $\delta = g(0) \geq g''(0) c_1^{-1}/8$. Taken together this gives $g''(0) \leq 8 c_1 K' \delta$, with $K'= \max \{K,1 \}$ and hence $g''(\alpha) \leq  8 c_1 K'\delta/(1- \alpha)^{2}$ by (\ref{eq:taylorB}), for $0 \leq \alpha \leq \nicefrac{1}{2}$. 

If $g'(0) >0$ the desired result holds. So we may assume  $g'(0) = - c_2 \delta$ for some $c_2 > 0$. 
Then (again by Taylor), on $\alpha \in [0,\nicefrac{1}{2}]$, 
$$g'(\alpha) \leq - c_2 \delta + \alpha 
\max_{\alpha^{\circ} \in [0,\alpha]} g''(\alpha^{\circ})\leq -c_2 \delta + \frac{\alpha}{(1-\alpha)^{2}}  8 c_1 K'\delta,$$
and, yet again by Taylor, and using  $g(0) = \delta$ and $g(\alpha) \geq 0$ and $g'(\alpha)$ is increasing on $\alpha \in [0,\nicefrac{1}{2}]$, we find that  for $0 \leq \alpha \leq \nicefrac{1}{3}$:
\begin{multline*}
0 \leq g(\alpha)
\leq \delta + \alpha \max_{\alpha^{\circ} \in [0,\alpha]} g'(\alpha^{\circ}) \\ 
\leq  \delta + \alpha 
\left(-c_2 \delta + \frac{\alpha}{(1-\alpha)^{2}}  8 c_1 K'\delta\right),
\end{multline*}
which implies that
$c_2 \leq \frac{\alpha}{(1-\alpha)^2} 8 c_1 K' + \frac{1}{\alpha}$. 
Plugging in 
\[\alpha = 1/(1+ \left(8 c_1\right)^{\nicefrac{1}{2}})
\]
(chosen to make both terms equal, and allowed because $c_1 \geq 1$ and $K' \geq 1$ so that $\alpha < \nicefrac{1}{3}$)  we get $c_2 \leq  4 \left(2 c_1 K'\right)^{\nicefrac{1}{2}} + 2,$
and the desired result follows. }
\end{proof}

\section{RIPr Strict Sub-Probability Measure}\label{app:supp}
In this appendix, we discuss a general way to construct a measure $P$ and convex set of distributions $\mathcal{C}$ such that the reverse information projection of $P$ on $\mathcal{C}$ is a strict sub-probability measure.
For simplicity, we take $\Omega=\mathbb{N}$ and $\mathcal{F}=2^\mathbb{N}$, though the idea should easily translate to more general settings.

\begin{proposition}
    Let $g:\mathbb{N}\to \mathbb{R}_{>0}$ be a function, and let $\mathcal{C}$ denote the set of measures $\{Q: \sum_i g\left(i\right) q(i)\leq\nu\}$ for some $\nu >0$. Then for any $P$ that is not in $\mathcal{C}$ we have that $E\left(i\right)=\nicefrac{g\left(i\right)}{\nu} $ is the optimal \E-statistic.
\end{proposition}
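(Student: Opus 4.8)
The plan is to identify $E=g/\nu$ with the likelihood ratio of $P$ against its universal RIPr and then invoke Theorem~\ref{thm:LR_optim}. First I would check directly that $E\in\mathcal{E}_\mathcal{C}$: for every $Q\in\mathcal{C}$ one has $\int_\Omega E\,\mathrm{d}Q=\nu^{-1}\sum_i g(i)q(i)\le 1$ by the defining constraint of $\mathcal{C}$, and $E$ is finite-valued (indeed strictly positive) because $g$ is $\mathbb{R}_{>0}$-valued, so $E$ is a bona fide \E-statistic.

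Next I would exhibit the projection explicitly. Set $\hat q(i):=\nu\,p(i)/g(i)$ and let $\hat Q$ be the associated measure. Then $\hat q(i)>0$ whenever $p(i)>0$, so $P\ll\hat Q$ and $\mathrm{d}P/\mathrm{d}\hat Q=g/\nu=E$; moreover $\sum_i g(i)\hat q(i)=\nu\sum_i p(i)=\nu$, so $\hat Q$ satisfies the constraint defining $\mathcal{C}$. (In general $\hat Q$ is a strict sub-probability measure --- precisely the phenomenon this appendix is about; in the worked example $\hat Q=\tfrac12\delta_1$.) Since $\hat Q\in\mathcal{C}$ and $\mathrm{d}P/\mathrm{d}\hat Q\in\mathcal{E}_\mathcal{C}$, Proposition~\ref{prop:estat_implies_ripr} applies with $Q^*=\hat Q$ and gives $D(P\|\hat Q\rightsquigarrow\mathcal{C})=0$, i.e.\ $\hat Q$ is the universal RIPr of $P$ on $\mathcal{C}$. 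Theorem~\ref{thm:LR_optim} --- more precisely the argument in its proof, which uses only that $P$ is a probability measure, that $\hat E=\mathrm{d}P/\mathrm{d}\hat Q>0$ holds $P$-a.s., that the universal RIPr exists, and that $\int_\Omega E\,\mathrm{d}Q_n\le 1$ along a minimizing sequence $Q_n\in\mathcal{C}$ --- then shows that $E=\mathrm{d}P/\mathrm{d}\hat Q$ is the strongest \E-statistic and that it is the unique one up to $P$-null sets. The hypothesis $P\notin\mathcal{C}$ serves only to keep the statement non-degenerate: if $P\in\mathcal{C}$ then $P$ is its own RIPr and the optimal \E-statistic is the constant $1$.

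I expect the one genuinely delicate point to be the status of $\hat Q$ with respect to $\mathcal{C}$. Two things need care. First, $\hat Q$ must be a finite measure, i.e.\ $\sum_i p(i)/g(i)<\infty$; this holds automatically when $P$ has finite support (as in the example) or when $\inf_i g(i)>0$, but in general it is where the behaviour of $g$ at infinity --- the standing assumption $g(n)\to 0$ of the surrounding discussion, or some summability hypothesis --- must enter. Second, if one reads $\mathcal{C}$ strictly as a set of probability measures, then a strict sub-probability $\hat Q$ is not literally a member of $\mathcal{C}$ and Proposition~\ref{prop:estat_implies_ripr} cannot be quoted verbatim; the robust route is then to recognize $\hat Q$ as the universal RIPr by hand, by constructing an explicit minimizing sequence $Q_n\in\mathcal{C}$ with $q_n\to\hat q$ in $m_P$ and $D(P\|Q_n\rightsquigarrow\mathcal{C})\to 0$ --- obtained by normalizing $\hat Q$ to a probability measure and moving the residual mass $1-\hat Q(\Omega)$ to a point $j_n\to\infty$ with $g(j_n)$ small, so that the constraint is preserved while the added bump has vanishing effect on $m_P$ (this generalizes the sequence $Q_n=\tfrac{n-2}{2n-2}\delta_1+\tfrac{n}{2n-2}\delta_n$ of the example) --- and then appealing to Theorem~\ref{thm:InfoGain}. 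Verifying $D(P\|Q_n\rightsquigarrow\mathcal{C})\to 0$ and controlling the escaping-mass term is the part I would expect to cost the most effort.
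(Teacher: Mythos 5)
Your route is genuinely different from the paper's, and considerably heavier. The paper's proof is two lines and uses no projection machinery at all: since $\mathcal{C}$ is defined here as a set of \emph{measures} (not probability measures), each point mass $\frac{\nu}{g(i)}\delta_i$ belongs to $\mathcal{C}$ (these, together with the zero measure, are the extreme points of $\mathcal{C}$), so any $E\in\mathcal{E}_\mathcal{C}$ must satisfy $E(i)\cdot\nu/g(i)\leq 1$, i.e.\ $E\leq g/\nu$ pointwise; since $g/\nu$ is itself an \E-statistic by the defining constraint, it dominates every other \E-statistic pointwise and is therefore optimal in every reasonable sense (in particular strongest per Definition~\ref{def:ordering}, since $\ln\bigl((g/\nu)/E\bigr)\geq 0$ everywhere). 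Your first paragraph reproduces the easy half of this (that $g/\nu\in\mathcal{E}_\mathcal{C}$), but you obtain maximality by a detour through $\hat Q$, Proposition~\ref{prop:estat_implies_ripr} and Theorem~\ref{thm:LR_optim}.

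That detour has gaps which you flag but do not close. First, $\hat q=\nu p/g$ need not define a finite measure: the proposition assumes nothing like $\sum_i p(i)/g(i)<\infty$ (that summability is only imposed later in the appendix, for the sub-probability construction), and if $\hat Q(\Omega)=\infty$ then $\hat Q$ is not an admissible element of $\mathcal{C}$ and the argument collapses; the proposed repair via a minimizing sequence is only sketched and would in particular require establishing $\inf_{Q\in\mathcal{C}}D(P\|Q\rightsquigarrow \mathcal{C})<\infty$, which is not obvious in this generality. Second, Theorem~\ref{thm:LR_optim} and Proposition~\ref{prop:estat_implies_ripr} are stated (and proved) for the case where $P$ and all $Q\in\mathcal{C}$ are probability distributions, whereas this $\mathcal{C}$ contains the zero measure and measures of arbitrary total mass; you would need to re-derive both in the general-measure setting, tracking the extra $Q'(\Omega)-Q(\Omega)$ terms in the description gain. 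The extreme-point argument avoids all of this and also yields the stronger pointwise conclusion $E\leq g/\nu$ rather than mere optimality under the ordering of Definition~\ref{def:ordering}. (You are right, incidentally, that the hypothesis $P\notin\mathcal{C}$ plays no role in the domination argument; the paper's proof does not use it either.)
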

\begin{proof}
    The extreme points in $\mathcal{C}$ are the measure with total mass 0
    and measures of the form $\frac{\nu}{g\left(i\right)} \delta_i$, 
    i.e.\ measures concentrated in single points. An \E-statistic $E$ must satisfy 
    \[
    \sum_j E\left(j\right)\frac{\nu}{g\left(i\right)}\delta_i \left(j\right)\leq 1
    \]
    or, equivalently, 
    $E\left(i\right)\frac{\nu}{g\left(i\right)}\leq 1$. 
    Hence $E\leq \nicefrac{g}{\nu}$ so the optimal \E-statistic is $\nicefrac{g}{\nu}$.
\end{proof}

Let $g:\mathbb{N}\to \mathbb{R}_{>0}$ be any function that satisfies 
\[\lim_{n\to \infty} g(n)=0.\]
Furthermore, let $P$ denote a probability measure on the natural numbers such that 
\[\sum_i \frac{p(i)}{g(i)}=c\]
for some $c \in \mathbb{R}_{>0}$.
For $\nu \in (0,1/c)$ and let $\mathcal{C}_{\nu}$ denote the set of measures $\{Q: \sum_i g\left(i\right) q(i)\leq\nu\}$.
Note that we do not yet require all measures in $\mathcal{C}_{\nu}$ to be probability measures so that the set $\mathcal{C}_{\nu}$ is compact.
It follows that there exists a unique element of $\mathcal{C}$ that minimizes $\sum_i p(i) \ln \left( p(i)/q(i)\right)$.

The optimal \E-statistic is $E_{\nu}=\nicefrac{g}{\nu}$, and we may define the measure $\hat Q_\nu$ by 
\[
\hat q_{\nu}(i)=\frac{p(i)}{E_\nu(i)}=\nu p(i)/g(i),
\]
and we can check that $\hat Q_{\nu} \in \mathcal{C}$. Hence $\hat Q_{\nu}$ minimizes $\sum_i p(i) \ln \left( p(i)/q(i)\right)$.

This is a strict sub-probability measure:
\begin{align*}
\sum_i \hat q_{\nu}(i)&=\nu \sum_i  \frac{p(i)}{g(i)}\\
&=\nu c\\
&< 1,
\end{align*}
where we use that $\nu<\nicefrac{1}{c}$.

The next step is to prove that the information projection does not change if we restrict to the set of probability measures in $\mathcal{C}_{\nu^*}$, which we denote by $\tilde{\mathcal{C}}_{\nu^*}$.
To this end, note first that for $\nu < \nu^*$, we have that $\sum g\left(i\right) q_\nu({i})<\nu^*$, so that for all $\nu < \nu^*$ there exists $n_\nu\in \mathbb{N}$ such that the probability measure defined by
\[
    q_\nu(i)+\left(1-\sum_j  q_\nu(j)\right)\delta_{n_\nu}\left(i\right)
\]
is an element of $\tilde{\mathcal{C}}_{\nu^*}$.
Hence 
\begin{align*}
D(P\|\tilde{C}) &\leq D\left(P\left\Vert  Q_{\lambda}+\left(1-\sum_j q_\nu({i})\right)\delta_{n_\nu} \right.\right)\\
&=\sum_{i\in \mathbb{N}}p(i)\ln\left(\frac{p(i)}{Q_\nu({i})+\left(1-\sum_{j\in \mathbb{N}} q_\nu(j)\right)\delta_{n_\nu}\left(i\right)}\right)\\
  &=-p({n_\nu})\ln\left(\frac{p(n_\nu)}{q_\nu({n_\nu})}\right)+p(n_\nu)\ln\left(\frac{p(n_\nu)}{q_\nu({n_\nu})+1-\sum_{j\in \mathbb{N}} q_\nu({j})}\right)
  +\sum_{i=1}^{\infty}p({i})\ln\left(\frac{p(i)}{q_\nu({i})}\right).
 \end{align*}
The first term can be written as 
\begin{align*}
p({n_\nu})\ln\left(\frac{p(n_\nu)}{q_\nu({n_\nu})}\right)
 & =q_\nu({n_\nu})\frac{p(n_\nu)}{q_\nu({n_\nu})}
 \ln\left(\frac{p(n_\nu)}{q_\nu({n_\nu})}\right)\\
 & =q_\nu({n_\nu})\frac{g\left(n_\nu\right)}{\nu}\ln\left(\frac{g\left(n_\nu\right)}{\nu}\right)
\end{align*}
Then notice that for $\nu \to \nu^*$, we must have that $n_\nu \to \infty$.
Using that $c\ln\left(c\right)\to0$
for $c\to 0$ we see the first term tends to 0
 for $\nu\to \nu^*$. 
Similarly, the second term can be written as 
\begin{multline}
p(n_\nu)\ln\left(\frac{p(n_\nu)}{q_\nu({n_\nu})+1-\sum_{j\in \mathbb{N}} q_\nu({j})}\right)= \\ 
\nonumber
  \left(q_{\nu}(n_\nu)+1-\sum_{j\in \mathbb{N}} q_\nu(j)\right)
  \frac{p(n_\nu)}{q_\nu({n_\nu})+1-\sum_{j\in \mathbb{N}} q_\nu({j})}
  \times\ln\left(\frac{p(n_\nu)}{q_\nu({n_\nu})+1-\sum_{j\in \mathbb{N}} q_\nu({j})}\right).
\end{multline}
We also have 
 \[
 \frac{p(n_\nu)}{q_{\nu}(n_\nu)+1-\sum q_\nu(i)}\to 0
 \]
for $\nu\to \nu^*$ and using that $c\ln\left(c\right)\to0$
for $c\to0$ we get the second term tends to 0 for $\nu\to \nu^*$. 
 Therefore we see
\begin{align*}
     D(P\|\tilde{\mathcal{C}})&\leq \lim_{\nu \to \nu^*}D\left(P\left\Vert Q_{\nu}+\left(1-\sum_i q_\nu({i})\right)\delta_{n_\nu} \right.\right)\\
     &\leq
     \sum_i p(i)\ln\left(\frac{p(i)}{q_{\nu^*}({i})}\right)\\
     &=\inf_{Q\in \mathcal{C}} \sum_i p(i)\ln \left(\frac{p(i)}{q(i)} \right).
\end{align*}
The inequality trivially also holds the other way around, so we find that 
\[D(P\|\tilde{\mathcal{C}})=\inf_{Q\in \mathcal{C}} \sum_i p(i)\ln \left(\frac{p(i)}{q(i)} \right).\] 
It follows that $Q_{\nu^*}$ is a strict sub-probability measure, and at the same time it is the reverse information projection of $P$ onto $\tilde{\mathcal{C}}_{\nu^*}$.

\section{Convexity}\label{ap:convexity}
One of the main assumptions made throughout the main text is that the set of measures $\mathcal{C}$ is convex, i.e. closed under finite mixtures.
However, one can also consider stronger notions of convexity, such as $\sigma$-convexity and Choquet-convexity.
In this appendix, we investigate whether considering different levels of convexity can change the reverse information projection.
\begin{definition}
    A set $\mathcal{C}'$ of measures is said to be $\sigma$\emph{-convex} if $Q_1\, ,Q_2\, ,Q_3\dots\in\mathcal{C}'$ implies that $\sum_{i=1}^\infty w_i Q_i \in \mathcal{C}'$ when $w_i\geq 0$ and $\sum_{i=1}^\infty w_i =1.$ 
    The $\sigma$-convex hull of a set of measures $\mathcal{C}$, denoted by $\sigma$-$\mathrm{conv}(\mathcal{C})$, is the smallest $\sigma$-convex set containing $\mathcal{C}$. 
\end{definition}
In order to avoid topological complications we will restrict the discussion of Choquet-convexity to Polish spaces, i.e. spaces for which there exists a complete metric that generates the topology. 
That is, assume that $\Omega$ is a Polish space equipped with the Borel $\sigma$-algebra. 
Let $\Theta$ be another Polish space and let $\{Q_{\theta}:\theta\in\Theta\}$ denote a parameterized set of probability measures on $\Omega$ such that $\theta\to\int_{\Omega}f\,\mathrm{d}Q_{\theta}$ is Borel measurable for any measurable function $f:\Omega\to \mathbb{R}$. 
Then for any probability measure $\nu$ on $\Theta$ the \emph{Choquet-convex mixture} $\mu_{\nu}$ can be defined by
\begin{equation*}
    \int_{\Omega}f\,\mathrm{d}\mu_{\nu}=\int_{\Theta}\left(\int_{\Omega}f\,\mathrm{d}\mu_{\theta}\right)\,\mathrm{d}\nu,
\end{equation*}
for any measurable function $f:\Omega\to \mathbb{R}$.
\begin{definition}
    A set $\mathcal{C}'$ of measures is said to be \emph{Choquet-convex} if it is closed under Choquet convex mixtures.
    The Choquet-convex hull of a set of measures $\mathcal{C}$ is the smallest Choquet-convex set that contains $\mathcal{C}$.
\end{definition}
So far, we have assumed that all of the measures in $\mathcal{C}$ are finite.
However, a countable or Choquet convex mixture of finite measures may not be finite. 
It follows that our results on the existence of the RIPr might not be applicable to the $\sigma$-convex and Choquet-convex hull of $\mathcal{C}$.
We therefore assume for the remainder of this section that all involved measures are sub-probability measures, in which case this issue does not arise.
With all of this in place, it is relatively straightforward to construct examples where the RIPr of $P$ on a convex set does not exist, while the RIPr of $P$ on its $\sigma$-convex hull does exist.
\begin{example}\label{ex:ripr_coincides}
    Let $P$ denote a geometric distribution on $\mathbb{N}_0$ and let $\mathcal{C}$ denote the set of probability measures on $\mathbb{N}_0$ with finite support. Then $D(P\|Q\rightsquigarrow \mathcal{C})=-\infty$ for any $Q\in\mathcal{C}$. Therefore the reverse information projection of $P$ on $\mathcal{C}$ is not defined according to the definitions given in this paper. 
    However, the $\sigma$-convex hull of $\mathcal{C}$ consists of all probability measures on $\mathbb{N}_0$, which implies that the reverse information projection on the $\sigma$-convex hull is well-defined and equals $P$.
\end{example}
\commentout{
\begin{example}
    Let $P$ denote the uniform distribution on the interval $[0,1]$ and let $Q$ denote 
    the uniform distribution on the interval $[0,2]$. 
    Furthermore, let $\mathcal{C}$ denote the set 
    of countable mixtures of $Q$ and discrete distributions on the interval $[0,2]$. We note that 
    $\mathcal{C}$ is $\sigma$-convex but it is not Choquet-convex. If $R$ is a discrete 
    distribution on $[0,2]$ then 
    $D(P\| (1-\alpha)Q+\alpha R)=\nicefrac{1}{(1-\alpha)}D(P\| Q),$ which is finite whenever 
    $0\leq\alpha<1.$ We see that $Q$ is the reverse information projection of $P$ on 
    $\mathcal{C}$ in the sense that $Q$ minimizes the divergence. The derivative is 
    \begin{equation*}
        \frac{\mathrm{d}P}{\mathrm{d}Q}=\begin{cases}
            2,\, \text{for}\, x\in [0,1];\\
            0,\, \text{else.}
        \end{cases}
    \end{equation*}
    We note that this derivative is not an \E-statistic on $\mathcal{C}.$ Our theorems 
    do not apply because the elements of $\mathcal{C}$ are not absolutely continuous 
    with respect to a common $\sigma$-finite measure as we have assumed in the 
    preliminaries. 
\end{example}
}
However, as the following results show, if the RIPr of $P$ on $\mathcal{C}$ does exist, then it must coincide with the RIPr of $P$ on $\sigma\text{-}\mathrm{conv}(\mathcal{C})$.
\begin{lemma}\label{lem:descr_gain_monotone}
    Let $P$ and $Q$  be sub-probability measures and let $Q_1 ,Q_2 ,\dots$ be a sequence of sub-probability measures such that $D(P\| Q \rightsquigarrow Q_1 )>-\infty$, and let $w_1 ,w_2 ,w_3 ,\dots$ be a sequence of positive numbers with sum 1. Then
    \begin{equation*}
        D\left(P\left\| Q \rightsquigarrow \frac{\sum_{i=1}^{n}w_i \cdot Q_i}{\sum_{i=1}^{n}w_i}\right.\right)\to
        D\left(P\left\| Q \rightsquigarrow {\sum_{i=1}^{\infty}w_i \cdot Q_i}\right.\right)
    \end{equation*}
    for $n\to\infty$.
\end{lemma}
\begin{proof}
    Firstly, note that  
    \[\ln \frac{\mathrm{d} \sum_{i=1}^{n+1} w_i Q_i}{ \mathrm{d}Q}\geq \ln \frac{\mathrm{d} \sum_{i=1}^n w_i Q_i}{ \mathrm{d}Q} \]
    and
    \[\int_\Omega \ln \frac{\mathrm{d} \sum_{i=1}^n w_i Q_i}{ \mathrm{d}Q} \, \mathrm{d} P \geq \int_\Omega \ln \frac{\mathrm{d}w_1 Q_1}{\mathrm{d}Q}\, \mathrm{d}P= D(P\|Q\rightsquigarrow Q_1)+\ln w_1 + (Q_1(\Omega)-Q(\Omega))>-\infty .\]
    
    Since $\sum_{i=1}^n w_iq_i \to \sum_{i=1}^\infty w_i q_i$ pointwise, applying the monotone convergence theorem to the sequence 
    \[
    \left(\ln \frac{\mathrm{d} \sum_{i=1}^n w_i Q_i}{\mathrm{d}Q}-\ln \frac{\mathrm{d}w_1 Q_1}{\mathrm{d}Q}\right)_{n\in \mathbb{N}}
    \]
    gives that
    \[\int_\Omega \ln \frac{\mathrm{d} \sum_{i=1}^n w_i Q_i}{\mathrm{d}Q}-\ln \frac{\mathrm{d}w_1 Q_1}{\mathrm{d}Q} \, \mathrm{d} P \to \int_\Omega \ln \frac{\mathrm{d} \sum_{i=1}^\infty w_i Q_i}{\mathrm{d}Q}-\ln \frac{\mathrm{d}w_1 Q_1}{\mathrm{d}Q}\, \mathrm{d} P. \]
    We get
    \[
    \int_\Omega \ln \frac{\mathrm{d} \sum_{i=1}^n w_i Q_i}{\mathrm{d}Q} \,\mathrm{d}P  \to \int_\Omega \ln \frac{\mathrm{d} \sum_{i=1}^\infty w_i Q_i}{\mathrm{d}Q}\, \mathrm{d} P
    \]
for $n\to\infty .$
    Finally, we see that
    \begin{align*}
        D\left(P\left\Vert Q \rightsquigarrow \frac{\sum_{i=1}^{n}w_{i}\cdot Q_{i}}{\sum_{i=1}^{n}w_{i}} \right. \right) & = \int_\Omega \ln \frac{\mathrm{d} \sum_{i=1}^n w_i Q_i}{\mathrm{d}Q} \, \mathrm{d} P - (Q_n(\Omega)- Q(\Omega)) - \ln \sum_{i=1}^n w_i \\
        &\to \int_\Omega \ln \frac{\mathrm{d} \sum_{i=1}^\infty w_i Q_i}{\mathrm{d}Q}\, \mathrm{d} P - (Q_\infty (\Omega)-Q(\Omega))\\
        &= D(P\| Q\rightsquigarrow Q_\infty),
    \end{align*}
    where $Q_\infty := \sum_{i=1}^\infty w_i Q_i$ and we use that $\ln\sum_{i=1}^n w_i\to 0$ and $Q_n(\Omega)\to Q_\infty(\Omega)$. 
    To see the latter, note that
    \[ Q_n(\Omega)= \int_\Omega \frac{\sum_{i=1}^n q_i(\omega) w_i}{\sum_{i=1}^n w_i} \, \mathrm{d}\mu(\omega), \]
    and $0\leq \nicefrac{\sum_{i=1}^n q_i(\omega) w_i}{\sum_{i=1}^n w_i}\leq \nicefrac{q_\infty(\omega)}{w_1}$, where the RHS integrates, so that the desired convergence follows from the dominated convergence theorem.
\end{proof}

\begin{theorem}\label{thm:ripr_convexity}
    Let $P$ be a finite measure and $\mathcal{C}$ a convex set of sub-probability measures such that $D(P\|Q\rightsquigarrow \mathcal{C})=0$.
    If $Q_1,Q_2,\dots$ is a sequence of measures in $\mathcal{C}$ such that $D(P\|Q_n\rightsquigarrow \mathcal{C})\to 0$, then $D(P\|Q_n\rightsquigarrow \sigma\text{-}\mathrm{conv}(\mathcal{C}))\to 0$.
\end{theorem}

\begin{proof}
    Fix $Q^*\in \mathcal{C}$ such that $D(P\|Q^*\rightsquigarrow \mathcal{C})\leq \varepsilon$ and let $\bar{Q} = \sum_{i=1}^\infty w_i Q_i \in \sigma\text{-}\mathrm{conv}(\mathcal{C})$ arbitrarily. 
    Let $s\in (0,1)$ and consider $\tilde{Q} := s\cdot Q^*+(1-s)\cdot \bar{Q} = \sum_{i=0}^\infty \tilde{w}_i Q_i$, where $Q_0:= Q^*$, $\tilde{w}_0=s$ and $\tilde{w}_i = (1-s)\cdot w_i$ for $i=1,2,\dots$.
    Note that $D(P\|Q^*\rightsquigarrow Q_0)=0$, so it follows from Lemma~\ref{lem:descr_gain_monotone} 
    that 
    \[ \lim_{n\to \infty} D\left(P\left\|Q^*\rightsquigarrow \frac{\sum_{i=0}^n \tilde{w}_i Q_i}{\sum_{i=0}^n \tilde{w}_i} \right. \right) = D(P\|Q^*\rightsquigarrow \tilde{Q}).\]
    The left hand side is, by definition of $Q^*$, bounded by $\varepsilon$ since $\nicefrac{\sum_{i=0}^n \tilde{w}_i Q_i}{\sum_{i=0}^n \tilde{w}_i}\in \mathcal{C}$, so that we find $D(P\|Q^*\rightsquigarrow \tilde{Q})\leq \varepsilon$.
    Furthermore, by concavity of the log,
    \[ \varepsilon \geq D(P\|Q^* \rightsquigarrow \tilde{Q}) \geq s\cdot D(P\|Q^*\rightsquigarrow Q_0)+(1-s)\cdot D(P\|Q^*\rightsquigarrow \bar{Q})= (1-s)\cdot D(P\|Q^*\rightsquigarrow \bar{Q}).\]
    Taking the limit of $s\to 0$, we see $D(P\|Q^*\rightsquigarrow \bar{Q})\leq \varepsilon$.
    Finally, the result follows by taking the supremum over $\bar{Q}$.
\end{proof}

We conjecture that if $\mathcal{C}$ is a $\sigma$-convex set of sub-probability measures and $\mathcal{C}'$ is the Choquet-convex hull of $\mathcal{C}$ then $D(P\|Q\rightsquigarrow \mathcal{C})=D(P\|Q\rightsquigarrow \mathcal{C}')$ for any sub-probability measures $P$ and $Q$ such that $P,Q$, and the sub-probability measures in $\mathcal{C}$ all have densities with respect to a common $\sigma$-finite measure.

\end{document}